\documentclass[runningheads,a4page]{llncs}

\usepackage{amssymb}
\usepackage[cmex10]{amsmath}
\usepackage{graphicx}
\usepackage{amsthm}
\usepackage{ifpdf}
\usepackage{color}
\usepackage{algorithm,algorithmic}
\usepackage{fancyvrb}
\usepackage{multirow}
\usepackage{tikz}
\usepackage{stmaryrd}
\usetikzlibrary{arrows,automata}
\usepackage{array}

\newcommand{\restrict}{\upharpoonright}

\renewcommand{\leq}{\leqslant}
\renewcommand{\geq}{\geqslant}
\renewcommand{\emptyset}{\varnothing}
\renewcommand{\le}{\leq}
\renewcommand{\ge}{\geq}

\newcommand{\len}[1]{|#1|}
\newcommand{\states}{\mathcal{S}}
\newcommand{\tree}{T}
\newcommand{\treeNodes}{W}
\newcommand{\atree}[1][]{(\treeNodes_{#1},L_{#1},\func_{#1})}
\newcommand{\alltrees}{\mathbb{T}}
\newcommand{\concat}{\cdot}

\newcommand{\prefixTree}{\preceq}

\newcommand{\closure}{\mathit{cls}}
\newcommand{\powerset}{\mathcal{P}}

\newcommand{\dtmc}{\text{\sf D}}
\newcommand{\AP}{\mathit{AP}}
\newcommand{\dist}{\mathit{Dist}}

\newcommand{\func}{\mathit{P}}
\newcommand{\lastState}[1]{#1\!\!\downarrow}
\newcommand{\support}{\mathit{supp}}
\renewcommand{\P}[2]{\llbracket#2\rrbracket_{#1}}
\renewcommand{\P}[2]{[ #2 ]_{#1}}

\newcommand{\U}{\text{\sf U}}
\newcommand{\X}{\text{\sf X}}
\newcommand{\W}{\text{\sf W}}
\newcommand{\dirac}[1]{\delta_{#1}}

\newcommand{\finitePrefix}{\mathit{Pre}_{\mathit{fin}}}

\newcommand{\Prob}{\Pr}
\newcommand{\comp}[1]{\overline{#1}}
\newcommand{\pctls}{\text{PCTL}_{\mathit{safe}}}
\newcommand{\pctll}{\text{PCTL}_{\mathit{live}}}
\newcommand{\pctlss}{\text{PCTL}_{\mathit{ssafe}}}
\newcommand{\pctlal}{\text{PCTL}_{\mathit{alive}}}
\newcommand{\pctlf}{\text{PCTL}_{\mathit{flat}}}
\newcommand{\weightFunc}[1][R]{\sqsubseteq_{#1}}
\newcommand{\epctls}{\precsim_{\mathit{safe}}}
\newcommand{\epctll}{\precsim_{\mathit{live}}}
\newcommand{\epctllone}{\precsim^1_{\mathit{live}}}
\newcommand{\epctlltwo}{\precsim^2_{\mathit{live}}}

\renewcommand{\bot}{\text{\sf 0}}
\renewcommand{\top}{\text{\sf 1}}
\newcommand{\tco}{\mathit{tco}}
\newcommand{\stsh}{\mathit{ss}}
\newcommand{\natureNum}{\mathbb{N}}


\makeatletter
\newtheorem*{rep@theorem}{\rep@title}
\newcommand{\newreptheorem}[2]{%
\newenvironment{rep#1}[1]{%
 \def\rep@title{#2 \ref{##1}}%
 \begin{rep@theorem}}%
 {\end{rep@theorem}}}
\makeatother

\newreptheorem{theorem}{Theorem}
\newreptheorem{lemma}{Lemma}

\begin{document}
\title{Probably Safe or Live\thanks{This work is supported by the 7th EU Framework Programme under grant
agreements 295261 (MEALS) and 318490 (SENSATION), and by the DFG
Sonderforschungsbereich AVACS. 
Lijun Zhang has received support from the National
Natural Science Foundation of China (NSFC) under grant No.
61361136002 and 91118007.
Joost-Pieter Katoen is supported by the Excellence Initiative of the German 
federal and state governments.}}
\author{
Joost-Pieter Katoen\inst{1}
\and
Lei Song\inst{3,2}
\and
Lijun Zhang\inst{4,2}
}

\institute
{
 \inst{}
  Department of Computer Science, RWTH Aachen University 
  \and
  \inst{}
  Department of Computer Science, Saarland University
 \and
  \inst{}
  Max-Planck-Institut f\"{u}r Informatik
   \and
  \inst{} 
  State Key Laboratory of Computer Science,\\
  Institute of Software, Chinese Academy of Sciences   
}

\authorrunning{Probably Safe or Live}
\maketitle
\begin{abstract}
This paper presents a formal characterisation of safety and liveness properties for fully probabilistic systems.
As for the classical setting, it is established that any (probabilistic tree) property is equivalent to a conjunction of a safety and liveness property.
A simple algorithm is provided to obtain such a property decomposition for flat probabilistic CTL (PCTL). 
A safe fragment of PCTL is identified that provides a sound and complete characterisation of safety properties. 
For liveness properties, we provide two PCTL fragments, a sound and a complete one, and show that a sound and complete logical characterisation of liveness properties hinges on the (open) satisfiability problem for PCTL.
We show that safety properties only have finite counterexamples, whereas liveness properties have none.
We compare our characterisation for qualitative properties with the one for branching time properties by Manolios and Trefler, and present sound and complete PCTL fragments for characterising the notions of strong safety and absolute liveness coined by Sistla. 
\end{abstract}

\section{Introduction}

The classification of properties into safety and liveness properties is pivotal for reactive systems verification.
As Lamport introduced in 1977~\cite{Lamport1977PCM} and detailed later
in~\cite{Alford1985DSM}, safety properties assert that something
``bad'' never happens, while liveness properties require that
something ``good'' will happen eventually. 
The precise formulation of safety and liveness properties as well as their characteristics have been subject to extensive investigations.
Alpern and Schneider~\cite{alpern1987recognizing} provided a
topological characterisation in which safety properties are closed
sets, while liveness properties correspond to dense sets. 
This naturally gives rise to a decomposition---every property can be represented as a conjunction of a safety and liveness property.
It was shown that this characterisation can also be obtained using Boolean~\cite{Gumm1993GAC} and standard set theory~\cite{Rem1990personal}.
Sistla~\cite{Sistla1985CSL} studied the problem from a different
perspective and provided syntactic characterisations of safety and
liveness properties in LTL. 
The above linear-time approaches are surveyed in~\cite{Survey1994}. 
In the case of possible system failures, safety properties sometimes turn into liveness properties~\cite{DBLP:conf/concur/Charron-BostTB00}. 
The algebraic framework of Gumm~\cite{Gumm1993GAC} has been further generalised by Manolios and Trefler to characterise safety and
liveness properties both in the linear-time setting~\cite{Manolios2003LCS} as well
as in the branching-time setting~\cite{Manolios2001SLB}. 
Earlier work by Bouajjani \emph{et al.}~\cite{DBLP:conf/icalp/BouajjaniFGRS91} characterises regular safety properties by tree automata and formulas of a branching time logic. 
Alternatives to the safety-liveness taxonomy have been given in~\cite{DBLP:conf/sigsoft/NaumovichC00}.

The taxonomy of properties is not just of theoretical interest, but plays an important role in verification.
Safety and liveness properties require different proof methods~\cite{Owicki1982PLP}.
Whereas global invariants suffice for safety properties, liveness is
typically proven using proof lattices or well-founded induction and
ranking functions. 
Model checking of safety properties is usually easier than checking liveness properties~\cite{Kupferman2001MCS}.
Fairness assumptions are often imposed to exclude some unrealistic executions~\cite{Francez1986FAI}.
As fairness constraints only affect infinite computations, they can be ignored in the verification of safety properties, 
typically simplifying the verification process.
Abstraction techniques are mostly based on simulation pre-order relations that preserve safety, but no liveness properties.
Compositional techniques have been tailored to safety properties~\cite{DBLP:journals/tosem/CheungK99}.

This paper focuses on a formal characterisation of safety and liveness properties in the \emph{probabilistic} setting.
For the verification of linear-time properties, one typically resorts to using LTL or $\omega$-automata.
In the branching-time setting, mostly variants of CTL such as PCTL~\cite{Hansson94alogic} are exploited.
This is the setting that we consider.
PCTL is one of the most popular logics in the field of probabilistic model checking.
Providing a precise characterisation of safety and liveness properties for probabilistic models is highly relevant.
It is useful for identifying the appropriate analysis algorithm and provides mathematical insight.
In addition, many techniques rely on this taxonomy. 
Let us give a few examples.
Assume-guarantee frameworks~\cite{DBLP:conf/tacas/KwiatkowskaNPQ10,KomuravelliPC12} and abstraction
techniques~\cite{DBLP:conf/cav/HermannsWZ08,DBLP:journals/jlp/KatoenKLW12} aim at safety properties. 
Recent verification techniques based on monitoring~\cite{Sistla2011MSDS} indicate that arbitrary high levels
of accuracy can only be achieved for safety properties. 
Similar arguments force statistical model checking~\cite{DBLP:journals/iandc/YounesS06} to be limited to safety
properties. 
Optimal synthesis for safety properties in probabilistic games can also be done more efficiently than for liveness
properties~\cite{DBLP:conf/cav/ChatterjeeHJS10}. 

Despite the importance of distinguishing safety and liveness
properties in probabilistic systems, this subject has  (to the best of our
knowledge) not been systematically studied. 
The lack of such a framework has led to different notions of safety and
liveness properties~\cite{Baier2005CBS,Chadha2010CAF}. 
We will show that a systematic treatment leads to new insights and indicates some deficiencies of existing logical fragments for safety and liveness properties.
Inspired by~\cite{Manolios2001SLB}, we consider properties as sets of
probabilistic trees and provide a decomposition result stating that
every property can be represented by a conjunction of a safety and
liveness property. 
Moreover, all properties of the classification in the traditional
setting, such as closure of property classes under Boolean operators,
are shown to carry over to probabilistic systems. 
We study the relationship of safety and liveness properties to finite and infinite
counterexamples~\cite{Han2009CGP}, and compare our taxonomy with the classification
in~\cite{Manolios2001SLB} for qualitative properties. 
A major contribution is the identification of logical fragments of PCTL to characterise safety and liveness.
It is shown that fragments in the literature~\cite{Baier2005CBS} can be extended (for safety), or are inconsistent with our definitions (for liveness).
In addition, we consider absolute liveness and strong safety as originated by Sistla~\cite{Sistla1994SLF} for the linear-time setting.
Phrased intuitively, strong safety properties are closed under stuttering and are insensitive to the deletion of states, while once an absolutely live property holds, it is ensured it holds in the entire past.
We obtain a sound and complete characterisation of strong safety and---in contrast to \cite{Sistla1994SLF}---of absolute liveness.
In addition, we show that every absolutely live formula is equivalent to positive reachability.
This result could be employed to simplify a formula prior to verification in the same way as \cite{EtessamiH00} to 
simplify LTL formulas by rewriting in case they are stable (the complement of absolutely live) or absolutely live.
Summarising, the main contributions of this paper are:
\begin{itemize}
\item 
A formal characterisation for safety and liveness properties yielding
a decomposition theorem, i.e., every property can be represented as a conjunction of a safety and liveness property.
\item The relation of the characterisation to counterexamples.
\item
A linear-time algorithm to decompose a flat, i.e., unnested PCTL formula into a conjunction of safety and liveness properties.
\item 
A PCTL fragment that is a sound and complete characterisation of
safety properties.  (Here, completeness means that every safety
property expressible in PCTL can be expressed in the logical fragment.) 
The same applies to absolute liveness and strong safety properties.
\item 
A PCTL fragment that is a sound characterisation of liveness properties, and a fragment that is complete.
We discuss the difficulty to obtain a single sound and complete syntactic characterisation by relating it
to the PCTL decidability problem.
\item 
The relation of the property characterisation to simulation pre-orders~\cite{JonssonL91}. 
\end{itemize}

\paragraph{Organisation of the paper} 
Section~\ref{sec:pre} provides some preliminary definitions.
Section~\ref{sec:safety and liveness} presents the characterisation of safety and liveness properties.
We show the relations to counterexamples and qualitative properties of
our characterisation in Section~\ref{sec:counterexample} and
\ref{sec:qualitative properties} respectively.  
Safety PCTL is considered in Section~\ref{sec:safety pctl}, while liveness PCTL is discussed in Section~\ref{sec:liveness pctl}.
We show in Section~\ref{sec:simulation} that the new notions of safety and liveness properties
can also characterise strong simulation.
Section~\ref{sec:strong and absolute} gives the full characterisation for strong safety and absolute liveness PCTL.
Section~\ref{sec:conclusion} concludes the paper.
All proofs are included in the appendix.

\section{Preliminaries}\label{sec:pre}

For a countable set $S$, let $\powerset(S)$ denote its powerset. 
A distribution is a function $\mu:S\to [0,1]$ satisfying $\sum_{s\in S} \mu(s)= 1$. 
Let $\mathit{Dist}(S)$ denote the set of distributions over $S$. 
We shall use $s, r, t, \ldots$ and $\mu, \nu, \ldots$ to range over $S$ and $\mathit{Dist}(S)$, respectively.
The support of $\mu$ is defined by $\mathit{supp}(\mu) =\{s\in S \mid \mu(s)>0\}$. 
Let $S^*$ and $S^{\omega}$ denote the set of finite sequences and infinite sequences, respectively, over the set $S$. 
The set of all (finite and infinite) sequences over $S$ is given by $S^\infty = S^* \cup S^{\omega}$.
Let $\len{\pi}$ denote the length of $\pi \in S^\infty$ with $\len{\pi} = \infty$ if $\pi \in S^{\omega}$. 
For $i\in\mathbb{N}$, let $\pi[i]$ denote the $i{+}1$-th element of $\pi$ provided $i <\len{\pi}$, and $\lastState{\pi} \, = \pi[\len{\pi}{-}1]$ denote the last element of $\pi$ provided $\pi\in S^*$. 
A sequence $\pi_1$ is a prefix of $\pi_2$, denoted $\pi_1 \preceq \pi_2$, if $\len{\pi_1} \leq \len{\pi_2}$
and $\pi_1[i] = \pi_2[i]$ for each $0 \leq i< \len{\pi_1}$. 
Sequence $\pi_1$ is a proper prefix of $\pi_2$, denoted $\pi_1 \prec \pi_2$, if $\pi_1 \preceq \pi_2$ and  $\pi_1 \neq \pi_2$. 
The concatenation of $\pi_1$ and $\pi_2$, denoted $\pi_1 \concat \pi_2$, is the sequence obtained by appending $\pi_2$ to the end of $\pi_1$, provided $\pi_1$ is finite. 
The set $\Pi \subseteq S^\infty$ is \emph{prefix-closed} iff for all $\pi_1\in \Pi$ and $\pi_2 \in S^*$, $\pi_2 \preceq \pi_1$ implies $\pi_2 \in \Pi$.

\subsection{Discrete-Time Markov Chains}\label{sec:dtmc}
This paper focuses on discrete-time Markov chains (MCs).
Although we consider state-labelled models, all results can be transferred to action-labelled models in a straightforward way.

\begin{definition}[Markov chain]
\label{def:dtmc}
  A \emph{Markov chain} (\emph{MC}) is a tuple $\dtmc =
  (\states, \AP, \rightarrow, L, s_0)$, where $\states$ is a countable
  set of states, $\AP$ is a finite non-empty set of atomic
  propositions, $\rightarrow:\states\mapsto\dist(\states)$ is a
  transition function, $L:\states\mapsto\powerset(\AP)$ is a labelling
  function, and $s_0 \in \states$ is the initial state.
\end{definition}

\begin{figure}[!t]
  \centering
\scalebox{0.8}{
\begin{tikzpicture}[->,>=stealth,auto,node distance=3.5cm,semithick,scale=1,every node/.style={scale=1}]
	\tikzstyle{state}=[minimum size=20pt,circle,draw,thick]
	\tikzstyle{stateNframe}=[]
	every label/.style=draw
        \tikzstyle{blackdot}=[circle,fill=black, minimum size=6pt,inner sep=0pt]
     \node[state,label={[label distance=0pt]90:{$a$}}](s0){$s_0$};
     \node[state,label={[label distance=0pt]90:{$a$}}](s1)[right of=s0,yshift=1cm]{$s_1$};
     \node[state,label={[label distance=0pt]90:{$c$}}](s2)[right of=s0,yshift=-1cm]{$s_2$};
     \node[state,label={[label distance=0pt]90:{$a$}}](t0)[right of=s0, xshift=2cm]{$t_0$};
     \node[state,label={[label distance=0pt]90:{$b$}}](t1)[right of=t0,yshift=1cm]{$t_1$};
     \node[state,label={[label distance=0pt]90:{$c$}}](t2)[right of=t0,yshift=-1cm]{$t_2$};
     \node[stateNframe](a)[below of=s2,yshift=2.5cm, xshift=-1.5cm]{(a)};
     \node[stateNframe](b)[below of=t2,yshift=2.5cm, xshift=-1.5cm]{(b)};
     \path (s0) edge             node[left,yshift=5pt] {0.5} (s1)
                edge             node[left,yshift=-5pt] {0.5} (s2)
	   (s1) edge[loop right] node {1}   (s1)
	   (s2) edge[loop right] node {1}   (s2)
	   (t0) edge             node[left,yshift=5pt] {0.4} (t1)
	        edge             node[left,yshift=-5pt] {0.4} (t2)
		edge[loop below] node {0.2} (t0)
	   (t1) edge[loop right] node {1}   (t1)
	   (t2) edge[loop right] node {1}   (t2);
\end{tikzpicture}
}
  \caption{\label{fig:dtmc} Examples of MCs}
\end{figure}
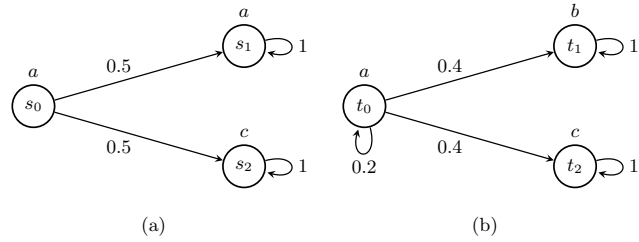
 
Fig.~\ref{fig:dtmc} presents two sample MCs where circles denote states, symbols inside the states and attached to the states denote the name and label of a state respectively.
A path $\pi\in\states^\infty$ through MC $\dtmc$ is a (finite or infinite) sequence of states. 
The cylinder set $C_\pi$ of $\pi\in\states^*$ is defined as: $C_\pi = \{\pi' \in \states^{\omega} \mid \pi \prec \pi' \}$. 
The $\sigma$-algebra $\mathcal{F}$ of $\dtmc$ is the smallest $\sigma$-algebra containing all cylinder sets $C_\pi$. 
By standard probability theory, there exists a unique probability measure $\Prob$ on $\mathcal{F}$ such that:
$\Prob(C_{\pi})=1$ if $\pi=s_0$, and $\Prob(C_{\pi}) = \Pi_{0 \leq i <n}\ \mu_i(s_{i+1})$ if $\pi=s_0\ldots s_n$ with $n>0$, where $s_i \rightarrow \mu_i$ for $0 \le i <n$.
Otherwise $\Prob(C_{\pi})=0$.  

\subsection{Probabilistic CTL}
\label{sec:pctl}
Probabilistic CTL (PCTL for short, ~\cite{Hansson94alogic}) is a branching-time logic for specifying properties of probabilistic systems. 
Its syntax is defined by the grammar:
\begin{align*}
\Phi & ::= \ a \mid \Phi_1\wedge\Phi_2\mid\neg \Phi\mid\P{\bowtie q}{\varphi}\\
\varphi & ::=\ \X \Phi\mid\Phi_1 \U \Phi_2 \mid \Phi_1\W \Phi_2
\end{align*}
where $a\in\AP$, $\bowtie\ \in\{<,>,\leq,\geq\}$ is a binary comparison operator on the reals, and $q \in [0,1]$. 
Let $\top= a\lor\neg a$ denote true and $\bot=\neg\top$ denote false.
As usual, $\Diamond\Phi = \top\U\Phi$ and $\Box\Phi=\Phi\W\bot$.
We will refer to $\Phi$ and $\varphi$ as state and path formulas, respectively.
The satisfaction relation $s \models\Phi$ for state $s$ and state formula $\Phi$ is defined in the standard manner for the Boolean connectives. 
For the probabilistic operator, it is defined by:
$
s \models\P{\bowtie q}{\varphi} \text{ iff } \Prob \{\pi\in\states^{\omega}(s)\mid\pi\models\varphi\} \bowtie q,
$
where $\states^{\omega}(s)$ denotes the set of infinite paths starting from $s$.
For MC $\dtmc$, we write $\dtmc\models\Phi$ iff its initial state satisfies $\Phi$, i.e., $s_0\models\Phi$.
The satisfaction relation for $\pi \in \states^\omega$ and path formula $\varphi$ is defined by:
\begin{align*}
&\pi\models\X\Phi &&\text{iff } \pi[1]\models\Phi\\
&\pi\models\Phi_1\U\Phi_2 && \text{iff } \exists j\geq 0.\pi[j]\models\Phi_2\land\forall 0\leq k<j.\pi[k]\models\Phi_1\\
&\pi\models\Phi_1\W\Phi_2 && \text{iff }\pi\models\Phi_1\U\Phi_2\lor\forall i\ge 0.\pi[i]\models\Phi_1.
\end{align*}
The until $\U$ and weak until $\W$ modalities are dual:
\begin{align*}
  \P{\ge q}{\Phi_1\U\Phi_2} &\equiv \P{\le 1-q}{(\Phi_1\land\neg\Phi_2)\W(\neg\Phi_1\land\neg\Phi_2)},\\
  \P{\ge q}{\Phi_1\W\Phi_2} &\equiv \P{\le 1-q}{(\Phi_1\land\neg\Phi_2)\U(\neg\Phi_1\land\neg\Phi_2)}.
\end{align*}
These duality laws follow directly from the known equivalence
$\neg (\Phi_1\U\Phi_2) \equiv (\Phi_1\wedge\neg\Phi_2)\W(\neg\Phi_1\wedge\neg\Phi_2)$ in the usual setting.
Every PCTL formula can be transformed into an equivalent PCTL formula in \emph{positive normal form}. 
A formula is in positive normal form, if negation only occurs adjacent to atomic propositions.
In the sequel, we assume PCTL formulas to be in positive normal form.

\section{Safety and Liveness Properties}
\label{sec:safety and liveness}

\subsection{Probabilistic Trees}

This section introduces the concept of probabilistic trees together with prefix and suffix relations over them.
These notions are inspired by~\cite{Manolios2001SLB}. 
Let $A,B,\ldots$ range over $\powerset(\AP)$, where $\{a\}$ is abbreviated by $a$.
Let $\epsilon$ be the empty sequence.
\begin{definition}[Probabilistic tree]\label{def:pt}
A \emph{probabilistic tree} (PT) is a tuple $\tree=\atree$ where $\epsilon\not\in\treeNodes$, and
  \begin{itemize}
  \item $(\treeNodes \cup \{\epsilon\}) \subseteq \natureNum^*$ is an unlabelled tree, i.e., prefix-closed,
  \item $L: \treeNodes\mapsto\powerset(\AP)$ is a node labelling function,
  \item $\func:\treeNodes\mapsto\mathit{Dist}(\treeNodes)$ is an edge
    labelling function, which is a partial function satisfying $\func(\pi)(\pi')>0$ iff $\pi'=\pi\concat n\in\treeNodes$ for some $n\in\natureNum$.
  \end{itemize}
\end{definition}

The node $\pi$ with $|\pi|=1$ is referred to as the \emph{root}, while all nodes $\pi$ such that $P(\pi)$ is undefined are referred to as the \emph{leaves}.
To simplify the technical presentation, $\epsilon$ is excluded from the tree. 
This will become clear after introducing the PT semantics for MCs.
PT $\tree=\atree$ is \textit{total} iff for each $\pi_1\in\treeNodes$ there exists $\pi_2\in\treeNodes$ such that $\pi_1 \prec\pi_2$, otherwise it is \textit{non-total}.
$\tree$ is \textit{finite-depth} if there exists $n\in\mathbb{N}$ such that
$\len{\pi}\le n$ for each $\pi\in\treeNodes$. Let $\alltrees^\omega$ and $\alltrees^*$ denote the sets of all total PTs and finite-depth PTs respectively,
and $\alltrees^\infty=\alltrees^*\cup\alltrees^\omega$. 
If no confusion arises, we often write a PT as a subset of $((0,1]\times\powerset(AP))^*$, i.e., as a set of sequences of its edge labelling and  node labelling functions.

\begin{example}[Probabilistic trees]
Fig.~\ref{fig:pt} depicts the finite-depth PT $\tree=\atree$.
Circles represent nodes and contain the node label and the order of the node respectively.
$$\treeNodes=\{0,00,01,02,000,001,002,011,022\}$$ and functions $L$ and $\func$ are defined in the obvious way, e.g., $L(00)=a$ and $\func(00,001)=0.4$.
PT $\tree$ can also be written as:
\begin{align*}
&\{ (1,a) ,(1,a)(0.2,a),(1,a)(0.4,b),(1,a)(0.4,c),\\ 
&\phantom{\{ } (1,a)(0.2,a)(0.2,a),(1,a)(0.2,a)(0.4,b),\\
&\phantom{\{ } (1,a)(0.2,a)(0.4,c),(1,a)(0.4,b)(1,b),\\
&\phantom{\{ } (1,a)(0.4,c)(1,c)\}.
\end{align*}
\end{example}

\begin{figure}[tbh]
  \centering
\scalebox{0.8}{
 \begin{tikzpicture}[->,>=stealth,auto,node distance=2cm,semithick,scale=1,every node/.style={scale=1}]
	\tikzstyle{state}=[minimum size=20pt,circle,draw,thick]
	\tikzstyle{stateNframe}=[]
	every label/.style=draw
        \tikzstyle{blackdot}=[circle,fill=black, minimum size=6pt,inner sep=0pt]
	\node[state](b1){$a,0$};
	\node[state](b2)[right of=b1]{$b,1$};
	\node[state](b3)[right of=b2]{$c,2$};
	\node[state](b4)[right of=b3]{$b,1$};
	\node[state](b5)[right of=b4]{$c,2$};
	\node[state](m1)[above of=b2]{$a,0$};
	\node[state](m2)[above of=b4]{$b,1$};
	\node[state](m3)[above of=b5]{$c,2$};
	\node[state](t1)[above of=m2]{$a,0$};
	\path (m1) edge            node[left] {0.2} (b1)
	           edge            node {0.4} (b2)
		   edge            node {0.4} (b3)
	      (m2) edge            node {1}   (b4)
	      (m3) edge            node {1}   (b5)
              (t1) edge            node[left,yshift=5pt] {0.2} (m1)
	           edge            node[left] {0.4} (m2)
		   edge            node {0.4} (m3);
 \end{tikzpicture}
}
  \caption{A sample probabilistic tree}\label{fig:pt}
\end{figure}
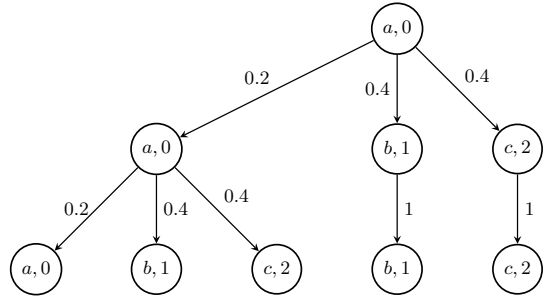

We now define when a PT is a prefix of another PT. 
\begin{definition}[Prefix]\label{def:prefix}
Let $\tree_i=\atree[i]$ for $i{=}1,2$ with $\tree_1\in\alltrees^*$ and $\tree_2\in\alltrees^\infty$.
$\tree_1$ is a \emph{prefix} of $\tree_2$, denoted $\tree_1\prefixTree\tree_2$, iff
$$
\treeNodes_1 \subseteq \treeNodes_2 \mbox{ and }
L_2 \restrict W_1 = L_1 \mbox{ and }
P_2 \restrict (W_1 \times W_1) = P_1,
$$
where $\restrict$ denotes restriction. 
Let $\finitePrefix(\tree) = \{\tree_1 \in \alltrees^* \mid \tree_1 \prefixTree \tree\}$ denote the set of all prefixes of $\tree \in \alltrees^\infty$.
\end{definition}
Conversely, we define a suffix relation between PTs:
\begin{definition}[Suffix]\label{def:suffix}
Let $\tree_i=\atree[i]$ with $\tree_i \in\alltrees^\infty$, $i=1,2$.
$\tree_2$ is a \emph{suffix} of $\tree_1$ iff there exists $\pi_1\in\treeNodes_1$ such that
 \begin{itemize}
 \item $\{\pi_1\concat\pi_2\mid\pi_2\in\treeNodes_2\}\subseteq\treeNodes_1$;
 \item $L_2(\pi_2)=L_1(\pi_1{\concat}\pi_2)$ for each $\pi_2\in\treeNodes_2$;
 \item $\func_2(\pi_2,\pi'_2)=\func_1(\pi_1{\concat}\pi_2,\pi_1{\concat}\pi'_2)$ for any $\pi_2,\pi'_2\in\treeNodes_2$.
\end{itemize}
\end{definition}
Intuitively, a suffix $\tree_2$ of $\tree_1$ can be seen as a PT obtained after executing $\tree_1$ along some sequence $\pi_1\in\treeNodes_1$.

\subsection{A PT semantics for MCs}
There is a close relation between PTs and MCs, as the execution of every MC is in fact a PT. 
Without loss of generality, we assume there exists a total order on the state space $\states$ of an MC, e.g., $\states = \mathbb{N}$.
\begin{definition}[Unfolding of an MC]\label{def:unfolding}
The \emph{unfolding} of the MC $\dtmc=(\states,\AP,\rightarrow,L,s_0)$ is the PT $\tree(\dtmc)=\atree[\dtmc]$ with: 
  \begin{itemize}
  \item $\treeNodes_\dtmc$ is the least set satisfying: i) $s_0\in\treeNodes_\dtmc$;
ii) $\pi\in\treeNodes_\dtmc$ implies $\pi \concat t \in\treeNodes_\dtmc$
for any $t\in \support(\mu)$, where $\lastState{\pi} \, \rightarrow \mu$;
  \item $L_\dtmc(\pi)=L(\lastState{\pi})$ for each $\pi\in\treeNodes_\dtmc$;
  \item $\func_\dtmc(\pi,\pi')=\mu(\lastState{\pi'})$ where $\lastState{\pi} \, \rightarrow\mu$.
  \end{itemize}
\end{definition}
Note the initial state $s_0$ is the root of the tree $\tree(\dtmc)$.
\begin{example}[Prefix, suffix and unfolding]
Let $\tree_2$ be the PT depicted in Fig.~\ref{fig:pt} and $\tree_1$ be a PT written by
$\{(1,a),(1,a)(0.2,a),(1,a)(0.4,b),(1,a)(0.4,c)\}.$ 
It follows that $\tree_1$ is a prefix of $\tree_2$.
Actually, $\tree_1$ is a fragment of $\tree_2$.
PT $\tree_1$ can be seen as a partial execution of MC $\dtmc$ in Fig.~\ref{fig:dtmc}(b) up to two steps, while $\tree_2$ is a partial execution of $\dtmc$ up to 3 steps. 
By taking the limit over the number of steps to infinity, one obtains the total PT $\tree(\dtmc)$.
Note that $\tree_1$ and $\tree_2$ are both prefixes of $\tree(\dtmc)$.

Let $\tree_3=\{(1,b),(1,b)(1,b),(1,b)(1,b)(1,b),\ldots\}$ be a total PT.
By Def.~\ref{def:suffix}, $\tree_3$ is a suffix of $\tree(\dtmc)$.
It is representing the resulting PT after jumping to $t_1$ in $\dtmc$.
\end{example}

Def.~\ref{def:unfolding} suggests to represent properties on MCs as a set of probabilistic trees.

\begin{definition}[Property] 
A \emph{property} $P\subseteq\alltrees^\omega$ is a set of total PTs. 
Property $P$ (over $\AP$) is satisfied by an MC $\dtmc$ (over $\AP$), denoted $\dtmc\models P$, iff $\tree(\dtmc)\in P$.
\end{definition}

The complement of $P$, denoted $\overline{P}$, equals $\alltrees^\omega \setminus P$.
In the sequel, let $P_\Phi = \{\tree(\dtmc)\mid\dtmc\models\Phi\}$ denote the property corresponding to the PCTL-formula $\Phi$. 
By a slight abuse of notation, we abbreviate $P_\Phi$ by $\Phi$ when it causes no confusion. 

\subsection{Safety and Liveness}
\label{sec:safetyandliveness}

Along the lines of Alpern and Schneider~\cite{alpern1987recognizing}, let us define safety and liveness properties.
\begin{definition}[Safety]\label{def:safety}
$P\subseteq\alltrees^\omega$ is a \emph{safety property} iff for all $\tree\in\alltrees^\omega$:
  $\tree\in P\text{ iff } \forall\tree_1\in\finitePrefix(\tree). \, (\exists\tree_2\in P. \, \tree_1\prefixTree\tree_2).$
\end{definition}
Thus, a safety property $P$ only consists of trees $\tree$ for which any finite-depth prefix of $\tree$ can be extended to a PT in $P$. 
Colloquially stated, if $\tree \not\in P$, there is a finite-depth prefix
of $\tree$, in which ``bad things'' have happened in finite depth and are
not irremediable. 

\begin{definition}[Liveness]\label{def:liveness}
$P\subseteq\alltrees^\omega$ is a \emph{liveness property} iff: 
$\forall\tree_1\in\alltrees^*. \, \exists\tree_2\in P. \, \tree_1\prefixTree\tree_2.$
\end{definition}
Intuitively, a property $P$ is live iff for any finite-depth PT, it is possible to extend it such that the resulting PT satisfies $P$.
Colloquially stated, it is always possible to make ``good things'' happen eventually. 
As in the classical setting, it holds that $\emptyset$ is a safety property, while $\alltrees^\omega$ is the only property which is both
safe and live.

\begin{example}[Classification of sample PCTL formulas]\label{ex:classification}\ 
\begin{itemize}
\item 
$\Phi=\P{\leq 0.5}{a\U b}$ is a safety property.\\
This can be seen as follows.
First, note that $\tree\in\Phi$ and $\tree_1\in\finitePrefix(\tree)$ implies the existence of $\tree_1\prefixTree\tree_2:=T$ and $\tree_2\in\Phi$. 
The other direction goes by contraposition.
Assume $\tree\not\in \Phi$, but for all
$\tree_1\in\finitePrefix(\tree)$, there exists $\tree_2\in\Phi$ such
that $\tree_1\prefixTree\tree_2$ (assumption *). 
If $\tree\not\in\Phi$, i.e., $\tree\in\P{>0.5}{a\U b}$, there must
exist $\tree_1\in\finitePrefix(\tree)$ in which the probability of
reaching a $b$-state via $a$-states exceeds $0.5$.  
Therefore, $\tree_1\not\prefixTree\tree_2$ for any $\tree_2\in\Phi$.
This contradicts the assumption (*).
\item 
$\Phi=\P{\geq 0.5}{a\U b}$ is neither safe nor live.\\
Let MC $\dtmc$ be depicted in Fig.~\ref{fig:dtmc}(a). 
Every finite-depth PT $\tree_1$ with $\tree_1\prefixTree\tree(\dtmc)$ can easily be extended to $\tree_2$ such that $\tree_2\in\Phi$ and $\tree_1\prefixTree\tree_2$. 
But obviously $\tree(\dtmc)\not\in\Phi$. 
Therefore $\Phi$ is not a safety property.  
To show that $\Phi$ is not a liveness property, let $\tree_1=\{(1,a),(1,a)(p,a),(1,a)(1-p,c)\}$ with $p<0.5$. 
For any possible extension of $\tree_1$, the probability of satisfying $a\U b$ is at most $p<0.5$. 
Therefore $\Phi$ is not live. 
\item 
$\Phi=\P{\geq 0.5}{\Diamond b}$, $\Phi=\P{>0.5}{\Diamond b}$ are liveness properties.\\
For every finite-depth PT $\tree_1$, there exists $\tree_2\in\Phi$ such that $\tree_1\prefixTree\tree_2$ (obtained by extending $\tree_1$ with $b$-states).
\item 
$\Phi=\P{< 0.5}{a\U b}$ is neither safe nor live.\\
Consider the MC $\dtmc$ in Fig.~\ref{fig:dtmc}(b). 
Since the probability of reaching a $b$-state $t_1$ is 0.5, $\tree(\dtmc)\not\in\Phi$. 
The probability of reaching $t_1$ in finitely many steps is however strictly less than 0.5.
Thus, for any $\tree_1\in\finitePrefix(\tree(\dtmc))$, there exists $\tree_2\in\Phi$ with $\tree_1\prefixTree\tree_2$.  
Therefore $\Phi$ is not a safety property. 
Moreover, PTs like $\tree_1=\{(1,c)\}$ show that $\Phi$ is not a liveness property either.

Remark that $\P{\le 0.5}{a\U b}$ is a safety property, whereas $\P{<0.5}{a\U b}$ is neither safe nor live. 
This can be seen as follows.
Intuitively, $\tree\not\models\P{\le 0.5}{a\U b}$ iff $\tree\models\P{>0.5}{a\U b}$, i.e., the probability of
paths in $\tree$ satisfying $a \U b$ exceeds 0.5. 
For this, there must exist a set of \emph{finite} paths in $\tree$ satisfying $a\U b$ whose probability mass exceeds 0.5. 
However, this does not hold for $\P{<0.5}{a \U b}$, as $\tree\not\models\P{< 0.5}{a\U b}$ iff $\tree\models\P{\ge 0.5}{a\U b}$. 
There exist PTs (like the one in Fig.~\ref{fig:dtmc}(b)) such that they satisfy $\P{\ge 0.5}{a\U b}$, but the probability mass of their \emph{finite} paths satisfying $a\U b$ never exceeds 0.5.
\item 
$\Phi=\P{> 0.4}{a\U b}$ is neither safe nor live.\\ 
Consider the MC $\dtmc$ in Fig.~\ref{fig:dtmc}(a).
Clearly, $\dtmc \not\models \Phi$, as the probability of reaching a $b$-state is 0. 
But any finite-depth prefix of $\tree(\dtmc)$ can be extended to a PT in $\Phi$. 
Thus, $\Phi$ is not a safety property.
Moreover for finite-depth PTs like $\tree_1=\{(1,c)\}$, there exists no $\tree_2\in\Phi$ such that $\tree_1\prefixTree\tree_2$. 
Therefore $\Phi$ is not a liveness property.
\end{itemize}
\end{example}

\subsection{Characterisations of Safety and Liveness}

As a next step, we aim to give alternative characterisations of safety and liveness properties using topological closures~\cite{Manolios2003LCS}.
\begin{definition}[Topological closure]\label{def:topological closure}
Let $X$ be a set.
The function $\tco:\powerset(X)\mapsto\powerset(X)$ is a \emph{topological closure} operator on a $X$ iff for any $C,D\subseteq X$ it holds:
  \begin{enumerate}
  \item $\tco(\emptyset)=\emptyset$;
  \item $C\subseteq\tco(C)$;
  \item $\tco(C)=\tco(\tco(C))$;
  \item $\tco(C\cup D)=\tco(C)\cup\tco(D)$.
  \end{enumerate}
\end{definition}

The following lemma shows two important properties of topological closure operators, where $\comp{C} = X\setminus C$ denotes the complement of $C$ w.r.t.\, $X$.
\begin{lemma}[\cite{Manolios2003LCS}]\label{lem:topological closure}
For a topological closure operator $\tco$ on $X$ and
$C\subseteq X$ we have:
  \begin{itemize}
  \item $\tco(C\cup\comp{\tco(C)})=X$;
  \item $\tco(C)\cap(C\cup\comp{\tco(C)})=C$.
  \end{itemize}
\end{lemma}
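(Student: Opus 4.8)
The plan is to prove each of the two identities directly from the four closure axioms, treating them as purely algebraic manipulations. The first identity, $\tco(C\cup\comp{\tco(C)})=X$, I would tackle by using the additivity axiom (item~4) to split the left-hand side as $\tco(C)\cup\tco(\comp{\tco(C)})$. Then observe that $\comp{\tco(C)}\subseteq\tco(\comp{\tco(C)})$ by the extensivity axiom (item~2), so the union $\tco(C)\cup\tco(\comp{\tco(C)})$ contains $\tco(C)\cup\comp{\tco(C)}$. But $C\subseteq\tco(C)$, again by item~2, hence $\tco(C)\cup\comp{\tco(C)} = \tco(C)\cup\comp{\tco(C)}$ already equals all of $X$ (anything not in $\tco(C)$ is, by definition, in $\comp{\tco(C)}$). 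Since the closure maps into $\powerset(X)$, the reverse inclusion $\tco(\cdot)\subseteq X$ is automatic, so equality follows.

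For the second identity, $\tco(C)\cap(C\cup\comp{\tco(C)})=C$, I would argue both inclusions. For $\supseteq$: $C\subseteq\tco(C)$ by item~2 and clearly $C\subseteq C\cup\comp{\tco(C)}$, so $C$ is contained in the intersection. For $\subseteq$: take $x\in\tco(C)\cap(C\cup\comp{\tco(C)})$. Since $x\in\tco(C)$, we have $x\notin\comp{\tco(C)}$, so from $x\in C\cup\comp{\tco(C)}$ we must have $x\in C$. This disposes of the inclusion and hence the identity.

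The main (and essentially only) subtlety is bookkeeping about which axioms get used: the first identity genuinely needs additivity (item~4) together with extensivity (item~2) and the fact that the operator lands in $\powerset(X)$; the idempotence axiom (item~3) and the normalisation axiom (item~1) are not needed for either part. I would state this explicitly so the reader sees the result is robust. There is no real obstacle here — the "hard part" is merely resisting the temptation to invoke $\tco(\tco(C))=\tco(C)$, which looks relevant but is not. I would keep the write-up to a short paragraph per identity, since the argument is a direct unwinding of definitions rather than anything requiring a construction or case analysis.

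\begin{proof}
For the first claim, by additivity of $\tco$ (Definition~\ref{def:topological closure}, item~4) we have $\tco(C\cup\comp{\tco(C)})=\tco(C)\cup\tco(\comp{\tco(C)})$. By extensivity (item~2), $C\subseteq\tco(C)$ and $\comp{\tco(C)}\subseteq\tco(\comp{\tco(C)})$, hence
$$
X \;=\; \tco(C)\cup\comp{\tco(C)} \;\subseteq\; \tco(C)\cup\tco(\comp{\tco(C)}) \;=\; \tco(C\cup\comp{\tco(C)}) \;\subseteq\; X,
$$
so $\tco(C\cup\comp{\tco(C)})=X$.

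For the second claim, $C\subseteq\tco(C)$ by extensivity and trivially $C\subseteq C\cup\comp{\tco(C)}$, so $C\subseteq\tco(C)\cap(C\cup\comp{\tco(C)})$. Conversely, let $x\in\tco(C)\cap(C\cup\comp{\tco(C)})$. Then $x\in\tco(C)$, so $x\notin\comp{\tco(C)}$; combined with $x\in C\cup\comp{\tco(C)}$ this forces $x\in C$. Hence $\tco(C)\cap(C\cup\comp{\tco(C)})\subseteq C$, and the two inclusions give the desired equality. \qed
\end{proof}
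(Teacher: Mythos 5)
Your proof is correct. Note that the paper itself does not prove this lemma at all --- it is imported verbatim from Manolios and Trefler \cite{Manolios2003LCS} --- so there is no in-paper argument to compare against; your derivation is the standard (and essentially only) one: additivity plus extensivity for the first identity, and for the second identity just extensivity together with the set-theoretic fact that $\tco(C)\cap\comp{\tco(C)}=\emptyset$. Your bookkeeping of which axioms are used is accurate (items 1 and 3 of Definition~\ref{def:topological closure} are indeed not needed, and in fact the second identity uses only item 2); the only cosmetic remark is that the appeal to $C\subseteq\tco(C)$ in your first display is superfluous, since $\tco(C)\cup\comp{\tco(C)}=X$ holds for any set whatsoever.
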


A closure function maps sets of total trees onto sets of total trees.
It is in particular useful when applied to properties.
\begin{definition}[Property closure]\label{def:closure linear}
Let $\closure: \powerset(\alltrees^\omega)\rightarrow\powerset(\alltrees^\omega)$.
The \emph{closure} of property $P \subseteq \alltrees^\omega$ is defined by:
$$
\closure(P) = \{\tree\in\alltrees^\omega\mid\forall\tree_1\in\finitePrefix(\tree).(\exists\tree_2\in P.\tree_1\prefixTree\tree_2)\}.
$$
\end{definition}
Intuitively speaking, $\closure(P)$ is the set of probabilistic trees for which all prefixes have an extension in $P$.
Consider the topological space $(\alltrees^\omega, \powerset(\alltrees^\omega))$. 
It follows:
\begin{lemma}\label{lem:topological closure}
The function $\closure$ is a topological closure operator on $(\alltrees^\omega, \powerset(\alltrees^\omega))$.
\end{lemma}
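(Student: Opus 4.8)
The plan is to verify the four axioms of Definition~\ref{def:topological closure} for $\closure$ on the space $(\alltrees^\omega,\powerset(\alltrees^\omega))$, taking $X=\alltrees^\omega$. Throughout I will use that every total PT $\tree$ has a non-empty set $\finitePrefix(\tree)$ of finite-depth prefixes (at least the root, and in fact $\prefixTree$ is reflexive-like only in the sense that finite-depth truncations are prefixes), and that the defining clause of $\closure(P)$ is a universal statement over $\finitePrefix(\tree)$ of an existential statement over $P$.

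\medskip
\noindent\textbf{Axioms 1 and 2.} For $\closure(\emptyset)=\emptyset$: if $\tree\in\closure(\emptyset)$ then for every $\tree_1\in\finitePrefix(\tree)$ there is $\tree_2\in\emptyset$ with $\tree_1\prefixTree\tree_2$; since $\finitePrefix(\tree)\neq\emptyset$, this is vacuously impossible, so $\closure(\emptyset)=\emptyset$. For $P\subseteq\closure(P)$: given $\tree\in P$ and any $\tree_1\in\finitePrefix(\tree)$, the tree $\tree_2:=\tree$ itself witnesses $\tree_1\prefixTree\tree_2$ with $\tree_2\in P$, so $\tree\in\closure(P)$.

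\medskip
\noindent\textbf{Axiom 4 (union).} The inclusion $\closure(C)\cup\closure(D)\subseteq\closure(C\cup D)$ is immediate from monotonicity of $\closure$ (which itself follows directly from the definition, since enlarging $P$ only enlarges the set of admissible $\tree_2$). For the converse $\closure(C\cup D)\subseteq\closure(C)\cup\closure(D)$, suppose $\tree\in\closure(C\cup D)$ but $\tree\notin\closure(C)$ and $\tree\notin\closure(D)$. Then there is a prefix $\tree_1^C\in\finitePrefix(\tree)$ with no extension in $C$, and a prefix $\tree_1^D\in\finitePrefix(\tree)$ with no extension in $D$. The key observation is that finite-depth prefixes of a fixed $\tree$ are linearly pre-ordered by $\prefixTree$ up to the depth bound: taking the prefix $\tree_1$ of $\tree$ whose depth is the maximum of the depths of $\tree_1^C$ and $\tree_1^D$, we get $\tree_1^C\prefixTree\tree_1$ and $\tree_1^D\prefixTree\tree_1$, and since any extension of $\tree_1$ is also an extension of both $\tree_1^C$ and $\tree_1^D$ (transitivity of $\prefixTree$), $\tree_1$ has no extension in $C$ and none in $D$, hence none in $C\cup D$ --- contradicting $\tree\in\closure(C\cup D)$.

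\medskip
\noindent\textbf{Axiom 3 (idempotence).} By axiom~2 applied to $\closure(C)$ we already have $\closure(C)\subseteq\closure(\closure(C))$, so only $\closure(\closure(C))\subseteq\closure(C)$ needs work. Take $\tree\in\closure(\closure(C))$ and fix $\tree_1\in\finitePrefix(\tree)$; I must produce $\tree_2\in C$ with $\tree_1\prefixTree\tree_2$. By assumption there is $\tree'\in\closure(C)$ with $\tree_1\prefixTree\tree'$; in particular $\tree_1\in\finitePrefix(\tree')$ (it is finite-depth and a prefix of $\tree'$), and since $\tree'\in\closure(C)$ this $\tree_1$ has an extension $\tree_2\in C$. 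Hence $\tree\in\closure(C)$.

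\medskip
I expect the main obstacle to be the union axiom, specifically making rigorous the claim that two finite-depth prefixes of the same total tree have a common finite-depth prefix-extension (obtained by truncating $\tree$ at the larger depth); this relies on $\tree$ being total so that the deeper truncation is itself a well-defined finite-depth PT with both given prefixes below it, together with transitivity of $\prefixTree$ and the fact that an extension of a deeper prefix is an extension of a shallower one. All other steps are routine unfoldings of the definitions, using monotonicity of $\closure$ as a convenient lemma.
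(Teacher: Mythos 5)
Your proof is correct and follows essentially the same route as the paper's: the same witness $\tree_2:=\tree$ for extensivity, the same two-step unfolding for idempotence, and the same key step for the union axiom, namely passing to a common finite-depth prefix of $\tree$ dominating both offending prefixes and using transitivity of $\prefixTree$. Your explicit construction of that common prefix as the truncation of $\tree$ at the maximum depth is a slightly more concrete justification than the paper's bare existence claim, but the argument is otherwise identical.
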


The following theorem provides a topological characterisation of safety and liveness
for probabilistic systems, which can be seen as a conservative extension of the results in~\cite{Manolios2003LCS}.
\begin{theorem}\label{thm:safety and liveness characterisation}
\ 
  \begin{enumerate}
  \item $P$ is a safety property iff $P=\closure(P)$.
  \item $P$ is a liveness property iff $\closure(P)=\alltrees^\omega$.
  \end{enumerate}
\end{theorem}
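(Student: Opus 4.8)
The plan is to prove the two equivalences in Theorem~\ref{thm:safety and liveness characterisation} directly from the definitions, using the fact (Lemma~\ref{lem:topological closure}) that $\closure$ is a topological closure operator, in particular that $P \subseteq \closure(P)$ always holds.

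\medskip
\noindent\textbf{Part 1 (safety).}
For the forward direction, suppose $P$ is a safety property. Since $P \subseteq \closure(P)$ by the closure axioms, it suffices to show $\closure(P) \subseteq P$. Take $\tree \in \closure(P)$; by definition of $\closure$, every $\tree_1 \in \finitePrefix(\tree)$ admits some $\tree_2 \in P$ with $\tree_1 \prefixTree \tree_2$. But this is exactly the right-hand side of the ``iff'' in Definition~\ref{def:safety}, so safety of $P$ gives $\tree \in P$. For the converse, suppose $P = \closure(P)$. We must verify the biconditional in Definition~\ref{def:safety} for an arbitrary $\tree \in \alltrees^\omega$. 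If $\tree \in P$, then for each $\tree_1 \in \finitePrefix(\tree)$ we have $\tree_1 \prefixTree \tree$ and $\tree \in P$, so we may take $\tree_2 := \tree$; this gives the right-hand side. If conversely the right-hand side holds, then $\tree \in \closure(P)$ by definition, hence $\tree \in P$ since $P = \closure(P)$. This closes Part~1.

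\medskip
\noindent\textbf{Part 2 (liveness).}
Suppose $P$ is a liveness property. We show $\closure(P) = \alltrees^\omega$; since $\closure(P) \subseteq \alltrees^\omega$ trivially, we need $\alltrees^\omega \subseteq \closure(P)$. Fix $\tree \in \alltrees^\omega$ and any $\tree_1 \in \finitePrefix(\tree) \subseteq \alltrees^*$. By Definition~\ref{def:liveness} there is $\tree_2 \in P$ with $\tree_1 \prefixTree \tree_2$, which is precisely the condition required for $\tree \in \closure(P)$. Conversely, suppose $\closure(P) = \alltrees^\omega$, and let $\tree_1 \in \alltrees^*$ be arbitrary. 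The point is to manufacture some total $\tree$ that has $\tree_1$ as a prefix — e.g., extend every leaf of $\tree_1$ by an infinite path (repeating the leaf's label with an edge of probability $1$), yielding a total PT $\tree \in \alltrees^\omega$ with $\tree_1 \in \finitePrefix(\tree)$. Since $\tree \in \closure(P)$, the defining property of $\closure$ applied to this particular prefix $\tree_1$ yields $\tree_2 \in P$ with $\tree_1 \prefixTree \tree_2$, establishing Definition~\ref{def:liveness}.

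\medskip
I expect the only non-routine point to be the converse direction of Part~2: one needs to know that every finite-depth PT can actually be completed to a total PT (so that $\finitePrefix$ of something is nonempty and contains the given $\tree_1$), which requires the explicit stuttering-extension construction sketched above rather than an appeal to Lemma~\ref{lem:topological closure}. Everything else is a direct unfolding of the definitions of $\closure$, safety, and liveness, together with the already-established axiom $P \subseteq \closure(P)$.
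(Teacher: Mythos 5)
Your proof is correct and follows essentially the same route as the paper's: both parts are a direct unfolding of the definitions of $\closure$, safety, and liveness, using $P\subseteq\closure(P)$ from Lemma~\ref{lem:topological closure}. The only difference is cosmetic: in the converse of Part~2 the paper argues by contraposition and simply asserts the existence of a total extension of the given $\tree_1\in\alltrees^*$, whereas you argue directly and explicitly supply the stuttering-extension construction witnessing that such a total PT exists --- a small but welcome extra bit of care on a point the paper leaves implicit.
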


Theorem~\ref{thm:safety and liveness characterisation} asserts that a property is safe iff its closure coincides with itself. 
A property $P$ is live iff the closure of $P$ equals $\alltrees^\omega$, i.e., the set of all total PTs.

\begin{remark}\label{rm:decomposition}
From these results, it follows that $P\cup\comp{\closure(P)}$ is a liveness property for any $P$.
Using Lemma~\ref{lem:topological closure}, we have
$\closure(P\cup\comp{\closure(P)}) =
\closure(P)\cup\closure(\comp{\closure(P)})\supseteq\closure(P)\cup\comp{\closure(P)}=\alltrees^\omega$.  
Therefore $\closure(P\cup\comp{\closure(P)})=\alltrees^\omega$.
By Theorem~\ref{thm:safety and liveness characterisation}, it follows that $P\cup\comp{\closure(P)}$ is 
a liveness property.
\end{remark}

Theorem~\ref{thm:safety and liveness characterisation} and
Remark~\ref{rm:decomposition} provide the basis for a decomposition
result stating that every property can be represented as an
intersection of a safety and liveness property. 
\begin{proposition}[Decomposition proposition]\label{prop:decompose}
For any property $P\subseteq\alltrees^\omega$, $P=\closure(P)\cap(P\cup\comp{\closure(P)})$.
\end{proposition}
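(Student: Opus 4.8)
The plan is to obtain this directly from the fact, just established in Lemma~\ref{lem:topological closure}, that $\closure$ is a topological closure operator on the space $(\alltrees^\omega,\powerset(\alltrees^\omega))$, together with the abstract identity $\tco(C)\cap(C\cup\comp{\tco(C)})=C$ supplied by the (Manolios--Trefler) topological closure lemma. Instantiating that identity with $X:=\alltrees^\omega$, $\tco:=\closure$ and $C:=P$ yields $\closure(P)\cap(P\cup\comp{\closure(P)})=P$, which is exactly the claim. At the level of citation the proposition is thus a one-liner; the only thing to verify is that the hypotheses of the abstract lemma are met, and this is precisely the content of Lemma~\ref{lem:topological closure}.

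For completeness I would also record the short self-contained argument, which bypasses the abstract lemma. For the inclusion $P\subseteq\closure(P)\cap(P\cup\comp{\closure(P)})$: any $\tree\in P$ trivially lies in $P\cup\comp{\closure(P)}$, and it lies in $\closure(P)$ because for every $\tree_1\in\finitePrefix(\tree)$ the tree $\tree$ itself witnesses $\exists\tree_2\in P.\,\tree_1\prefixTree\tree_2$; hence $P\subseteq\closure(P)$, which is just property~2 of a closure operator read off from Definition~\ref{def:closure linear}. For the reverse inclusion: if $\tree\in\closure(P)\cap(P\cup\comp{\closure(P)})$, then $\tree\in\closure(P)$ excludes $\tree\in\comp{\closure(P)}=\alltrees^\omega\setminus\closure(P)$, so the membership $\tree\in P\cup\comp{\closure(P)}$ forces $\tree\in P$.

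The main ``obstacle'' is essentially absent: the real work has already been discharged in proving that $\closure$ is a topological closure operator (Lemma~\ref{lem:topological closure}), and everything remaining is Boolean set algebra. The only subtlety worth stating explicitly is the step $P\subseteq\closure(P)$: here one should note that $\finitePrefix(\tree)$ ranges over \emph{finite-depth} prefixes $\tree_1\in\alltrees^*$, and each such $\tree_1$ is a prefix of $\tree$ itself (Definition~\ref{def:prefix}), so $\tree$ serves as the required extension uniformly in $\tree_1$. After that, the result follows purely from the two inclusions above.
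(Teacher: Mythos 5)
Your proposal is correct and takes essentially the same route as the paper: the paper gives no separate proof of Proposition~\ref{prop:decompose}, treating it exactly as you do — as the instantiation of the identity $\tco(C)\cap(C\cup\comp{\tco(C)})=C$ from the cited Manolios--Trefler lemma with $\tco:=\closure$, $C:=P$, $X:=\alltrees^\omega$, justified by the fact that $\closure$ is a topological closure operator (Lemma~\ref{lem:topological closure}). Your additional self-contained two-inclusion argument is also correct and is a harmless elaboration of the same content.
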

We thus can decompose any property $P$ into the intersection of the
properties $\closure(P)$ and $(P\cup\comp{\closure(P)})$, where
$\closure(P)$ is a safety property by Theorem~\ref{thm:safety and
  liveness characterisation}, and $P\cup\comp{\closure(P)}$ is a
liveness property by Remark~\ref{rm:decomposition}. 
Finally, we study whether safety and liveness properties are closed under conjunction and disjunction.
\begin{lemma}\label{lem:disjunction and conjunction}
Given two properties $P_1$ and $P_2$:
  \begin{enumerate}
  \item Safety properties are closed under $\cap$ and $\cup$;
  \item If $P_1$ and $P_2$ are live with $P_1\cap P_2\neq\emptyset$, so is $P_1\cap P_2$;
  \item If at least one of $P_1$ and $P_2$ is live, so is $P_1\cup P_2$.
  \end{enumerate}
\end{lemma}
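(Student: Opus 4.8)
The plan is to handle the three items separately, using throughout the topological characterisation of Theorem~\ref{thm:safety and liveness characterisation} together with the fact that $\closure$ is a topological closure operator (Lemma~\ref{lem:topological closure}), i.e.\ satisfies the axioms of Definition~\ref{def:topological closure}. Two routine consequences of those axioms are worth isolating first: $\closure$ is \emph{monotone} (if $C\subseteq D$ then $D=C\cup D$, so additivity gives $\closure(D)=\closure(C)\cup\closure(D)\supseteq\closure(C)$), and $\closure$ is \emph{extensive} ($C\subseteq\closure(C)$). For item~1, let $P_1=\closure(P_1)$ and $P_2=\closure(P_2)$ be safety properties. Additivity gives $\closure(P_1\cup P_2)=\closure(P_1)\cup\closure(P_2)=P_1\cup P_2$, so $P_1\cup P_2$ is safe by Theorem~\ref{thm:safety and liveness characterisation}. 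For the intersection, extensivity yields $P_1\cap P_2\subseteq\closure(P_1\cap P_2)$, while monotonicity yields $\closure(P_1\cap P_2)\subseteq\closure(P_1)\cap\closure(P_2)=P_1\cap P_2$; hence $\closure(P_1\cap P_2)=P_1\cap P_2$ is safe. This is just the familiar statement that closed sets are closed under finite unions and arbitrary intersections.

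Item~3 is immediate: if, say, $P_1$ is live then $\closure(P_1)=\alltrees^\omega$, so $\closure(P_1\cup P_2)=\closure(P_1)\cup\closure(P_2)\supseteq\alltrees^\omega$ and $P_1\cup P_2$ is live; equivalently, every $\tree_1\in\alltrees^*$ already has an extension inside $P_1\subseteq P_1\cup P_2$, which is exactly Definition~\ref{def:liveness}.

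Item~2 is where I expect the real difficulty, and it is worth noting up front that the topological picture alone does \emph{not} settle it: an intersection of two dense sets need not be dense even when it is nonempty, so the argument must genuinely use the prefix order on probabilistic trees and the hypothesis $P_1\cap P_2\neq\emptyset$. I would therefore work directly from Definition~\ref{def:liveness}: fix an arbitrary $\tree_1\in\alltrees^*$ and a witness $\tree^*\in P_1\cap P_2$, and aim to construct $\tree_2\in P_1\cap P_2$ with $\tree_1\prefixTree\tree_2$. The natural attempt is to first extend $\tree_1$ to some $\tree_1'\in P_1$ using liveness of $P_1$, then apply liveness of $P_2$ to a suitable finite-depth prefix of $\tree_1'$, and use $\tree^*$ to glue the pieces so that the resulting tree lies in \emph{both} $P_1$ and $P_2$ rather than in only one of them. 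Ensuring that this gluing stays inside $P_1\cap P_2$ — since $P_1$ and $P_2$ are arbitrary tree sets with no assumed closure under grafting — is the step I expect to be the main obstacle, and is really the crux of the lemma; once it is in place, the conclusion follows directly from Definition~\ref{def:liveness}.
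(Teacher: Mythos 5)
Your items~1 and~3 are correct and essentially coincide with the paper's argument: the union half of item~1 and all of item~3 are exactly the paper's appeal to additivity of $\closure$ together with Theorem~\ref{thm:safety and liveness characterisation}, and for the intersection half of item~1 your monotonicity argument ($\closure(P_1\cap P_2)\subseteq\closure(P_1)\cap\closure(P_2)=P_1\cap P_2$) is a slightly slicker route than the paper's, which instead unfolds Definition~\ref{def:safety} directly: take $\tree\in\closure(P_1\cap P_2)$, observe that every finite-depth prefix extends to some $\tree_2\in P_1\cap P_2\subseteq P_i$, and conclude $\tree\in P_i$ from safety of each $P_i$. Either way these parts are fine.

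The genuine gap is item~2: you describe a strategy and then explicitly stop at ``the main obstacle'', so nothing is actually proven there. For comparison, the paper's proof is precisely the grafting construction you sketch: fix $\tree_1\in\alltrees^*$ and a witness $\tree\in P_1\cap P_2$, build $\tree_2\in\alltrees^\omega$ by appending $\tree$ at every leaf of $\tree_1$, and then assert that $\tree_2\in P_1$ and $\tree_2\in P_2$ ``since $P_1$ and $P_2$ are liveness properties''. That assertion is exactly the step you flagged as the crux, and the paper supplies no further justification for it; it does not follow from Definition~\ref{def:liveness}, which only guarantees that $\tree_1$ has \emph{some} extension in each $P_i$, not that this particular grafted extension lands in $P_i$. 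Your scepticism is in fact warranted at the level of arbitrary properties: with $\AP=\{a\}$, let $\tree^*$ be the linear total tree all of whose nodes are labelled $\{a\}$, let $P_1$ consist of $\tree^*$ together with all total trees that are eventually everywhere labelled $\emptyset$, and let $P_2$ be all total trees eventually everywhere labelled $\{a\}$. Both are live (append constant continuations at the leaves), $P_1\cap P_2=\{\tree^*\}\neq\emptyset$, yet $\{\tree^*\}$ is not live, since the one-node tree labelled $\emptyset$ is not a prefix of $\tree^*$. So the ``gluing stays inside $P_1\cap P_2$'' step cannot be supplied for arbitrary $P_1,P_2$; it requires an additional assumption (such as closure of each $P_i$ under appending members of $P_i$ at leaves) that the paper uses implicitly but never states. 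In short: your items~1 and~3 are complete; item~2 is not, and the obstacle you isolated is the real one --- the paper's own proof steps over it rather than resolving it.
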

Lemma~\ref{lem:disjunction and conjunction} provides a means to prove safety and liveness properties
in a compositional way. For instance, in order to prove that $P_1\cap P_2$ is safe,
we can prove whether $P_1$ and $P_2$ are safe or not separately. In case that both $P_1$ and $P_2$
are safe, so is $P_1\cap P_2$.


\subsection{Safety and liveness versus counterexamples}
\label{sec:counterexample}

We conclude this section by providing a relationship between safety and liveness properties and counterexamples.
A property $P$ only has finite counterexamples iff for any MC
$\dtmc\not\models P$, there exists
$\tree_1\in\finitePrefix(\tree(\dtmc))$ with
$\tree_1\not\prefixTree\tree_2$ for any $\tree_2\in P$.  
Conversely, a property $P$ has no finite counterexamples iff for any
MC $\dtmc$ such that $\dtmc\not\models P$, for each
$\tree_1\in\finitePrefix(\tree(\dtmc))$ there exists $\tree_2\in P$
such that $\tree_1\prefixTree\tree_2$, i.e., no finite-depth prefix is
able to violate the property. 
\begin{theorem}\label{thm:counterexample safety and liveness}
\ 
\begin{enumerate}
\item  $P$ is safe iff it only
  has finite counterexamples.
\item  $P$ is live iff
  it has no finite counterexamples.
\end{enumerate}
\end{theorem}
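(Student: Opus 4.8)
\textbf{Proof plan for Theorem~\ref{thm:counterexample safety and liveness}.}

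The plan is to reduce both statements to Theorem~\ref{thm:safety and liveness characterisation} by observing that the notions ``only has finite counterexamples'' and ``has no finite counterexamples'' can be rephrased in terms of the closure operator $\closure$, once one takes care of the quantification over MCs versus the quantification over all total PTs. The key subtlety is that a property is a set of \emph{total} PTs, but not every total PT arises as $\tree(\dtmc)$ for some MC $\dtmc$; conversely, Definition~\ref{def:closure linear} and Definition~\ref{def:safety}/\ref{def:liveness} quantify over all of $\alltrees^\omega$. I expect this mismatch to be the main obstacle, and I would handle it by proving a small auxiliary fact: every finite-depth PT $\tree_1\in\alltrees^*$ is a prefix of $\tree(\dtmc)$ for some MC $\dtmc$ (e.g.\ by turning the leaves of $\tree_1$ into self-looping states), so that the set $\{\tree(\dtmc)\mid\dtmc\text{ an MC}\}$ has exactly the same finite-depth prefixes as $\alltrees^\omega$, namely all of $\alltrees^*$.

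For part~(1), I would argue as follows. Unfolding the definitions, $P$ only has finite counterexamples iff for every MC $\dtmc$ with $\tree(\dtmc)\notin P$ there is some $\tree_1\in\finitePrefix(\tree(\dtmc))$ that extends to no member of $P$; equivalently, every MC $\dtmc$ with $\tree(\dtmc)\notin P$ satisfies $\tree(\dtmc)\notin\closure(P)$, i.e.\ $\tree(\dtmc)\in P$ whenever $\tree(\dtmc)\in\closure(P)$. Since $P\subseteq\closure(P)$ always holds by Lemma~\ref{lem:topological closure}, this says exactly that $P$ and $\closure(P)$ agree on all trees of the form $\tree(\dtmc)$. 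The remaining step is to upgrade this to $P=\closure(P)$ on all of $\alltrees^\omega$: here I would use the auxiliary fact above together with the fact that both $P$ (by Definition~\ref{def:safety}, once we know $P$ is safe) — no, more directly: membership of a total PT $\tree$ in $\closure(P)$ depends only on $\finitePrefix(\tree)$, and every element of $\alltrees^*$ is realised as a prefix of some $\tree(\dtmc)$, which lets one transfer the agreement from the $\tree(\dtmc)$'s to arbitrary total trees. Then $P=\closure(P)$, and Theorem~\ref{thm:safety and liveness characterisation}(1) gives that $P$ is safe; conversely if $P$ is safe then $P=\closure(P)$ and the chain of equivalences runs backwards.

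For part~(2), the unfolding is even more direct: $P$ has no finite counterexamples iff for every MC $\dtmc$ with $\tree(\dtmc)\notin P$ and every $\tree_1\in\finitePrefix(\tree(\dtmc))$ there is $\tree_2\in P$ with $\tree_1\prefixTree\tree_2$. Using the auxiliary realisability fact, ``$\tree_1$ ranges over $\finitePrefix(\tree(\dtmc))$ for some $\dtmc\not\models P$'' can be shown to range over all of $\alltrees^*$ — the only thing to check is that one is not accidentally excluding some finite-depth PT because every MC extending it happens to lie in $P$; but if $P\neq\alltrees^\omega$ one can always steer the extension outside $P$, and if $P=\alltrees^\omega$ the statement ``$P$ is live'' and ``$P$ has no finite counterexamples'' are both vacuously/trivially true. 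Hence $P$ has no finite counterexamples iff $\forall\tree_1\in\alltrees^*.\,\exists\tree_2\in P.\,\tree_1\prefixTree\tree_2$, which is precisely Definition~\ref{def:liveness}. (Equivalently one can phrase this as $\closure(P)=\alltrees^\omega$ and invoke Theorem~\ref{thm:safety and liveness characterisation}(2).) The hard part throughout is the careful bookkeeping of the quantifier ``for some MC $\dtmc\not\models P$'' versus ``for all finite-depth PTs'', which is exactly what the realisability lemma for finite-depth PTs is designed to dispatch.
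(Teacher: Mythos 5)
Your overall strategy is the same as the paper's: both proofs amount to unfolding Definitions~\ref{def:safety} and~\ref{def:liveness} against the definition of (no) finite counterexamples and bridging the quantifier mismatch between ``all MCs'' and ``all total PTs''; whether one phrases this via $\closure$ and Theorem~\ref{thm:safety and liveness characterisation} (as you do) or directly by contraposition (as the paper does) is cosmetic. The paper bridges the mismatch by simply asserting, for an offending total PT $\tree$ (respectively a non-extendable $\tree_1\in\alltrees^*$), the existence of an MC $\dtmc$ with $\tree(\dtmc)=\tree$ (respectively $\tree_1\prefixTree\tree(\dtmc)$). You are right that this realisability deserves to be made explicit, and your plan is sound in outline.

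Two of your bridging steps, however, do not work as written. First, in part~(1) you try to upgrade ``$P$ and $\closure(P)$ agree on all $\tree(\dtmc)$'' to ``$P=\closure(P)$'' using only the lemma that every \emph{finite-depth} PT is a prefix of some $\tree(\dtmc)$. That is not enough: if some total PT $\tree_0$ were not realisable as an unfolding, then $P=\alltrees^\omega\setminus\{\tree_0\}$ would agree with its closure on every $\tree(\dtmc)$ and would still have every finite-depth PT realised as a prefix of some unfolding, yet typically $\tree_0\in\closure(P)\setminus P$, so $P\neq\closure(P)$. What you actually need---and what the paper implicitly uses---is realisability of \emph{total} PTs: given $\tree\in\closure(P)\setminus P$, produce an MC $\dtmc$ with $\tree(\dtmc)=\tree$ and conclude that $\dtmc$ violates $P$ but has no finite counterexample. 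Second, in part~(2) the claim that ``if $P\neq\alltrees^\omega$ one can always steer the extension outside $P$'' is false: take $P$ to be the set of all total PTs whose root is labelled $a$ and $\tree_1$ any finite-depth PT with an $a$-labelled root. The repair is either to observe that when every total extension of $\tree_1$ lies in $P$ the liveness condition for $\tree_1$ holds trivially, or---cleaner, and what the paper does---to argue the $\Leftarrow$ direction by contraposition: if $P$ is not live there is $\tree_1\in\alltrees^*$ with \emph{no} extension in $P$, so any MC $\dtmc$ with $\tree_1\prefixTree\tree(\dtmc)$ automatically violates $P$ and exhibits $\tree_1$ as a finite counterexample. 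With those two corrections your argument goes through and is equivalent to the paper's.
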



Recall that $\Phi=\P{\leq 0.5}{a\U b}$ is a safety property.  
As shown in~\cite{Han2009CGP}, for any MC $\dtmc\not\models\Phi$, there exists a (finite) set of finite paths of $\dtmc$ whose mass probability exceeds 0.5. 
This indicates that $\Phi$ only has finite counterexamples.

\section{Qualitative Properties}
\label{sec:qualitative properties}
\begin{table}
\centering
\caption{Property classification of qualitative PCTL}\label{tab:qualitative}
\begin{tabular}{|c|c|c|c|c|c|}
\hline
\multicolumn{2}{|c|}{Qualitative PCTL} & \multirow{2}{*}{Equivalence}  &
\multicolumn{3}{|c|}{CTL}\\
\cline{1-2}\cline{4-6}formula & here & & formula & \cite{Manolios2001SLB} & \cite{alpern1987recognizing} \\
\hline \hline 
$\P{=1}{\Diamond a}$ & L & $\not\equiv$ & $\forall\Diamond a$ & UL & L\\[1ex]
$\P{>0}{\Diamond a}$ & L & $\equiv$ & $\exists\Diamond a$ & EL & L
\\[1ex]
$\P{>0}{a\U b}$ & X & $\equiv$ & $\exists(a\U b)$ & X & X\\[1ex]
$\P{=1}{\Box a}$ & S & $\equiv$ & $\forall\Box a$ & US & S\\[1ex]
$\P{>0}{\Box a}$ & X & $\not\equiv$ & $\exists\Box a$ & ES & S\\[1ex]
\hline
\end{tabular}
\end{table}

The qualitative fragment of PCTL only contains formulas with probability bounds $\geq 1$ (or $=1$) and $>0$.
Although CTL and qualitative PCTL have incomparable expressive power~\cite{Baier2008PMC}, they have a large fragment in common.
(For finite MCs, qualitative PCTL coincides with CTL under strong fairness assumptions.)
This provides a basis for comparing the property classification
defined above to the existing classification for branching-time
properties~\cite{Manolios2001SLB}. 
A qualitative PCTL-formula $\Phi$ is equivalent to a CTL-formula
$\Psi$ whenever $\dtmc \models \Phi$ iff $\dtmc \models \Psi$, where
the latter is interpreted over the underlying digraph of MC $\dtmc$. 

\begin{example}[Classifying qualitative PCTL versus CTL/LTL] \
\begin{itemize}
\item 
$\P{=1}{\Diamond a}$ and $\forall\Diamond a$.  
Although $\P{=1}{\Diamond a} \not\equiv \forall\Diamond a$, both formulas are liveness properties. 
Recall that $\P{=1}{\Diamond a} \equiv\P{\ge 1}{\top\U a}$, which is a liveness property (see Example~\ref{ex:classification}).
\item 
$\P{>0}{\Diamond a}$ and $\exists\Diamond a$.
As $\P{>0}{\Diamond a}\equiv\P{>0}{\top\U a}$ it follows from Example~\ref{ex:classification} that $\P{>0}{\Diamond a}$ is a liveness property. 
According to~\cite{Manolios2001SLB}, \emph{CTL}-formula $\exists\Diamond a$ is a universally liveness property. 
Note that $\forall\Diamond a$ and $\exists\Diamond a$ coincide in the linear-time setting of~\cite{alpern1987recognizing}.
\item 
$\P{>0}{a\U b}$ and $\exists(a\U b)$.
Note $\P{>0}{a\U b}\equiv\exists(a\U b)$. 
In fact, also their classifications coincide: the \emph{PCTL}-formula $\P{>0}{a\U b}$ is neither safe nor live (see
Example~\ref{ex:classification}), whereas the CTL-formula $\exists(a\U b)$ is also neither safe nor live~\cite{Manolios2001SLB}. 
Similarly, in the linear-time setting, $a \U b$ is neither safe nor live~\cite{alpern1987recognizing}. 
\item 
$\P{=1}{\Box a}$ and $\forall\Box a$.
In this case, $\P{=1}{\Box a}\equiv\forall\Box a$ (see~\cite{Baier2008PMC}). 
Since $\P{=1}{\Box a} \equiv \P{\leq 0}{a \U\neg a}$, it follows from
Example~\ref{ex:classification} that $\P{=1}{\Box a}$ is safe. 
This coincides with the characterisation of $\forall\Box a$ in~\cite{alpern1987recognizing}.
\item 
$\P{>0}{\Box a}$ and $\exists\Box a$.
As shown in~\cite{Baier2008PMC}, $\P{>0}{\Box a}\not\equiv\exists\Box a$.
This non-equivalence is also reflected in the property characterisation. 
Since $\P{>0}{\Box a}\equiv \P{<1}{a\U\neg a}$, it is neither safe nor live (see Example~\ref{ex:classification}).
In contrast, $\exists\Box a$ is classified as a safety property and
existentially safety property in~\cite{alpern1987recognizing}
and~\cite{Manolios2001SLB}, respectively. 
\end{itemize}
\end{example}

Table~\ref{tab:qualitative} summarises the classification where L, S,
and X denote liveness, safety, and other properties respectively,
while the prefixes E and U denote \textit{existentially} and
\textit{universally} 
respectively.  
The second column indicates our characterisation, while the 5th and
6th column present the characterisation of~\cite{Manolios2001SLB}
and~\cite{alpern1987recognizing} respectively. 
Please bear in mind, that~\cite{alpern1987recognizing} considers linear-time properties. 

In conclusion, our characterisation for qualitative PCTL coincides
with that of~\cite{alpern1987recognizing} and \cite{Manolios2001SLB} with the exception of
$\P{>0}{\Box a}$.  
\cite{Manolios2001SLB} considers the branching-time setting, and
treats two types of safety properties: universally safety (such as
$\forall\Box a$) and existentially safety (e.g., $\exists\Box a$).  
The same applies to liveness properties.
Accordingly, \cite{Manolios2001SLB} considers two closure operators:
one using finite-depth prefixes (as in Def.~\ref{def:closure linear})
and one taking non-total prefixes into account. 
The former is used for universally safety and liveness
properties, the latter for existentially safety and
liveness. This explains the mismatches in Table~\ref{tab:qualitative}. 
We remark that our characterisation of qualitative properties will
coincide with~\cite{Manolios2001SLB} by using a variant of $\closure$
that considers non-total prefixes.

\section{Safety PCTL}\label{sec:safety pctl}

In this section, we will provide syntactic characterisations of safety properties in PCTL.
For flat PCTL, in which nesting is prohibited, we present an algorithm to decompose a flat PCTL-formula into a conjunction of a safe and live formula.
Then we provide a sound and complete characterisation for full PCTL.
In both setting, formulas with strict probability bounds are excluded.

\subsection{Flat PCTL}\label{sec:flat pctl}
Here we focus on a flat fragment of PCTL, denoted $\pctlf$, whose syntax is given by the following grammar:
$$
\Phi ::= \P{\bowtie q}{\Phi^a_1\U\Phi^a_2} \mid \P{\bowtie q}{\Phi^a_1\W\Phi^a_2} \mid \P{\bowtie q}{\X\Phi^a} \mid\Phi_1 \wedge \Phi_2 \mid \Phi_1 \vee \Phi_2
$$
with $\bowtie \, \in \{\le,\ge\}$, and $\Phi^a ::= a \mid \neg\Phi^a \mid \Phi^a_1\wedge\Phi^a_2$ is referred to as \emph{literal formulas}. 
The fragment $\pctlf$ excludes nested probabilistic operators as well as strict probability bounds.
Note that by applying the distribution rules of disjunction and conjunction, 
every formula $\Phi$ in $\pctlf$ can be transformed into an equivalent
formula such that all conjunctions are at the outermost level except for those between literal formulas $\Phi^a$. 
Therefore we assume all $\pctlf$-formulas to obey such form.
We provide an algorithm that decomposes a $\pctlf$-formula into a conjunction of two PCTL-formulas, 
one of which is a safety property, while the other one is a liveness property.
$\pctlf$ is closed under taking the closure:
\begin{lemma}\label{lem:closure PCTL}
The closure formula of a $\pctlf$-formula equals:
$$
\begin{array}{rcl}
\closure(\Phi^a) & = & \Phi^a \\
\closure(\P{\bowtie q}{\X\Phi^a}) & = & {\P{\bowtie q}{\X\Phi^a}} \mbox{ for } \bowtie \, \in \{ \leq, \geq \} \\
\closure(\P{\le q}{\Phi^a_1\U\Phi^a_2}) & = & {\P{\le q}{\Phi^a_1\U\Phi^a_2}} \\
\closure(\P{\ge q}{\Phi^a_1\U\Phi^a_2}) & = & {\P{\geq q}{\Phi^a_1\W\Phi^a_2}} \\
\closure(\P{\ge q}{\Phi^a_1\W\Phi^a_2}) & = & {\P{\ge q}{\Phi^a_1\W\Phi^a_2}} \\
\closure(\P{\le q}{\Phi^a_1\W\Phi^a_2}) & = & {\P{\le q}{\Phi^a_1\U\Phi^a_2}} \\
\closure({\Phi_1\lor\Phi_2}) & = & \closure({\Phi_1})\lor\closure({\Phi_2}).
\end{array}
$$
%
\end{lemma}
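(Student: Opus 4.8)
The plan is to establish each clause of Lemma~\ref{lem:closure PCTL} by unfolding the definition of $\closure$ from Def.~\ref{def:closure linear} and reasoning at the level of probabilistic trees and their finite-depth prefixes, using Theorem~\ref{thm:safety and liveness characterisation} (a property is safe iff $\closure(P)=P$) to dispatch the cases where the formula is already safe, and the finite-prefix witness argument to handle the genuinely non-trivial cases. For $\closure(\Phi^a)=\Phi^a$, note that whether a total PT $\tree$ satisfies a literal formula is determined by the label of its root, which is already fixed by any finite-depth prefix of depth $\ge 1$; hence $\Phi^a$ is safe and $\closure(\Phi^a)=\Phi^a$. The same ``determined by a finite prefix'' idea handles $\P{\bowtie q}{\X\Phi^a}$ for both $\bowtie\in\{\le,\ge\}$: the value of $\Prob\{\pi\mid\pi\models\X\Phi^a\}$ depends only on the root's distribution and the labels of its children, all visible in prefixes of depth $2$, so $\P{\bowtie q}{\X\Phi^a}$ is safe. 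For $\P{\le q}{\Phi^a_1\U\Phi^a_2}$ and $\P{\ge q}{\Phi^a_1\W\Phi^a_2}$ we essentially re-use the argument of Example~\ref{ex:classification}: these are safety properties (the bad event, ``finite probability mass on $\Phi^a_1\U\Phi^a_2$ paths exceeds $q$'', resp. its $\W$-dual, is always witnessed by a finite-depth prefix), so by Theorem~\ref{thm:safety and liveness characterisation} their closure equals themselves. The final clause $\closure(\Phi_1\lor\Phi_2)=\closure(\Phi_1)\lor\closure(\Phi_2)$ is exactly property~4 of a topological closure operator (Def.~\ref{def:topological closure}) applied via Lemma~\ref{lem:topological closure}, once we check that $P_{\Phi_1\lor\Phi_2}=P_{\Phi_1}\cup P_{\Phi_2}$, which is immediate from the semantics of disjunction.

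The two interesting identities are $\closure(\P{\ge q}{\Phi^a_1\U\Phi^a_2})=\P{\ge q}{\Phi^a_1\W\Phi^a_2}$ and its mirror $\closure(\P{\le q}{\Phi^a_1\W\Phi^a_2})=\P{\le q}{\Phi^a_1\U\Phi^a_2}$; I will do the first in detail and obtain the second by the $\U/\W$ duality laws stated in Section~\ref{sec:pctl} together with the fact that $\closure(\comp{P})$ relates to $\closure(P)$ only loosely, so actually it is cleaner to prove the second one directly rather than by complementation. For the inclusion $\closure(\P{\ge q}{\Phi^a_1\U\Phi^a_2})\subseteq\P{\ge q}{\Phi^a_1\W\Phi^a_2}$: take a total PT $\tree$ all of whose finite-depth prefixes extend into $\P{\ge q}{\Phi^a_1\U\Phi^a_2}$; I must show $\Prob\{\pi\in\tree\mid\pi\models\Phi^a_1\W\Phi^a_2\}\ge q$. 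Suppose not, so this probability is $<q$; since $\W$-paths are the $\U$-paths plus the paths that stay in $\Phi^a_1\land\neg\Phi^a_2$ forever, and the latter event is a decreasing limit of the cylinder events ``first $n$ states all in $\Phi^a_1\land\neg\Phi^a_2$'', continuity of measure gives a depth $n$ at which the combined mass (finite $\U$-witnesses already seen, plus still-open $\Phi^a_1\land\neg\Phi^a_2$ prefixes) drops below $q$; the depth-$n$ prefix $\tree_1$ then caps the $\U$-probability of any extension below $q$, contradicting that $\tree_1$ extends into $\P{\ge q}{\Phi^a_1\U\Phi^a_2}$. For the reverse inclusion $\P{\ge q}{\Phi^a_1\W\Phi^a_2}\subseteq\closure(\P{\ge q}{\Phi^a_1\U\Phi^a_2})$: given $\tree\models\P{\ge q}{\Phi^a_1\W\Phi^a_2}$ and any finite-depth prefix $\tree_1$, I extend $\tree_1$ by appending, below every leaf of $\tree_1$ that currently lies on an open $\Phi^a_1\land\neg\Phi^a_2$ path, a fresh $\Phi^a_2$-labelled child (and making it absorbing); every other leaf is completed arbitrarily into a total tree. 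In the resulting total PT $\tree_2$, every path that $\W$-satisfied in $\tree$ now $\U$-satisfies (the forever-$\Phi^a_1$ paths get a $\Phi^a_2$ state appended within $\tree_1$'s depth), so $\tree_2\models\P{\ge q}{\Phi^a_1\U\Phi^a_2}$ and $\tree_1\prefixTree\tree_2$.

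The mirror identity $\closure(\P{\le q}{\Phi^a_1\W\Phi^a_2})=\P{\le q}{\Phi^a_1\U\Phi^a_2}$ is handled symmetrically: for ``$\subseteq$'', if all finite prefixes of $\tree$ extend into $\P{\le q}{\Phi^a_1\W\Phi^a_2}$ but $\Prob\{\pi\models\Phi^a_1\U\Phi^a_2\}>q$ in $\tree$, then finitely many finite $\U$-witness prefixes already carry mass $>q$ (this is the key monotone/compactness fact, the same one used in Example~\ref{ex:classification} for $\P{\le 0.5}{a\U b}$), so the prefix containing all of them forces every extension to have $\W$-probability $>q$, a contradiction; for ``$\supseteq$'', given $\tree\models\P{\le q}{\Phi^a_1\U\Phi^a_2}$ and a finite prefix $\tree_1$, extend every open $\Phi^a_1\land\neg\Phi^a_2$ leaf of $\tree_1$ by a fresh $\neg\Phi^a_1\land\neg\Phi^a_2$ absorbing child, killing all potential future $\W$-satisfaction beyond the finite $\U$-witnesses, so that the $\W$-probability of the extension equals the already-realised finite $\U$-mass $\le q$. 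The main obstacle is the first direction of the $\U\rightsquigarrow\W$ case: making rigorous the passage ``$\W$-mass $<q$ in the limit implies some finite prefix already caps $\U$-mass $<q$'' requires correctly splitting the $\W$-event into the $\U$-event and the ``always $\Phi^a_1\land\neg\Phi^a_2$'' event, invoking downward continuity of $\Prob$ for the latter and an $\varepsilon$-argument to combine them; everything else is a routine prefix-surgery construction or a direct appeal to Theorem~\ref{thm:safety and liveness characterisation} and the closure axioms.
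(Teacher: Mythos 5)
Your proposal is correct and follows essentially the same route as the paper: unfold Def.~\ref{def:closure linear}, dispatch the literal, $\X$, $\P{\le q}{\cdot\U\cdot}$ and $\P{\ge q}{\cdot\W\cdot}$ cases by observing they are already safe, prove $\closure(\P{\ge q}{\Phi^a_1\U\Phi^a_2})=\P{\ge q}{\Phi^a_1\W\Phi^a_2}$ by the two prefix-surgery arguments (capping the $\U$-mass of extensions of a deep enough prefix below $q$ for one inclusion, appending $\Phi^a_2$-children to open leaves for the other), and get the disjunction clause from the closure axioms. The only cosmetic differences are that you make the ``deep enough prefix'' step explicit via the decomposition $\W=\U\uplus\Box(\Phi^a_1\land\neg\Phi^a_2)$ and downward continuity of $\Prob$ (where the paper argues informally via finite witnesses of the dual until formula), and that you treat $\P{\le q}{\Phi^a_1\W\Phi^a_2}$ directly where the paper appeals to duality; both are sound.
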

By Lemma~\ref{lem:closure PCTL}, the size of $\closure(\Phi)$ is linear in the size of
$\Phi$ for any $\pctlf$ formula $\Phi$.
In Lemma~\ref{lem:closure PCTL}, we do not define the closure formula
for conjunctions, as in general it does not hold that
$\closure(\Phi_1\land\Phi_2)=\closure(\Phi_1)\land\closure(\Phi_2)$: 
\begin{example}[Closure of conjunctions]
Let $\Phi=\Phi_1\land\Phi_2$ where $\Phi_1=\P{\ge 1}{a\U b}$ and $\Phi_2=\P{\ge 1}{(a\land\neg b)\U(\neg a\land\neg b)}$.
It follows that $\Phi\equiv\bot$. We show that $\closure(\Phi)\neq\closure(\Phi_1)\land\closure(\Phi_2)=\P{\ge 1}{a\W b}\land\P{\ge 1}{(a\land\neg b)\W(\neg a\land\neg b)}$.
Since a PT always staying in $a$-states almost surely is in $\closure(\Phi_1)\land\closure(\Phi_2)$,
$\closure(\Phi_1)\land\closure(\Phi_2)\not\equiv\bot$. However $\closure(\Phi)\equiv\bot$ because $\Phi\equiv\bot$.
\end{example}

Algorithm~\ref{alg:safety and liveness} describes the procedure of decomposition.
It is worth mentioning that given $\Phi\in\pctlf$, Algorithm~\ref{alg:safety and liveness} returns a pair of formulas $(\Phi^s,\Phi^l)$
such that $\Phi\equiv\Phi^s\land\Phi^l$, where $\Phi^s\in\pctlf$, but $\Phi^l$ is not necessary in $\pctlf$.

\begin{algorithm}[!t]
    \caption{$\pctlf$ decomposition}\label{alg:safety and liveness}
    \begin{algorithmic}[1]
    \REQUIRE A $\pctlf$-formula $\Phi$.
    \ENSURE~~\\
        $(\Phi^s,\Phi^l)$ such that $\Phi^s\land\Phi^l\equiv\Phi$ where $\Phi^s$ is a safety property and $\Phi^l$ is a liveness property. \\[1ex]
        \STATE Transform $\Phi$ into an equivalent formula such that $\Phi\equiv\Phi_1\land\Phi_2\land\ldots\land\Phi_n$ where $\Phi_i$ ($1\le i\le n$)
        contains no conjunction operators except between literal formulas;\label{line:distributivity}
        \STATE Let $\Phi^s_i = \closure(\Phi_i)$ for each $1\le i\le n$ (see Lemma~\ref{lem:closure PCTL});\label{line:safety}
        \STATE Let $\Phi^l_i = \Phi_i\lor\neg\Phi^s_i$ for each $1\le i\le n$;\label{line:liveness}
        \STATE Return ($\bigwedge_{1\le i\le n}\Phi^s_i, \bigwedge_{1\le i\le n}\Phi^l_i$).\label{line:conjunction}    
\end{algorithmic}
\end{algorithm}

\begin{theorem}\label{thm:correctness of algorithm}
Algorithm~\ref{alg:safety and liveness} is correct.  
\end{theorem}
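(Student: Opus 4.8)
The plan is to verify the three required properties of the output pair $(\Phi^s,\Phi^l) = (\bigwedge_i \Phi^s_i, \bigwedge_i \Phi^l_i)$ returned by Algorithm~\ref{alg:safety and liveness}: that $\Phi^s$ is a safety property, that $\Phi^l$ is a liveness property, and that $\Phi^s \land \Phi^l \equiv \Phi$. The first step is Line~\ref{line:distributivity}: by the distributivity remarks preceding the algorithm, $\Phi$ is equivalent to $\Phi_1 \land \dots \land \Phi_n$ where each $\Phi_i$ is conjunction-free except within literal formulas, so each $\Phi_i$ is a disjunction of formulas of shape $\P{\bowtie q}{\X\Phi^a}$, $\P{\bowtie q}{\Phi^a_1 \U \Phi^a_2}$, or $\P{\bowtie q}{\Phi^a_1 \W \Phi^a_2}$ (or a bare literal). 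For such $\Phi_i$, Lemma~\ref{lem:closure PCTL} applies and gives $\Phi^s_i = \closure(\Phi_i)$ explicitly as a PCTL formula of linear size.

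For safety: by Theorem~\ref{thm:safety and liveness characterisation}, $\Phi^s_i = \closure(\Phi_i)$ satisfies $\closure(\closure(\Phi_i)) = \closure(\Phi_i)$ (idempotence, Lemma~\ref{lem:topological closure} item 3), hence each $\Phi^s_i$ is a safety property; then by Lemma~\ref{lem:disjunction and conjunction}(1) the finite conjunction $\Phi^s = \bigwedge_i \Phi^s_i$ is again safety. For liveness: by Remark~\ref{rm:decomposition}, each $\Phi^l_i = \Phi_i \lor \comp{\closure(\Phi_i)} = \Phi_i \lor \neg\Phi^s_i$ is a liveness property; then Lemma~\ref{lem:disjunction and conjunction}(2) gives that $\Phi^l = \bigwedge_i \Phi^l_i$ is live, provided $\bigcap_i \Phi^l_i \neq \emptyset$ — and this non-emptiness holds because $\Phi^l \supseteq \Phi$ (see below) and if $\Phi$ were empty the pair is trivially correct, or more simply because each $\Phi^l_i \supseteq \comp{\closure(\Phi_i)} \supseteq \comp{\alltrees^\omega}\dots$; cleaner: since each $\Phi^l_i$ is live it is non-empty, and one exhibits a single tree in all of them, e.g. by a common extension argument, or one just notes $\Phi \subseteq \bigcap_i \Phi^l_i$ and handles $\Phi = \emptyset$ separately.

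For the equivalence $\Phi^s \land \Phi^l \equiv \Phi$: working per conjunct, it suffices to show $\Phi_i \equiv \Phi^s_i \land \Phi^l_i = \closure(\Phi_i) \land (\Phi_i \lor \comp{\closure(\Phi_i)})$, which is exactly the Decomposition Proposition~\ref{prop:decompose} instantiated at $P = \Phi_i$ (using $P \subseteq \closure(P)$ from Lemma~\ref{lem:topological closure} item 2). Taking the conjunction over $i$ and using $\Phi \equiv \bigwedge_i \Phi_i$ yields $\Phi \equiv \bigwedge_i (\Phi^s_i \land \Phi^l_i) = (\bigwedge_i \Phi^s_i) \land (\bigwedge_i \Phi^l_i) = \Phi^s \land \Phi^l$.

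The main obstacle is not any of these individual steps — each is a direct appeal to an already-established lemma — but rather the bookkeeping needed to guarantee that Lemma~\ref{lem:closure PCTL} is actually applicable to every $\Phi_i$, i.e. that Line~\ref{line:distributivity} genuinely produces conjuncts in the required conjunction-free normal form so that $\closure(\Phi_i)$ decomposes over the disjunctions via the last clause of Lemma~\ref{lem:closure PCTL}. I would therefore spend the bulk of the argument carefully justifying the normal-form transformation (distributivity of $\land$ over $\lor$, pushing all non-literal conjunctions to the top level) and confirming closure under $\closure$, and only then invoke Theorem~\ref{thm:safety and liveness characterisation}, Remark~\ref{rm:decomposition}, Proposition~\ref{prop:decompose}, and Lemma~\ref{lem:disjunction and conjunction} to assemble the three claims. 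I would also remark, as the text already flags, that $\Phi^l$ need not lie in $\pctlf$ because of the negation $\neg\Phi^s_i$, but this does not affect correctness since liveness is a semantic property.
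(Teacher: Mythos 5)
Your proposal is correct and follows essentially the same route as the paper's own (very terse) proof: Line~\ref{line:distributivity} is justified by distributivity, Line~\ref{line:safety} by Lemma~\ref{lem:closure PCTL} (together with Theorem~\ref{thm:safety and liveness characterisation}), Line~\ref{line:liveness} by Proposition~\ref{prop:decompose} and Remark~\ref{rm:decomposition}, and Line~\ref{line:conjunction} by Lemma~\ref{lem:disjunction and conjunction}. If anything, you are more careful than the paper, which silently ignores the non-emptiness hypothesis of Lemma~\ref{lem:disjunction and conjunction}(2) that you correctly flag (though your fallback for the case $\Phi\equiv\bot$ is left a bit vague); the substance and structure of the argument are the same.
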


Since line 1 in Algorithm~\ref{alg:safety and liveness} may cause an exponential blow-up by transforming $\Phi$ into an equivalent formula
in conjunctive normal form. 
It follows that Algorithm~\ref{alg:safety and liveness} has an  exponential worst-case time complexity.
 
The reason for not considering formulas with strict bounds can be seen in the following example:
\begin{example}[Strict bounds]\label{ex:non-strict}
Let $\Phi=\P{>0.5}{a\U b}$. 
We show that $\closure(\Phi)$ cannot be represented in \emph{PCTL}. 
Let $\dtmc_1$ be the MC in Fig.~\ref{fig:dtmc}(b). 
Every finite-depth prefix $\tree_1$ of $\tree(\dtmc_1)$ can easily be
extended to a PT $\tree_2\in\Phi$ such that $\tree_1
\prefixTree\tree_2$.  
From Def.~\ref{def:closure linear} it follows $\tree(\dtmc_1)\in\closure(\Phi)$. 
Now consider MC $\dtmc_2$ in Fig.~\ref{fig:dtmc}(a) where we label state $s_1$ with $b$ (rather than $c$). 
Then $\tree(\dtmc_2)\not\in\closure(\Phi)$.
For instance, the finite-depth prefix $\{(1,a),(1,a)(0.5,b),(1,a)(0.5,c)\}$ of $\tree(\dtmc_2)$ 
cannot be extended to a PT in $\Phi$ as the probability of reaching $b$-states via only $a$-states is at most $0.5$. 
Applying \cite[Th.\ 50]{Baier2005CBS}, no {PCTL} $\X$-free formula can distinguish $\dtmc_1$ and $\dtmc_2$, 
as they are \emph{weakly bisimilar} (which is easy to verify).
  
The above arguments indicate that all PTs in which $\neg(a\lor
b)$-states are reached with probability $\ge$ 0.5 in finitely many
steps are not in $\closure(\Phi)$, while PTs where $\neg(a\lor
b)$-states can only be reached with probability $\ge$ 0.5 in
infinitely many steps are in $\closure(\Phi)$.   
However, in order to characterise PTs where $\neg(a\lor b)$-states can
only be reached with probability $\ge$ 0.5 in infinitely many steps,
we need infinitary conjunction of $\X$ operators.   
This is not possible in {PCTL}.
Thus, $\closure(\Phi)$ cannot be represented in {PCTL}.
\end{example}

\subsection{Safety PCTL with Nesting}
\label{sec:embedded PCTL}
In this section we aim to give a sound and complete characterisation of safety properties in PCTL.
That is to say, we will define a fragment of PCTL, that in contrast to
$\pctlf$, contains nesting of probability operators, such that each
formula in that fragment is a safety property. 
We also show the opposite, namely, that every safety property
expressible in PCTL can be expressed as a formula in the provided
logical fragment. 
For the same reasons as explained in Example~\ref{ex:non-strict},
strict probability bounds are excluded. 
The logical fragment is defined as follows.

\begin{definition}[Safety PCTL]\label{def:safty pctl}
Let $\mathcal{F}=\pctls$ denote the \emph{safe fragment} of \emph{PCTL}, defined as the smallest set satisfying:
\begin{enumerate}
\item $\Phi^a\in\mathcal{F}$;
\item If $\Phi\in\mathcal{F}$, then $\P{\ge q}{\X\Phi}\in\mathcal{F}$;
\item If $\Phi_1,\Phi_2\in\mathcal{F}$, then $\Phi_1\land\Phi_2,\Phi_1\lor\Phi_2,\P{\ge q}{\Phi_1\W\Phi_2}\in\mathcal{F}$;
\item If $\neg\Phi_1,\neg\Phi_2\in\mathcal{F}$, then $\P{\le q}{\Phi_1\U\Phi_2}\in\mathcal{F}$.
\end{enumerate}  
\end{definition}

The next result asserts that all properties in $\pctls$ are indeed safety properties according to Def.~\ref{def:safety}.
\begin{theorem}\label{thm:safety pctl}
Every $\pctls$-formula is a safety property.
\end{theorem}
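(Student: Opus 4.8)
The plan is to proceed by structural induction on the definition of $\pctls$, showing for each $\Phi\in\pctls$ that $P_\Phi = \closure(P_\Phi)$, which by Theorem~\ref{thm:safety and liveness characterisation}(1) is equivalent to $P_\Phi$ being a safety property. By Lemma~\ref{lem:topological closure} (item 2, $C\subseteq\tco(C)$) we always have $P_\Phi\subseteq\closure(P_\Phi)$, so the work is entirely in the reverse inclusion $\closure(P_\Phi)\subseteq P_\Phi$: every total PT all of whose finite-depth prefixes extend into $P_\Phi$ must itself lie in $P_\Phi$.

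First I would dispatch the base case: for a literal formula $\Phi^a$, membership of $\tree(\dtmc)$ in $P_{\Phi^a}$ depends only on $L(s_0)$, hence only on the depth-1 prefix, so $\closure(\Phi^a)=\Phi^a$ (this is also the first clause of Lemma~\ref{lem:closure PCTL}). For the inductive cases I would treat the four constructors in turn. For $\P{\ge q}{\X\Phi}$ with $\Phi\in\pctls$: if every finite prefix of $\tree$ extends into the property, then in particular prefixes of depth $\ge 2$ pin down the labels/edge-probabilities of the root's children, and the induction hypothesis applied to each child subtree (using that a child subtree's finite prefixes are prefixes of $\tree$) gives that each child satisfies $\Phi$; summing the edge probabilities over $\Phi$-children then yields $\ge q$. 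For conjunction and disjunction I would use Lemma~\ref{lem:disjunction and conjunction}(1) directly (safety is closed under $\cap$ and $\cup$), together with the induction hypothesis. For $\P{\ge q}{\Phi_1\W\Phi_2}$ with $\Phi_1,\Phi_2\in\pctls$: here the key point is that violating $\W$ from below — having the $\W$-probability drop below $q$ — can always be witnessed by a finite prefix, because the "bad" event is a countable union over finite stems of paths that reach a $\neg\Phi_1\wedge\neg\Phi_2$ state while staying in $\Phi_1\wedge\neg\Phi_2$, so if the total such mass exceeds $1-q$ then finitely much of it already does; I would combine this with the induction hypothesis on $\Phi_1,\Phi_2$ to argue that satisfaction of $\Phi_1$, $\Phi_2$ at the relevant nodes is itself determined by finite prefixes. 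The case $\P{\le q}{\Phi_1\U\Phi_2}$ with $\neg\Phi_1,\neg\Phi_2\in\pctls$ is the dual and is handled by the same finite-witness argument, essentially the reasoning already sketched for $\P{\le 0.5}{a\U b}$ in Example~\ref{ex:classification}, now with $a,b$ replaced by state formulas whose truth is finitely determined by the induction hypothesis.

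The main obstacle I anticipate is the nested $\W$ (and dually $\U$) case: the naive argument "a bad prefix exists" needs the inner formulas $\Phi_1,\Phi_2$ to be evaluated at internal nodes of the tree, but a child subtree of $\tree$ is a \emph{suffix}, not a prefix, of $\tree$, so the induction hypothesis does not apply to it verbatim — I would need to observe that safety (equivalently, being closed under $\closure$) is inherited by suffixes, i.e. that if $P_\Phi$ is safety then "$\Phi$ holds at node $\pi$ of $\tree$" is again determined by the finite prefixes of $\tree$ rooted appropriately at $\pi$, and then carefully interleave the finite-prefix witness for the outer $\W$-probability bound with the finite-prefix witnesses supplied by the induction hypothesis for the finitely many inner-formula obligations that the outer witness touches. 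Making this interleaving uniform — one finite prefix of $\tree$ that simultaneously certifies the probability bound and all needed inner-formula values — is the delicate bookkeeping step, but it is essentially a compactness argument over finitely many finite-depth requirements.
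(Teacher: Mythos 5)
Your proposal is correct and follows essentially the same route as the paper's proof: structural induction with the closure characterisation, Lemma~\ref{lem:disjunction and conjunction} for the Boolean cases, and duality plus finite-witness/finite-counterexample arguments for the $\W$ and $\U$ cases. The ``delicate interleaving'' you flag for inner formulas at internal nodes is exactly what the paper discharges by combining the induction hypothesis with Theorem~\ref{thm:counterexample safety and liveness} to obtain finite-depth witnesses for $\neg\Phi_1$, $\neg\Phi_2$ (resp.\ finite counterexamples for $\Phi_1\lor\Phi_2$) inside a sufficiently deep prefix.
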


The following theorem asserts (in some sense) the converse of
Theorem~\ref{thm:safety pctl}, i.e., all safety properties in PCTL can
be represented by an equivalent formula in $\pctls$. 
\begin{theorem}\label{thm:safety pctl complete}
For every safety property $\Phi$ expressible in \emph{PCTL} (no strict
bounds), there exists $\Phi'\in\pctls$ with $\Phi\equiv\Phi'$. 
\end{theorem}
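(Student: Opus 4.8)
The plan is to start from an arbitrary PCTL formula $\Phi$ (in positive normal form, no strict bounds) that defines a safety property, and to construct an equivalent $\pctls$-formula by structural induction on $\Phi$. The key observation driving the argument is Theorem~\ref{thm:safety and liveness characterisation}(1): $\Phi$ is safe iff $P_\Phi = \closure(P_\Phi)$. So it suffices to show that $\closure(\Phi)$ is always expressible in $\pctls$ whenever $\Phi$ is expressible in PCTL, because then for a safety property $\Phi$ we get $\Phi \equiv \closure(\Phi) \in \pctls$. I would therefore prove the slightly more general statement: for every PCTL formula $\Phi$ (no strict bounds), $\closure(\Phi)$ is equivalent to some $\pctls$-formula, proceeding by induction on the structure of $\Phi$.

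For the base case $\Phi = \Phi^a$ a literal, $\closure(\Phi^a) = \Phi^a \in \pctls$ by clause~1 of Definition~\ref{def:safty pctl} (and the computation already recorded in Lemma~\ref{lem:closure PCTL}). For the inductive step I would treat each top-level operator. Boolean connectives: since $\closure$ distributes over $\cup$ (it is a topological closure operator, Lemma~\ref{lem:topological closure}), $\closure(\Phi_1 \lor \Phi_2) = \closure(\Phi_1) \lor \closure(\Phi_2)$, and both disjuncts are in $\pctls$ by induction, so the disjunction is by clause~3; for conjunction one uses the elementary fact that $\closure(\Phi_1 \land \Phi_2) \subseteq \closure(\Phi_1) \land \closure(\Phi_2)$ together with the fact that if $\Phi_1 \land \Phi_2$ is already safe then $\closure(\Phi_1\land\Phi_2) = \Phi_1 \land \Phi_2$ — but the cleanest route is to observe that if $\Phi$ is safe and $\Phi \equiv \Phi_1 \land \Phi_2$, the closures of the relevant subformulas are themselves safe, hence expressible, and the conjunction lands in $\pctls$ by clause~3. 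For the probabilistic operators, the heart of the argument is a closure computation generalising Lemma~\ref{lem:closure PCTL} to the nested case: $\closure(\P{\ge q}{\X\Phi}) = \P{\ge q}{\X\closure(\Phi)}$ (clause~2), $\closure(\P{\ge q}{\Phi_1 \U \Phi_2}) = \P{\ge q}{\closure(\Phi_1) \W \closure(\Phi_2)}$ and $\closure(\P{\ge q}{\Phi_1 \W \Phi_2}) = \P{\ge q}{\closure(\Phi_1)\W\closure(\Phi_2)}$ (clause~3), and for the $\le$ case, via the $\U/\W$ duality stated in Section~\ref{sec:pctl}, $\closure(\P{\le q}{\Phi_1\U\Phi_2})$ reduces to a $\P{\ge 1-q}{\cdot\W\cdot}$ formula whose inner formulas are complements of subformulas, matching clause~4. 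In each case the inner formulas are handled by the induction hypothesis.

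The main obstacle I expect is the probabilistic-until case with an upper bound and, more generally, making the closure computation go through under nesting. In the flat case (Lemma~\ref{lem:closure PCTL}) the inner formulas are literals, so $\closure$ acts trivially on them; with nesting one must argue that replacing $\Phi_1,\Phi_2$ by $\closure(\Phi_1),\closure(\Phi_2)$ inside a $\U$ or $\W$ exactly produces the closure of the whole formula — this needs a careful analysis of which finite-depth prefixes of a PT can be extended to satisfy the formula, and it crucially relies on the no-strict-bounds restriction (cf.\ Example~\ref{ex:non-strict}, where the closure of a strict-bound until is not even PCTL-expressible). A secondary subtlety is that the induction hypothesis must be stated for \emph{both} $\Phi$ and $\neg\Phi$ (or equivalently for formulas and their positive-normal-form negations), because clause~4 of Definition~\ref{def:safty pctl} refers to $\neg\Phi_1, \neg\Phi_2 \in \pctls$; so I would actually carry a strengthened induction hypothesis ensuring closure of the relevant complemented subformulas simultaneously. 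Finally, one must check that when $\Phi$ is safe the translation yields a genuinely equivalent formula and not merely one that agrees on $\closure(\Phi)$ — but this is immediate from $\Phi \equiv \closure(\Phi)$ for safe $\Phi$.
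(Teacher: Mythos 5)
Your overall strategy---reduce the theorem to the claim that $\closure(\Phi)$ is $\pctls$-expressible for \emph{every} PCTL formula $\Phi$, computed compositionally over the structure of $\Phi$---does not go through, because the compositional closure identities you rely on are false once nesting is allowed, and the paper itself supplies the counterexamples. For conjunction, the example following Lemma~\ref{lem:closure PCTL} exhibits $\Phi_1=\P{\ge 1}{a\U b}$ and $\Phi_2=\P{\ge 1}{(a\land\neg b)\U(\neg a\land\neg b)}$ with $\Phi_1\land\Phi_2\equiv\bot$ (hence safe) but $\closure(\Phi_1)\land\closure(\Phi_2)\not\equiv\bot$; so the recipe of taking closures of the conjuncts and conjoining them outputs a $\pctls$-formula that is \emph{not} equivalent to $\Phi$, and the fallback you sketch (``the closures of the relevant subformulas are themselves safe, hence expressible'') produces exactly the same wrong formula, since safety of $\Phi_1\land\Phi_2$ does not imply safety of $\Phi_1$ or $\Phi_2$. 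For until, the identity $\closure(\P{\ge q}{\Phi_1\U\Phi_2})=\P{\ge q}{\closure(\Phi_1)\W\closure(\Phi_2)}$ fails on Example~\ref{ex:liveness not sound involved}: there $\Phi_1$ is live, so $\closure(\Phi_1)=\alltrees^\omega$ and the right-hand side collapses to $\P{\ge 0.5}{\top\W\closure(\Phi_2)}\equiv\top$, whereas $\P{\ge 0.5}{\Phi_1\U\Phi_2}\equiv\Phi_2=\P{\ge 1}{\Box(\neg a\land b)}$ is already safe and certainly not $\top$. So the ``main obstacle'' you flag is not a technical subtlety to be worked out but an actual failure of the intermediate lemma: the more general statement you propose to prove cannot be established by these identities, and the paper never claims it.

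The paper's proof avoids forming closures of nested formulas altogether. It argues by structural induction in the contrapositive direction: assuming $\Phi$ is safe, it shows that the immediate subformulas must themselves be safe (for $\P{\ge q}{\X\Phi'}$, that $\Phi'$ is safe; for $\P{\le q}{\Phi_1\U\Phi_2}$, that $\neg\Phi_1$ and $\neg\Phi_2$ are safe), using the finite-counterexample/finite-witness characterisation of Theorem~\ref{thm:counterexample safety and liveness}---if a subformula were not safe, one could build a total PT violating $\Phi$ all of whose finite-depth prefixes extend into $\Phi$, contradicting safety of $\Phi$. The induction hypothesis then replaces each (negated) subformula by an equivalent $\pctls$-formula, and $\Phi$ itself lands in $\pctls$ by the matching grammar clause of Definition~\ref{def:safty pctl}; the closure of $\Phi$ is never computed. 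Your observation that the induction hypothesis must also cover negations of subformulas (for clause~4) is correct and is precisely what the paper's $\P{\le q}{\Phi_1\U\Phi_2}$ case does, but the rest of your proposal needs to be restructured along these lines rather than around a closure calculus.
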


Note for any $\Phi\in\pctlf$, $\closure(\Phi)\in\pctlf\cap\pctls$. 
Thus, Algorithm~\ref{alg:safety and liveness} decomposes
$\pctlf$-formula $\Phi$ into a conjunction of a safety and liveness
property such that the safety property is expressed in
$\pctlf\cap\pctls$. 

\section{Liveness PCTL}\label{sec:liveness pctl}
In this section we investigate expressing liveness properties in PCTL.
We start with providing a sound characterisation of liveness
properties, that is to say, we provide a logical fragment for liveness
properties. Subsequently, we show that a slight superset of this fragment yields a
complete characterisation of liveness properties expressible in PCTL. 
We then discuss the reasons why, in contrast to safety properties, a
syntactic sound and complete characterisation of PCTL-expressible
liveness properties is difficult to achieve. 
Let us first define the logical fragment $\pctll^{<}$.
\begin{definition}[Liveness PCTL]\label{def:liveness pctl}
  Let $\mathcal{F}=\pctll^{<}$ denote the \emph{live fragment} of \emph{PCTL}, defined as the smallest set satisfying: 
\begin{enumerate}
\item $\top\in\mathcal{F}$ and $\bot\not\in\mathcal{F}$;
\item $\P{\ge q}{\Diamond\Phi^a}\in\mathcal{F}$;
\item If $\Phi_1,\Phi_2\in\mathcal{F}$, then $\Phi_1\land\Phi_2\in\mathcal{F}$;
\item If $\Phi_1\in\mathcal{F}$ or $\Phi_2\in\mathcal{F}$, then 
$\Phi_1\lor\Phi_2,\P{\ge q}{\Phi_1\W\Phi_2}\in\mathcal{F}$;
\item If $\Phi\in\mathcal{F}$, then $\P{\ge q}{\X\Phi}\in\mathcal{F}$;
\item If $\Phi_2\in\mathcal{F}$, then $\P{\ge q}{\Phi_1\U\Phi_2}\in\mathcal{F}$ for any $\Phi_1$.\label{item:liveness U}
\end{enumerate}
\end{definition}

It follows that $\pctll^{<}$-formulas are liveness properties.
\begin{theorem}\label{thm:liveness pctl sound}
Every $\pctll^{<}$-formula is a liveness property.
\end{theorem}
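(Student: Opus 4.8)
The plan is to prove Theorem~\ref{thm:liveness pctl sound} by structural induction on the construction of $\pctll^{<}$, using the topological characterisation of Theorem~\ref{thm:safety and liveness characterisation}(2): it suffices to show $\closure(\Phi)=\alltrees^\omega$ for each $\Phi\in\pctll^{<}$, or equivalently (via Def.~\ref{def:liveness}) that every finite-depth PT $\tree_1\in\alltrees^*$ can be extended to some total $\tree_2\in\alltrees^\omega$ with $\tree_1\prefixTree\tree_2$ and $\tree_2\in P_\Phi$. Working directly with the prefix-extension formulation is cleaner, since it lets us build the witnessing total PT concretely. I would fix an arbitrary $\tree_1\in\alltrees^*$ and, for each syntactic case, describe how to grow $\tree_1$ into a total PT realising $\Phi$.

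The base and single-step cases are the core of the argument. For $\top$ the claim is trivial (any total extension works), and $\bot\notin\pctll^{<}$ so there is nothing to check. For $\P{\ge q}{\Diamond\Phi^a}$ (case 2), extend every leaf of $\tree_1$ by a path leading with probability $1$ into a node labelled to satisfy the literal $\Phi^a$ — concretely, append a $\Phi^a$-labelled state, then loop there; since $\Phi^a$ is a literal it imposes no further reachability obligations, so the resulting total PT reaches a $\Phi^a$-node with probability $1\ge q$. For $\P{\ge q}{\X\Phi}$ (case 5) and $\P{\ge q}{\Phi_1\U\Phi_2}$ with $\Phi_2\in\pctll^{<}$ (case 6), the idea is the same: from every leaf of $\tree_1$ move (in one step, resp.\ immediately) into a subtree that by the induction hypothesis can be taken to satisfy $\Phi$ (resp.\ $\Phi_2$) — so the $\X$ or $\U$ obligation is met with probability $1$, regardless of $\Phi_1$ in the until case. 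Note the subtlety that we must apply the IH to the \emph{suffix} trees hanging below $\tree_1$'s leaves; since a fresh finite prefix (a single root node, or $\tree_1$ truncated at that leaf) is itself in $\alltrees^*$, the IH applies.

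For the Boolean and $\W$ cases one has to be slightly careful about which argument is ``live''. Conjunction (case 3) needs Lemma~\ref{lem:disjunction and conjunction}(2): if $\Phi_1,\Phi_2$ are both live, $\Phi_1\land\Phi_2$ is live provided $P_{\Phi_1}\cap P_{\Phi_2}\neq\emptyset$ — this nonemptiness must be argued separately, e.g.\ by exhibiting a single total PT satisfying both (the ``loop in a suitably labelled state'' witnesses above can typically be engineered to satisfy a whole conjunction simultaneously, since the formulas ultimately reduce to reachability of literals). Disjunction and $\P{\ge q}{\Phi_1\W\Phi_2}$ (case 4) require only that one disjunct / one of $\Phi_1,\Phi_2$ be live: for $\vee$ use Lemma~\ref{lem:disjunction and conjunction}(3); for $\W$, if $\Phi_2$ is live extend as in the until case (reach a $\Phi_2$-subtree with probability $1$), and if $\Phi_1$ is live note that $\P{\ge q}{\Phi_1\W\Phi_2}$ is implied by $\P{\ge 1}{\Box\Phi_1}$-style behaviour only when $\Phi_1$ can be maintained forever — so here the witnessing extension should force $\Phi_1$ to hold along an entire infinite path, which is possible because staying in a single $\Phi_1$-realising configuration forever is itself a live extension.

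The main obstacle I anticipate is precisely this last point together with the nonemptiness side-condition for conjunction: to make the inductive step for $\land$ and for the ``$\Phi_1$ live'' branch of $\W$ go through, one genuinely needs that the live subformulas admit a \emph{common} total model that additionally can be \emph{stabilised} (kept satisfied along a whole path), not merely satisfied somewhere. I would handle this by strengthening the induction hypothesis to: every $\tree_1\in\alltrees^*$ extends to a total $\tree_2\in P_\Phi$ \emph{of a special shape} — say, one that from some depth on consists of self-loops at states whose labels are chosen once and for all — so that two such extensions can be merged and so that $\Phi_1$ persistence is automatic. With that strengthened IH the remaining cases are routine bookkeeping about how probability mass $1$ propagates through $\X$, $\U$, $\W$, and $\Diamond$.
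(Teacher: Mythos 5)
Your overall strategy (structural induction, exhibiting for each $\tree_1\in\alltrees^*$ a total extension in $P_\Phi$) is the paper's, but the construction you give for the $\U$ and $\X$ cases is wrong, and the error is exactly at the point you wave through with ``regardless of $\Phi_1$''. Grafting $\Phi_2$-satisfying subtrees \emph{below the leaves} of $\tree_1$ does not make the \emph{root} of the resulting tree satisfy $\P{\ge q}{\Phi_1\U\Phi_2}$: the until semantics demands $\Phi_1$ at every position strictly before the first $\Phi_2$-position, and the interior nodes of $\tree_1$ are already fixed and need not satisfy $\Phi_1$. Concretely, $\Phi=\P{\ge 1}{c\U(\P{\ge 1}{\Diamond a}\lor b)}$ is in $\pctll^{<}$ (clauses 2, 4 and 6); take $\tree_1=\{(1,d)\}$ with $d$ a label containing none of $a,b,c$ and graft a $b$-labelled infinite chain below the root. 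The resulting tree $d\to b\to b\to\cdots$ has a root satisfying neither $b$ nor $\P{\ge 1}{\Diamond a}$, and its unique path fails $c\U(\P{\ge 1}{\Diamond a}\lor b)$ because position $0$ is not a $c$-state; so your witness is not in $P_\Phi$. The correct move --- the one the paper makes --- is to apply the induction hypothesis to $\tree_1$ \emph{itself}: liveness of $\Phi_2$ yields $\tree_2\in P_{\Phi_2}$ with $\tree_1\prefixTree\tree_2$, and then $\Phi_1\U\Phi_2$ holds at position $0$ on every path, so $\tree_2\in P_\Phi$ (in the example, extend $d$ by an $a$-chain so that the root itself satisfies $\P{\ge 1}{\Diamond a}$). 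The same misplacement breaks your $\X$ case: to obtain $\P{\ge q}{\X\Phi'}$ at the root you must extend the finite subtrees rooted at the \emph{children of the root} into models of $\Phi'$ (possible by the induction hypothesis applied to each of them), not attach fresh $\Phi'$-subtrees under arbitrary leaves of $\tree_1$.

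Your remaining concerns are legitimate but your proposed repair is not. You are right that the conjunction case needs the non-emptiness side condition of Lemma~\ref{lem:disjunction and conjunction}(2), and that the $\W$ case with only $\Phi_1$ live needs $\Phi_1$ to be \emph{maintained} along the extension; the paper handles the latter by appending models of $\Phi_1$ at all leaves and arguing that then every node of the resulting tree satisfies $\Phi_1$, so that $\P{\ge 1}{\Phi_1\W\bot}$, hence $\P{\ge q}{\Phi_1\W\Phi_2}$, holds. However, strengthening the induction hypothesis to witnesses that ``from some depth on consist of self-loops at labels chosen once and for all'' cannot work: $\P{\ge 1}{\P{\ge 1}{\Diamond a}\W\bot}\land\P{\ge 1}{\P{\ge 1}{\Diamond\neg a}\W\bot}$ is in $\pctll^{<}$, yet every model of it must visit both $a$- and $\neg a$-states almost surely from every node and so can never stabilise at a single label. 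Any strengthened hypothesis has to be of the form ``every node of the witness again heads a model of $\Phi$'' rather than a fixed eventual shape.
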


However, the converse direction is not true, i.e., it is not the case
that every liveness property expressible in PCTL can be expressed in
$\pctll^{<}$. 
This is exemplified below. 

\begin{example}[A liveness property not in $\pctll^{<}$]\label{ex:liveness not complete}
Let $\Phi=\P{\ge 1}{\P{\ge 1}{\Diamond a}\U b}$. 
First, observe $\Phi\not\in\pctll^{<}$, since
$b\not\in\pctll^{<}$ according to Def.~\ref{def:liveness pctl}.
On the other hand, it follows that $\Phi$ is a liveness property. 
This can be seen as follows.
Let $\tree_1\in\alltrees^*$ be an arbitrary finite-depth PT. 
By Def.~\ref{def:safety}, it suffices to show that $\tree_1\prefixTree\tree_2$ for some $\tree_2\in\Phi$. 
Such $\tree_2$ can be constructed by extending all leaves in $\tree_1$ with a transition to $(a\land b)$-states with probability 1.
This yields $\tree_2\in\Phi$.
Therefore such $\tree_2\in\Phi$ with $\tree_1\prefixTree\tree_2$ always exists and $\Phi$ is a liveness property.
\end{example}

Example~\ref{ex:liveness not complete} shows that $\pctll^{<}$ is not
complete, i.e., it does not contain all liveness properties
expressible in PCTL.  
The problem is caused by clause~\ref{item:liveness U}) in
Def.~\ref{def:liveness pctl}, where we require that $\Phi_2 \in
\pctll^{<}$, in order for $\P{\ge q}{\Phi_1\U\Phi_2} \in \pctll^{<}$.  
As shown in Example~\ref{ex:liveness not complete},
this requirement is too strict, since it excludes liveness properties like $\P{\ge 1}{\P{\ge 1}{\Diamond a}\U b}$.
Let us now slightly relax the definition of $\pctll^{<}$ by replacing clause~\ref{item:liveness U}) in Def.~\ref{def:liveness pctl} by:
\begin{equation}\label{eq:liveness U}
\begin{aligned}
&\text{If }\Phi_1\in\mathcal{F}\text{ or }\Phi_2\in\mathcal{F}, \text{then }\P{\ge q}{\Phi_1\U\Phi_2}\in\mathcal{F}.
\end{aligned}
\end{equation}
The resulting logical fragment is referred to as $\pctll^{>}$.
This fragment contains all liveness properties expressible in PCTL.
\begin{theorem}\label{thm:liveness pctl complete}
  For any liveness property $\Phi$ expressible in \emph{PCTL}, there exists $\Phi'\in\pctll^{>}$ with $\Phi\equiv\Phi'$.
\end{theorem}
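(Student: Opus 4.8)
The plan is to show that any liveness property $\Phi$ expressible in PCTL (in positive normal form) can be rewritten as an equivalent $\pctll^{>}$-formula, by structural induction on $\Phi$ combined with a normalisation step. The key observation is that, since $\Phi$ is a liveness property, Theorem~\ref{thm:safety and liveness characterisation} gives $\closure(\Phi)=\alltrees^\omega$, i.e.\ $\closure(\Phi)\equiv\top$. Dually, using Proposition~\ref{prop:decompose}, $\Phi\equiv\closure(\Phi)\cap(\Phi\cup\comp{\closure(\Phi)})=\top\cap(\Phi\cup\emptyset)=\Phi$, so this alone does not simplify $\Phi$; instead the decomposition must be applied at the level of the \emph{subformulas} of $\Phi$ that are themselves liveness properties. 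My strategy is: first put $\Phi$ into positive normal form and push conjunctions outward as in Algorithm~\ref{alg:safety and liveness}, so that $\Phi\equiv\bigwedge_i\Phi_i$ where each $\Phi_i$ is conjunction-free except inside literal formulas. By Lemma~\ref{lem:disjunction and conjunction}(2), a conjunction of liveness properties with nonempty intersection is liveness; since $\Phi$ is liveness it is nonempty, hence each conjunct $\Phi_i$ (being a superset of $\Phi$ modulo the other conjuncts) must itself be a liveness property — this needs a small argument, that each $\Phi_i\supseteq\Phi\neq\emptyset$ and $\closure(\Phi_i)\supseteq\closure(\Phi)=\alltrees^\omega$, so $\Phi_i$ is liveness. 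Then it suffices to handle each conjunction-free liveness conjunct separately, since $\pctll^{>}$ is closed under $\land$.

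For a conjunction-free liveness formula $\Phi_i$, I would case-split on its top-level operator. If $\Phi_i$ is a literal $\Phi^a$, then being a liveness property forces $\Phi^a\equiv\top$ (any finite-depth PT must be extendable into $\Phi^a$, and since $\Phi^a$ only constrains the root label this is only possible when $\Phi^a$ is valid), and $\top\in\pctll^{>}$. If $\Phi_i=\Phi_1\lor\Phi_2$, then $\closure(\Phi_i)=\closure(\Phi_1)\lor\closure(\Phi_2)=\alltrees^\omega$; I can then replace $\Phi_i$ by $\closure(\Phi_1)\lor\closure(\Phi_2)$ — wait, this changes the property — so instead I keep $\Phi_i=\Phi_1\lor(\Phi_2\lor\comp{\closure(\Phi_2)})$ is not quite right either. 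The cleaner route: write $\Phi_i\equiv\Phi_i\lor\comp{\closure(\Phi_i)}$, which holds because $\closure(\Phi_i)=\alltrees^\omega$ makes the second disjunct empty; this is trivially true but still not progress. The genuine reduction is to use clause~\eqref{eq:liveness U} and clauses (4),(5) of Def.~\ref{def:liveness pctl}: for $\P{\ge q}{\Phi_1\U\Phi_2}$ to be liveness, I claim it is always possible to make $\Phi_2$ hold on an extension, which pins down no constraint beyond requiring that $\Phi_2$ itself be "reachable as a liveness target"; I then recursively rewrite $\Phi_2$ — and if needed $\Phi_1$ — and argue that either $\Phi_1$ or $\Phi_2$, after rewriting, lies in $\mathcal{F}$, matching~\eqref{eq:liveness U}. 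The analogous argument handles $\W$ and $\X$. For $\P{\ge q}{\Diamond\Phi_2}$ the target $\Phi_2$ can be weakened to a literal: extending with $\Phi_2$-states that are also (say) $a$-labelled shows the liveness only needs the $\Diamond$ of \emph{some} literal, matching clause (2).

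The main obstacle I anticipate is the $\U$ case with a non-literal left operand $\Phi_1$ and a non-liveness right operand $\Phi_2$: showing that if $\P{\ge q}{\Phi_1\U\Phi_2}$ is liveness then (after equivalence-preserving rewriting) one of $\Phi_1,\Phi_2$ falls in $\pctll^{>}$ requires understanding precisely which until-formulas are liveness. Here the trick in Example~\ref{ex:liveness not complete} is instructive: $\P{\ge 1}{\P{\ge 1}{\Diamond a}\U b}$ is liveness because one can extend every leaf into $(a\land b)$-states, making $b$ hold \emph{immediately}; more generally, $\P{\ge q}{\Phi_1\U\Phi_2}$ is liveness exactly when $\Phi_2$ is satisfiable "freshly at a frontier node regardless of the past", which is a property of $\Phi_2$ alone that I must show coincides with (an equivalent of) $\Phi_2$ being liveness in the relaxed sense of~\eqref{eq:liveness U} — or, failing that, that $\Phi_1$ must be $\top$-like. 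I would also need to double-check the subtle interaction where $\Phi_2$ is \emph{not} liveness but $\Phi_1$ is, exploiting that $\P{\ge q}{\top\U\Phi_2}=\P{\ge q}{\Diamond\Phi_2}$ and a weakening of $\Phi_2$ to a literal; the bookkeeping to keep all rewrites equivalence-preserving while landing inside $\pctll^{>}$ is where the real care is required.
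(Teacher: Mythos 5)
There is a genuine gap, and it sits exactly where you flag it: the $\U$ (and $\W$) case. Everything before that is either harmless or unnecessary — the CNF normalisation buys you nothing, since it only rearranges the outermost Boolean level and the probabilistic subformulas retain arbitrary nesting, so the structural induction still has to confront $\P{\ge q}{\Phi_1\U\Phi_2}$ with arbitrary operands. The paper's proof skips the normalisation entirely and runs a direct structural induction with the following invariant: \emph{either} $\Phi$ is not a liveness property, \emph{or} $\Phi\equiv\Phi'$ for some $\Phi'\in\pctll^{>}$. For $\Phi=\P{\ge q}{\Phi_1\U\Phi_2}$ the case split is then a clean dichotomy. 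If $\Phi_1$ (or $\Phi_2$) is live, the induction hypothesis yields an equivalent $\Phi_1'\in\pctll^{>}$, and the \emph{relaxed} clause~(\ref{eq:liveness U}) immediately puts $\P{\ge q}{\Phi_1'\U\Phi_2}$ into $\pctll^{>}$ — no weakening of $\Phi_2$ to a literal, no analysis of which until-formulas are live. If neither operand is live, one exhibits a finite-depth PT $\tree_1\in\alltrees^*$ none of whose extensions lies in $\Phi_1\lor\Phi_2$; since any PT satisfying the until (with a nontrivial bound) must satisfy $\Phi_1\lor\Phi_2$ at the root, no extension of $\tree_1$ satisfies $\Phi$, so $\Phi$ is not live and the claim holds vacuously. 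Your proposal never articulates this second horn, and instead tries to determine \emph{exactly when} $\P{\ge q}{\Phi_1\U\Phi_2}$ is live ("satisfiable freshly at a frontier node regardless of the past"). That is the soundness question — precisely the one the paper argues is as hard as PCTL satisfiability and deliberately avoids. Completeness only needs the contrapositive direction: if the formula cannot be placed in $\pctll^{>}$ after rewriting, show it is not live. Because $\pctll^{>}$ is generous (it accepts the until whenever \emph{either} rewritten operand is in the fragment), the only obstruction is that neither operand is live, and that obstruction kills liveness of the whole formula.

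Two smaller points. First, your disjunction case visibly does not close: note that $\closure(\Phi_1\lor\Phi_2)=\alltrees^\omega$ does \emph{not} force either disjunct to be live (take $a\lor\neg a$), so "each disjunct is live, recurse" is not available; this case needs its own treatment and your proposal leaves it in a self-acknowledged muddle. Second, your observation that each \emph{conjunct} of a live conjunction is live (by monotonicity of $\closure$) is correct and is in fact how the paper's (omitted) conjunction case goes; that part of your plan is fine. But as written the proposal does not constitute a proof: the central $\U$/$\W$ argument is missing, replaced by a statement of the difficulty.
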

$\pctll^{>}$ is a superset of $\pctll^{<}$ and contains all liveness PCTL properties. 
Unfortunately, it also contains some properties which are not live, i.e., it is not sound.
In the example below we show that formulas like $\Phi=\P{\ge
  0.5}{\Phi_1\U\Phi_2}$ cannot be classified easily when $\Phi_1$ is a
liveness property while $\Phi_2$ is not (A live formula with a similar schema is given in
Example~\ref{ex:liveness not complete}).
\begin{example}[Liveness is hard to capture syntactically]\label{ex:liveness not sound involved}
Let $\Phi=\P{\ge 0.5}{\Phi_1\U\Phi_2}$ with $\Phi_1=\P{\ge 1}{\Diamond
  a}\land\P{\ge 1}{\Diamond(\neg a\land\neg b)}$ and $\Phi_2=\P{\ge
  1}{\Box(\neg a\land b)}$. 
Intuitively, $\Phi_1$ requires that $a$-states and $(\neg a\land \neg
b)$-states are each eventually reached almost surely, while $\Phi_2$
requires to  almost surely stay in $(\neg a\land b)$-states. 
By Def.~\ref{def:liveness pctl}, $\Phi_1\in\pctll^{<}$, which implies $\Phi_1\in\pctll^{>}$ and $\Phi\in\pctll^{>}$.
$\Phi$ is however not a liveness property.
We show this by arguing that $\tree_1=\{(1,a)\}$ is not a prefix of any PT in $\Phi$. 
Let $\tree_1\prefixTree\tree_2$.
As $\tree_2\not\in\Phi_2$, $\tree_1$ needs to be extended so as to yield a PT in $\Phi_1$ so as to fulfil $\Phi$.
Since $\Phi_1\land\Phi_2\equiv\bot$ and $a\land(\neg a\land\neg
b)\equiv\bot$, for any $\tree\in\Phi_1$, it follows
$\tree\not\in\Phi_2$ and $\tree\not\in\P{>0}{\X\Phi_2}$. 
$\Phi_1$ thus implies $\neg\Phi$.
Thus $\Phi$ is not live.  

Actually, $\Phi\equiv\Phi_2$, since it is not possible to reach $\Phi_2$-states via only $\Phi_1$-states. 
In order for a PT satisfying $\Phi$, it must satisfy $\Phi_2$ initially. 
Every $\Phi$ can be simplified to an equivalent property not in
$\pctll^{>}$.
\end{example}

In conclusion, formulas like $\Phi=\P{\ge 0.5}{\Phi_1\U\Phi_2}$ are live, provided $\Phi_2$ is live too. 
The difficulty arises when $\Phi_2$ is not live but $\Phi_1$ is. 
Since Examples~\ref{ex:liveness not complete} and \ref{ex:liveness not sound involved} indicate that the liveness of $\Phi_1$ does not
necessarily imply the liveness of $\Phi$.
Whereas the definition of safe PCTL formulas can be done inductively over the structure of the formula, this is not applicable to live PCTL.
For instance, formulas like $\P{\ge 0.5}{\Phi_1\U\Phi_2}$ cannot be categorised as being live (or not) based on the sub-formulas.

It is worth mentioning that membership in $\pctls$ can be determined
syntactically, while this does neither hold for $\pctll^{<}$ nor for
$\pctll^{>}$. 
Since, first of all, we require that $\Phi\not\equiv\bot$ for each $\Phi\in\pctll^{<}$ and $\Phi\in\pctll^{>}$. 
The checking of $\Phi\not\equiv\bot$ relies on PCTL satisfiability
checking, i.e., $\Phi\not\equiv\bot$ if and only if there exists
$\tree\in\alltrees^\omega$ such that $\tree\in\Phi$ ($\Phi$ is
satisfiable).  
PCTL satisfiability has received scant attention, and only partial
solutions are known: \cite{Brazdil2008SPP} considers satisfiability
checking for qualitative PCTL, while~\cite{BertrandFS12} presents an
algorithm for bounded satisfiability checking of bounded PCTL. 
To the best of our knowledge, no algorithm for full PCTL satisfiability checking exists.
Secondly, as indicated in Example~\ref{ex:liveness not sound
  involved}, formulas of the form $\P{\ge q}{\Phi_1\U\Phi_2}$ cannot
be easily classified syntactically. 
In order for $\pctll^{>}$ to solely contain liveness properties,
the condition Eq.~(\ref{eq:liveness U}) should be changed to: $\P{\ge
  q}{\Phi_1\U\Phi_2}\in\mathcal{F}$ iff 
\begin{enumerate}
\item either $\Phi_2\in\mathcal{F}$,
\item or $\Phi_1\in\mathcal{F}$ and $\Phi_1\land\P{\ge q}{\Phi_1\U\Phi_2}\not\equiv\bot$.
\end{enumerate}
The first clause subsumes $\pctll^{<}$, while the second clause requires
that in case only $\Phi_1$ is in $\pctll^{>}$, $\Phi_1\land\P{\ge
  q}{\Phi_1\U\Phi_2}$ must be satisfiable, namely, it is possible to
extend a PT satisfying $\Phi_1$ such that it satisfies $\P{\ge
  q}{\Phi_1\U\Phi_2}$. 

It is not surprising to encounter such difficulties when characterising PCTL liveness.
Even in the non-probabilistic setting, the characterisation of
liveness LTL relies on LTL satisfiability checking and it is (to our
knowledge) still an open problem to provide a both sound and complete
characterisation for liveness in LTL~\cite{Sistla1994SLF} and CTL. 

\begin{remark}
In contrast to Section~\ref{sec:embedded PCTL}, where safety
properties are restricted to non-strict bounds, both $\pctll^{<}$ and
$\pctll^{>}$ can be extended to strict bounds while preserving all
theorems of this section. 
\end{remark}

\section{Characterisation of Simulation Pre-order}\label{sec:simulation}

Simulation is an important pre-order relation for comparing the behaviour of MCs~\cite{JonssonL91}.
Roughly speaking, an MC $\dtmc$ simulates $\dtmc'$ whenever it can mimic all transitions of $\dtmc'$ with at least the same probability.
A logical characterisation of (weak and strong) simulation pre-order relations on MCs has been given in~\cite{Baier2005CBS}.
Baier \emph{et al.}~\cite{Baier2005CBS} use the following safety and liveness fragments of PCTL.
The safety fragment is given by:
\begin{equation}\label{eq:safety}
\Phi ::= a \mid \neg a \mid \Phi_1\land\Phi_2 \mid \Phi_1\lor\Phi_2 \mid \P{\ge p}{\X\Phi} \mid \P{\ge q}{\Phi_1\W\Phi_2},
\end{equation}
while the liveness fragment is defined by:
\begin{equation}\label{eq:liveness}
\Phi ::= a \mid \neg a \mid \Phi_1\land\Phi_2 \mid \Phi_1\lor\Phi_2 \mid \P{\ge p}{\X\Phi} \mid \P{\ge q}{\Phi_1\U\Phi_2}.
\end{equation}
Observe that $\pctls$ subsumes the safety PCTL defined in Eq.~(\ref{eq:safety}).  
In addition, formulas of the form $\P{\le q}{\Phi_1\U\Phi_2}$ belong to $\pctls$, provided $\neg\Phi_1$ and $\neg\Phi_2$ are safety properties. 
The main difference between \cite{Baier2005CBS} and our characterisation is concerned with liveness properties. 
The liveness fragment in Eq.~(\ref{eq:liveness}) is incomparable with both $\pctll^{<}$ and $\pctll^{>}$.
For instance, formulas like $\P{\ge q}{a\U b}$ are live according to
Eq.~(\ref{eq:liveness}), but is neither safe nor live according to our
characterisation. 

Now we demonstrate whether the logical fragment $\pctls$ characterises strong simulations, and similar for the two liveness fragments defined before.
The concept of strong simulation between probabilistic models relies on the concept of \emph{weight function}~\cite{Jones1989PPE,JonssonL91}:
\begin{definition}[Weight function]\label{def:weight function}
Let $\states$ be a set and $R\subseteq \states\times \states$.
A \emph{weight function} for distributions $\mu_1$ and $\mu_2$ with
respect to $R$ is a function $\Delta:\states\times \states\mapsto[0,1]$
satisfying:
\begin{itemize}
\item $\Delta(s_1,s_2)>0$ implies $s_1~R~s_2$,
\item $\mu_1(s_1)=\sum_{s_2\in\states}\Delta(s_1,s_2)$ for any $s_1\in \states$,
\item $\mu_2(s_2)=\sum_{s_1\in\states}\Delta(s_1,s_2)$ for any $s_2\in \states$.
\end{itemize}
We write $\mu_1~\weightFunc~\mu_2$ if there exists a weight function
$\Delta$ for $\mu_1$ and $\mu_2$ with respect to $R$. 
\end{definition}

Strong simulation for MCs is now defined as follows.
\begin{definition}[Strong simulation]\label{def:simulation}
Let $\dtmc=(\states, \AP, \rightarrow, L, s_0)$ be an MC.
$R \subseteq \states \times \states$ is a \emph{strong simulation} iff $s_1~R~s_2$ implies
$L(s_1)=L(s_2)$ and $\mu_1~\weightFunc~\mu_2$, where
$s_i\rightarrow\mu_i$ with $i\in\{1,2\}$.
We write $s_1~\precsim~s_2$ iff there exists a strong simulation
$R$ such that $s_1~R~s_2$.
\end{definition}

In order to give a logical characterisation of $\precsim$ using $\pctls$, we define a pre-order relation on $\pctls$.
Let $s_1~\epctls~s_2$ iff $s_2\models\Phi$ 
implies $s_1\models\Phi$ for every $\Phi\in\pctls$. 
Similarly, $s_1~\epctll^i~s_2$ iff $s_1\models\Phi$ implies $s_2\models\Phi$ for any $\Phi\in\pctll^i$ with $i\in\{1,2\}$. 
The following theorem shows that both $\epctls$ and $\epctlltwo$ can
be used to characterise strong simulation as in~\cite{Baier2005CBS},
while $\epctllone$ is strictly coarser than $\precsim$. 
\begin{theorem}\label{thm:simulation logical}
 $\precsim~=~\epctls~=~\epctlltwo~\subsetneq~\epctllone$.
\end{theorem}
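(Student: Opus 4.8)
The plan is to prove the chain of (in)equalities $\precsim \,=\, \epctls \,=\, \epctlltwo \,\subsetneq\, \epctllone$ by establishing the inclusions cyclically and then exhibiting a separating example for the strict inclusion. First I would recall from~\cite{Baier2005CBS} that $\precsim$ is characterised by the logic of Eq.~(\ref{eq:safety}) (the ``safety fragment'' there) and, by duality, also by Eq.~(\ref{eq:liveness}); our task is to show that enlarging to $\pctls$ on the safety side and to $\pctll^{>}$ on the liveness side does not change the induced preorder, while $\pctll^{<}$ does.

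For $\precsim \,\subseteq\, \epctls$: assuming $s_1 \precsim s_2$, I would prove by induction on the structure of $\Phi \in \pctls$ that $s_2 \models \Phi$ implies $s_1 \models \Phi$. The cases $\Phi^a$ and Boolean combinations are immediate from $L(s_1) = L(s_2)$ and the induction hypothesis. For $\P{\ge q}{\X\Phi}$ and $\P{\ge q}{\Phi_1 \W \Phi_2}$ one uses the weight function $\Delta$ for the one-step distributions together with a lifting of $\precsim$ to paths/schedulers, exactly as in the proof that the Eq.~(\ref{eq:safety}) fragment is preserved by $\precsim$; the genuinely new clause is $\P{\le q}{\Phi_1 \U \Phi_2}$ with $\neg\Phi_1, \neg\Phi_2 \in \pctls$. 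Here I would pass to complements: $s_2 \models \P{\le q}{\Phi_1\U\Phi_2}$ is equivalent (by the duality law in Section~\ref{sec:pctl}) to $s_2 \models \P{\ge 1-q}{(\Phi_1\land\neg\Phi_2)\W(\neg\Phi_1\land\neg\Phi_2)}$, and since $\neg\Phi_1, \neg\Phi_2 \in \pctls$ this weak-until formula is again in $\pctls$ (built from $\pctls$-formulas via clause~3), so the already-treated $\W$ case applies. The reverse inclusion $\epctls \,\subseteq\, \precsim$ follows because $\pctls$ subsumes the Eq.~(\ref{eq:safety}) fragment, and that fragment already characterises $\precsim$ from~\cite{Baier2005CBS}: if $s_1 \epctls s_2$ then a fortiori $s_2 \models \Phi \Rightarrow s_1 \models \Phi$ for all $\Phi$ in Eq.~(\ref{eq:safety}), hence $s_1 \precsim s_2$. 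For $\epctlltwo$ I would argue symmetrically: $\precsim \,\subseteq\, \epctlltwo$ by structural induction on $\pctll^{>} = \pctll^{2}$ (the new clause being $\P{\ge q}{\Phi_1\U\Phi_2}$ with $\Phi_1$ or $\Phi_2$ in the fragment, handled again by dualising to $\W$ or directly by the until semantics and the path lifting of $\precsim^{-1}$), and $\epctlltwo \,\subseteq\, \precsim$ because $\pctll^{>}$ subsumes Eq.~(\ref{eq:liveness}).

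For the strict inclusion $\epctlltwo \,\subsetneq\, \epctllone$: the inclusion $\epctlltwo \,\subseteq\, \epctllone$ is trivial since $\pctll^{<} \subseteq \pctll^{>}$ (fewer formulas to preserve gives a coarser preorder). For strictness I would construct two MCs $\dtmc, \dtmc'$ with states $s, s'$ such that $s \epctllone s'$ but $s \not\epctlltwo s'$, i.e.\ such that every $\pctll^{<}$-formula satisfied by $s$ is satisfied by $s'$, yet some $\pctll^{>}$-formula separates them. The natural candidate is a formula of the shape $\P{\ge q}{\Phi_1 \U \Phi_2}$ in which $\Phi_2 \notin \pctll^{<}$ (so the formula only enters via the relaxed clause Eq.~(\ref{eq:liveness U})) — for instance taking $\Phi_2$ to be some $\P{\le r}{\cdots}$ literal-based formula and $\Phi_1 = \top$, so $\Phi = \P{\ge q}{\Diamond \Phi_2}$ is in $\pctll^{>} \setminus \pctll^{<}$; then pick $s, s'$ differing precisely in the probability of reaching a $\Phi_2$-region, while being indistinguishable by all $\pctll^{<}$-formulas (which cannot ``see'' the $\Phi_2$-target). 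I expect this separating-example construction to be the main obstacle, since one must verify that \emph{no} $\pctll^{<}$-formula distinguishes the two states — this needs the semantic characterisation $\epctllone \,\supseteq\, \precsim \cup (\text{something coarser})$ rather than a brute-force check, so I would lean on the fact that $\epctllone$ is already known to be strictly coarser than $\precsim$ and exhibit $s, s'$ with $s \precsim s'$ failing while $s \epctllone s'$ holds, reusing a standard witness (such as the MCs in Fig.~\ref{fig:dtmc}, or a minor variant) where a $\Diamond$-reachability-lower-bound formula with a nested non-live subformula does the separation.
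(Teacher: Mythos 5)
Your treatment of the safety side is essentially the paper's: $\epctls\subseteq\precsim$ because $\pctls$ subsumes the fragment of Eq.~(\ref{eq:safety}), and the converse inclusions $\precsim\subseteq\epctls,\epctllone,\epctlltwo$ all follow at once from the fact (Prop.~18 and Th.~48 of \cite{Baier2005CBS}) that on MCs without absorbing states $\precsim$ implies full PCTL equivalence, so your structural induction there is correct but unnecessary. The serious gap is your argument for $\epctlltwo\subseteq\precsim$. You justify it ``because $\pctll^{>}$ subsumes Eq.~(\ref{eq:liveness})'', and that is false: literal formulas $a$ and $\neg a$ are not in $\pctll^{>}$ (clause 1 of Def.~\ref{def:liveness pctl} admits only $\top$), hence neither is $\P{\ge q}{a\U b}$; the paper states explicitly in Section~\ref{sec:simulation} that the fragment of Eq.~(\ref{eq:liveness}) is \emph{incomparable} with both $\pctll^{<}$ and $\pctll^{>}$. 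This inclusion is the genuinely hard part of the theorem. The paper's argument first shows that for every state $s_2$ some $\Phi_1\in\pctll^{>}$ exists with $s_2\not\models\Phi_1$ (a case split on whether $s_2$ sits in a BSCC realising every literal), and then exploits that $\pctll^{>}$ is closed under disjunction with \emph{arbitrary} PCTL formulas: if $s_1\not\precsim s_2$ there is a PCTL formula $\Phi_2$ with $s_1\models\Phi_2$ and $s_2\not\models\Phi_2$, and then $\Phi_1\lor\Phi_2\in\pctll^{>}$ still separates the two states, contradicting $s_1\epctlltwo s_2$. Nothing in your proposal supplies a substitute for this step.

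The strictness $\epctlltwo\subsetneq\epctllone$ is also left unproved: you only sketch a candidate witness and yourself flag its verification as the main obstacle, and the shape you suggest (states differing in the probability of reaching a $\Phi_2$-region) gives no handle on checking that \emph{no} $\pctll^{<}$-formula separates them. The paper's witness is concrete and much easier to verify: over $\AP=\{a\}$ take a two-state cycle with $L(s_1)=\{a\}$, $L(s_2)=\emptyset$ and Dirac transitions $s_1\rightarrow s_2\rightarrow s_1$. Then $s_1\not\precsim s_2$ because the labels differ, yet a short structural induction shows that \emph{every} $\pctll^{<}$-formula holds at both states --- the base case $\P{\ge q}{\Diamond\Phi^a}$ is satisfied everywhere since both an $a$-state and a $\neg a$-state are reached with probability one --- so $s_1\epctllone s_2$. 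Combined with $\precsim=\epctlltwo$ this yields the strict inclusion without having to exhibit a distinguishing $\pctll^{>}$-formula at all.
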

The proof of $\epctlltwo~\subseteq~\precsim$ relies on liveness properties expressible in PCTL.
Consequently, $\precsim~=~\epctll$, where $\epctll$ is the pre-order
induced by $\pctll$, i.e., the set of all liveness properties
expressible in PCTL. 

\section{Strong Safety and Absolute Liveness}\label{sec:strong and absolute}

In this section, we characterise strong safety and absolute liveness properties as originated in~\cite{Sistla1985CSL} for LTL. 
In the original setting, a strong safety property $P$ is a safety
property that is closed under stuttering, and is insensitive to the
deletion of states, i.e., deleting an arbitrary number of states from
a sequence in $P$ yields a sequence in $P$. 
(A similar notion also appeared in~\cite{DBLP:journals/ipl/AlpernDS86}.)
We lift this notion to probabilistic trees and provide a sound and
complete characterisation of strong safety (expressible in PCTL). 
In contrast, an absolute liveness property is a liveness property that is insensitive to adding prefixes.
We provide a sound and complete characterisation of absolute liveness
properties, and show that each such property is in fact an almost sure
reachability formula. 

\subsection{Strong Safety Properties}

\begin{definition}[Stuttering]\label{def:stuttering}
PT $\tree_1=\atree[1]$ is a \emph{stuttering} of PT $\tree_2=\atree[2]$ iff for some $\pi_1$ with $\lastState{\pi_1} \, =n$:
$$
\treeNodes_1\setminus\treeNodes_2=\{\pi_1{\concat}n{\concat}\pi_2\mid\pi_1{\concat}\pi_2\in\treeNodes_2\}\text{, and}
$$
\begin{itemize}
\item for any $\pi\in\treeNodes_1$, 
$$
L_1(\pi) = \left\{ \begin{array}{ll}
  L_2(\pi) & \mbox{if } \pi \in \treeNodes_2 \\
  L_2(\pi_1) & \mbox{if } \pi=\pi_1{\concat}n \\
  L_2(\pi_1{\concat}\pi_2) & \mbox{if } \pi=\pi_1{\concat}n{\concat}\pi_2 \\
\end{array} \right.
$$
\item 
for any $\pi,\pi'\in\treeNodes_1$, $\func_1(\pi)(\pi')$ equals
$$
 \left\{ \begin{array}{ll}
  \func_2(\pi)(\pi') & \mbox{if } \pi,\pi' \in W_2 \\
  1 & \mbox{if } \pi=\pi_1, \pi'=\pi_1{\concat}n \\
  \func_2(\pi_1{\concat}\pi_2)(\pi_1{\concat}\pi'_2) & \mbox{if } \pi=\pi_1{\concat}n{\concat}\pi_2, \pi'=\pi_1{\concat}n{\concat}\pi'_2.
\end{array} \right.
$$
\end{itemize}
\end{definition}
Phrased in words, $\tree_1$ is the same as $\tree_2$ except that one
or more nodes in $\tree_2$, such as the last node of $\pi_1$ is
repeated (stuttered) with probability one for all paths in
$\treeNodes_1$ with prefix $\pi_1$.  
Conversely, we can also delete nodes from a PT:
\begin{definition}[Shrinking]\label{def:shrinking}
Let $\tree_1,\tree_2\in\alltrees^\omega$.
PT $\tree_1=\atree[1]$ is a \emph{shrinking} of $\tree_2=\atree[2]$
iff there exists $\pi_1{\concat} n\in\treeNodes_2$ with
$\pi_1\neq\epsilon$ such that 
$$
\treeNodes_1\setminus\treeNodes_2=\{\pi_1{\concat}\pi_2\mid\pi_1{\concat}n {\concat}\pi_2\in\treeNodes_2\}\text{, and}
$$
\begin{itemize}
\item 
for any $\pi\in\treeNodes_1$, 
$$
L_1(\pi) = \left\{ \begin{array}{ll}
 L_2(\pi) & \mbox{if } \pi\in\treeNodes_2 \\
 L_2(\pi_1{\concat}n{\concat}\pi_2) & \mbox{if } \pi=\pi_1{\concat}\pi_2.
\end{array} \right.
$$
\item 
for any $\pi,\pi'\in\treeNodes_1$, $\func_1(\pi)(\pi')$ equals
$$
\!\!\!\!\!\!\!\!\!\!\!\!
\left\{ \begin{array}{ll}
   \func_2(\pi)(\pi') & \mbox{if } \pi,\pi' \in \treeNodes_2 \\
   \func_2(\pi)(\pi_1{\concat}n){\times}\func_2(\pi_1{\concat}n)(\pi_1{\concat}n{\concat}\pi'_2)
        & \mbox{if } \pi=\pi_1, \pi'=\pi_1{\concat}\pi'_2 \\
   \func_2(\pi_1{\concat}n{\concat}\pi_2)(\pi_1{\concat}n{\concat}\pi'_2)
        & \mbox{if } \pi=\pi_1{\concat}\pi_2 \mbox{ and} \\
        & \phantom{\mbox{ifif}} \pi'=\pi_1{\concat}\pi'_2.
    \end{array} \right.
$$
\end{itemize}
\end{definition}
Note that deletion of the initial node is prohibited, as $\pi_1\neq\epsilon$.

\begin{example}[Shrinking and stuttering]\label{ex:stuttering and shrinking}
Let $\tree_1$, $\tree_2$, and $\tree_3$ be the PTs depicted in Fig.~\ref{fig:stuttering and shrinking},
where symbols inside circles denote node labels.
$\tree_2$ is a stuttering PT of $\tree_1$, as in $\tree_2$ the $c$-node is stuttered with probability one.
On the other hand, $\tree_3$ is obtained by deleting the $b$-state from $\tree_1$, such that the probability
from $a$-state to $d$-state and $e$-state equals $0.5{\times}0.4 = 0.2$ and $0.5{\times}0.6=0.3$, respectively. 
Thus, $\tree_3$ is a shrinking PT of $\tree_1$.
\end{example}

\begin{figure}[t]
  \centering
 \scalebox{1}{
 \begin{tikzpicture}[->,>=stealth,auto,node distance=2cm,semithick,scale=0.8,every node/.style={scale=0.8}]
	\tikzstyle{state}=[minimum size=20pt,circle,draw]
	\tikzstyle{stateNframe}=[]
	every label/.style=draw
        \node[stateNframe](d11)                           {$\vdots$};
        \node[state](d1)[above of=d11]                    {$d$};
        \node[state](e1)[right of=d1]                     {$e$};
        \node[stateNframe](d12)[below of=e1]              {$\vdots$};
        \node[state](b1)[above of=d1,xshift=1cm]          {$b$};
        \node[state](a1)[above of=b1,xshift=1cm]          {$a$};
        \node[state](c1)[right of=b1]                     {$c$};
        \node[stateNframe](d13)[below of=c1]              {$\vdots$};
        \node[stateNframe](d21)[right of=d13,yshift=-2cm,xshift=-1cm] {$\vdots$};
        \node[state](d2)[above of=d21]                    {$d$};
        \node[state](e2)[right of=d2]                     {$e$};
        \node[stateNframe](d22)[below of=e2]              {$\vdots$};
        \node[state](b2)[above of=d2,xshift=1cm]          {$b$};
        \node[state](a2)[above of=b2,xshift=1cm]          {$a$};
        \node[state](c2)[right of=b2]                     {$c$};
        \node[state](c21)[below of=c2]                    {$c$};
        \node[stateNframe](d23)[below of=c21]             {$\vdots$};
        \node[stateNframe](d31)[right of=c21,xshift=-1cm]             {$\vdots$};
        \node[stateNframe](d32)[right of=d31,xshift=-1cm]             {$\vdots$};
        \node[stateNframe](d33)[right of=d32,xshift=-1cm]             {$\vdots$};
        \node[state](d3)[above of=d31]                    {$d$};
        \node[state](e3)[above of=d32]                    {$e$};
        \node[state](c3)[above of=d33]                    {$c$};
        \node[state](a3)[above of=e3]                     {$a$};
        \node[stateNframe](n1)[below of=d12,yshift=1cm]   {PT $\tree_1$};
        \node[stateNframe](n2)[below of=d22,yshift=1cm]   {PT $\tree_2$};
        \node[stateNframe](n3)[below of=d32,yshift=-1cm]   {PT $\tree_3$};
	\path 
              (a1)  edge            node [left] {0.5} (b1)
                    edge            node [right]{0.5} (c1)
              (b1)  edge            node [left] {0.4} (d1)
                    edge            node [right]{0.6} (e1)
              (c1)  edge            node        {1}   (d13)
              (d1)  edge            node [left] {1}   (d11)
              (e1)  edge            node [right]{1}   (d12)
              (a2)  edge            node [left] {0.5} (b2)
                    edge            node [right]{0.5} (c2)
              (b2)  edge            node [left] {0.4} (d2)
                    edge            node [right]{0.6} (e2)
              (c2)  edge            node        {1}   (c21)
              (c21) edge            node        {1}   (d23)
              (d2)  edge            node [left] {1}   (d21)
              (e2)  edge            node [right]{1}   (d22)
              (a3)  edge            node [left] {0.2} (d3)
                    edge            node [xshift=-0.1cm,yshift=-0.2cm]  {0.3} (e3)
                    edge            node [right]{0.5} (c3)
              (d3)  edge            node        {1}   (d31)
              (e3)  edge            node        {1}   (d32)
              (c3)  edge            node        {1}   (d33);
 \end{tikzpicture}
 }
  \caption{Illustrating stuttering and shrinking of PTs}\label{fig:stuttering and shrinking}
\end{figure}
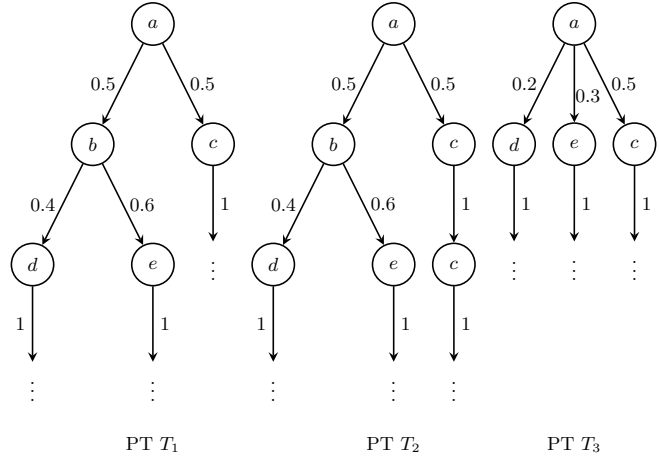

Now we are ready to define the \emph{strong safety} properties in the probabilistic setting:
\begin{definition}[Strong safety]\label{def:strong safety}
A safety property $P$ is a \emph{strong safety} property whenever
\begin{enumerate}
\item 
$P$ is closed under stuttering, i.e, $\tree\in P$ implies $\tree'\in P$,
for every stuttering PT $\tree'$ of $\tree$, and
\item 
$P$ is closed under shrinking, i.e., $\tree\in P$ implies $\tree'\in P$,
for every shrinking PT $\tree'$ of $\tree$.
\end{enumerate}
\end{definition}
Observe that there exist non-safety properties that are closed under stuttering and shrinking. 
For instance $\P{\ge 0.5}{\top\U\P{\ge 1}{\Box a}}$ is not a safety property, but is closed under stuttering and shrinking.
In~\cite{Sistla1994SLF}, it was shown that an LTL formula is a strong
safety property iff it can be represented by an LTL formula in
positive normal form using only $\Box$ operators. 
We extend this result in the probabilistic setting: strong safety properties syntactically
cover more PCTL-formulas than those only containing $\Box$ operators. 

\begin{definition}[Strong safety PCTL]\label{def:strong safety pctl}
Let $\mathcal{F}=\pctlss$ denote the \emph{strong safety fragment} of $\pctls$ such that:
\begin{enumerate}
\item $\Phi^a\in\mathcal{F}$;
\item If $\Phi_1,\Phi_2\in\mathcal{F}$, then
$\Phi_1\land\Phi_2$ and $\Phi_1\lor\Phi_2$ are in $\mathcal{F}$;
\item\label{item:ss U} If $\Phi_1\in\mathcal{F}$ and $\Phi_2\in\mathcal{F}^\Box$, then
  $\P{\ge q}{\Phi_1\W\Phi_2}\in\mathcal{F}$;
\end{enumerate}
where $\mathcal{F}^\Box$ is defined as follows:
\begin{enumerate}
\item If $\Phi_1,\Phi_2\in\mathcal{F}^\Box$, then $\Phi_1\land\Phi_2$ and $\Phi_1\lor\Phi_2$ are in $\mathcal{F}^\Box$;
\item If $\Phi\in\mathcal{F}$, then $\P{\ge 1}{\Box\Phi}\in\mathcal{F}^\Box$.
\end{enumerate}
\end{definition}
Note that by clause~\ref{item:ss U}), $\P{\ge q}{\Box\Phi}$ is a formula in $\pctlss$, provided $\Phi\in\pctlss$.
This follows from the fact that $\P{\ge q}{\Box\Phi}\equiv\P{\ge
  q}{\Phi\W\bot}\equiv\P{\ge q}{\Phi\W\P{\ge 1}{\Box\bot}}$, and
$\P{\ge 1}{\Box\bot}\in\mathcal{F}^\Box$. 
The following result shows that $\pctlss$ is sound and complete, i.e.,
all formulas in $\pctlss$ are strong safety properties and every
strong safety property expressible in PCTL is expressible in
$\pctlss$. 
\begin{theorem}\label{thm:strong sound and complete}
Every $\pctlss$-formula is a strong safety property and for any strong
safety property $\Phi$ expressible in \emph{PCTL}, there exists
$\Phi'\in\pctlss$ with $\Phi\equiv\Phi'$. 
\end{theorem}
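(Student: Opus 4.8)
The statement has two halves—soundness (every $\pctlss$-formula is a strong safety property) and completeness (every PCTL-expressible strong safety property lies in $\pctlss$)—and I would attack them separately. For soundness, first note that $\pctlss\subseteq\pctls$ by construction, so by Theorem~\ref{thm:safety pctl} every $\pctlss$-formula is already a safety property; it remains to show closure under stuttering and shrinking. I would prove this by induction on the structure of $\Phi\in\pctlss$ (simultaneously with the auxiliary claim that formulas in $\mathcal{F}^\Box$ are preserved under stuttering and shrinking). The base case $\Phi^a$ is immediate since literal formulas only constrain the root label, which neither operation changes. The Boolean cases are trivial. The crux is the case $\P{\ge q}{\Phi_1\W\Phi_2}$ with $\Phi_1\in\pctlss$ and $\Phi_2\in\mathcal{F}^\Box$: here I would argue that inserting a stuttered copy of a node (which by the induction hypothesis still satisfies $\Phi_1$, and, crucially, cannot ``leave'' a $\Phi_2$-region since $\Phi_2$ is a $\Box$-formula that holds forever once it holds) does not change the probability of the weak-until event, and symmetrically that contracting a node preserves it because the path measures are related by the multiplication of edge probabilities as spelled out in Def.~\ref{def:shrinking}. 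The fact that $\Phi_2$ is a conjunction/disjunction of $\P{\ge 1}{\Box\Psi}$-formulas is exactly what makes the $\W$-event insensitive to where we ``stop'' along a stuttered or shrunk prefix.

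\textbf{Completeness.} For the converse, let $\Phi$ be any strong safety property expressible in PCTL (with non-strict bounds). By Theorem~\ref{thm:safety pctl complete} there is $\Phi'\in\pctls$ with $\Phi\equiv\Phi'$, so I may assume $\Phi\in\pctls$ from the outset. The plan is to show, by induction on the structure of $\Phi\in\pctls$, that if $\Phi$ (as a property) is closed under stuttering and shrinking then $\Phi$ is equivalent to some $\Phi''\in\pctlss$. The interesting cases are $\P{\ge q}{\X\Phi_1}$ and $\P{\le q}{\Phi_1\U\Phi_2}$: I would argue these cannot be closed under stuttering unless they degenerate. For instance, closure under stuttering forces a $\X$-formula to be equivalent to one without a genuine next-step obligation—stuttering the root duplicates the initial label, so the ``next'' state can be made to satisfy any state-label constraint, which pins down $\P{\ge q}{\X\Phi_1}$ to be either trivially true, trivially false, or expressible via $\Box$; and closure under shrinking similarly collapses $\P{\le q}{\Phi_1\U\Phi_2}$, since deleting states lets one reach $\Phi_2$-states arbitrarily fast, so a nontrivial upper bound on an until-probability is violated by some shrinking unless the formula is equivalent to a $\Box$-style strong safety formula. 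The surviving operators are precisely $\wedge$, $\vee$, and $\P{\ge q}{\Phi_1\W\Phi_2}$ with $\Phi_2$ a $\Box$-formula—matching Def.~\ref{def:strong safety pctl}. One should also use the observed equivalence $\P{\ge q}{\Box\Phi}\equiv\P{\ge q}{\Phi\W\P{\ge 1}{\Box\bot}}$ to see that plain $\Box$-formulas land in $\pctlss$.

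\textbf{Main obstacle.} I expect the hard part to be the completeness direction, specifically turning the informal ``stuttering/shrinking kills $\X$ and $\U$-upper-bounds'' intuition into a rigorous normal-form argument. One has to be careful that closure under these operations is a semantic condition on the \emph{property} $P_\Phi$, not a syntactic one, so the induction needs a strengthened hypothesis that tracks which sub-properties are themselves strong-safety-closed (and which are the ``$\Box$-type'' sub-properties playing the role of $\mathcal{F}^\Box$). Constructing the witness PTs that exhibit a violation of stuttering- or shrinking-closure for an offending sub-formula—while keeping the surrounding formula context satisfied—is the delicate combinatorial core, analogous to the LTL argument of~\cite{Sistla1994SLF} but complicated by branching and by quantitative probability bounds. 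A secondary technical nuisance is handling nested probabilistic operators: a stuttering of the outer tree induces stutterings/shrinkings of the sub-trees relevant to inner $\P{\bowtie q}{\cdot}$ formulas, so the induction must be set up to propagate closure through nesting levels, which is exactly why Def.~\ref{def:strong safety pctl} is phrased with the mutually recursive $\mathcal{F}$ and $\mathcal{F}^\Box$.
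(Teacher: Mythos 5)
Your plan follows essentially the same route as the paper's proof: soundness by structural induction with the key observation that $\mathcal{F}^\Box$-formulas, once satisfied, are satisfied by all suffixes (so deleting the first $\Phi_2$-node in a shrinking cannot break the $\W$-event), and completeness by structural induction showing that $\X$-formulas and $\W$/$\U$-formulas whose ``target'' is not a $\Box$-formula admit stuttering or shrinking counterexamples and hence must degenerate. The only piece you leave implicit that the paper makes explicit is the construction for $\P{\ge q}{\Phi_1\W\Phi_2}$ with $\Phi_2\notin\mathcal{F}^\Box$ (a tree meeting the bound $q$ exactly, whose probability drops strictly below $q$ after shrinking away the first node of a $\Phi_2$-suffix that does not satisfy $\P{\ge 1}{\X\Phi_2}$), which is precisely the delicate witness construction you flag as the main obstacle.
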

The question whether all formulas in $\pctlss$ can be represented by
an equivalent formula in positive normal form using only
$\Box$-modalities is left for future work.

\subsection{Absolute Liveness Properties} 
Now we introduce the concepts of \emph{stable} properties and \emph{absolute liveness} properties.
Intuitively, a property $P$ is stable, if for any $\tree\in P$, all suffixes of $\tree$ are also in $P$.
This intuitively corresponds to once $P$ is satisfied, it will never be broken in the future.
\begin{definition}[Stable property]\label{def:stable}
$P$ is a \emph{stable property} iff $\tree\in P$ implies $\tree'\in P$,
for every suffix $\tree'$ of $\tree$.
\end{definition}
A property $P$ is an absolute liveness property, if for any $\tree\in P$, all PTs which have $\tree$ as a suffix are also in $P$.
Colloquially stated,  once $P$ is satisfied at some point, $P$ was satisfied throughout the entire past. 
\begin{definition}[Absolute liveness]\label{def:absolute liveness}
  $P$ is an \emph{absolute liveness} property iff $P\neq\emptyset$ and 
  $\tree'\in P$ implies $\tree\in P$, for every suffix $\tree'$ of $\tree$.
\end{definition}

Rather than requiring every absolutely liveness property to be a liveness property by definition, this follows implicitly:
\begin{lemma}\label{lem:absolute is liveness}
Every absolute liveness property is live.
\end{lemma}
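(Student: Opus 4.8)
The plan is to unfold the definition of absolute liveness and show directly that the liveness condition of Definition~\ref{def:liveness} is met. Let $P$ be an absolute liveness property. By definition $P\neq\emptyset$, so pick some $\tree^*\in P$. Now take an arbitrary finite-depth PT $\tree_1\in\alltrees^*$; we must exhibit $\tree_2\in P$ with $\tree_1\prefixTree\tree_2$.

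\medskip

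\noindent\textbf{Key construction.} The idea is to graft $\tree^*$ onto every leaf of $\tree_1$. Concretely, I would build $\tree_2$ by taking $\tree_1$ and, at each leaf $\pi$ of $\tree_1$, attaching a copy of $\tree^*$ (renaming its nodes by prepending $\pi$, and connecting $\pi$ to the root of the copy via an appropriate edge of probability $1$ in the spirit of the stuttering/suffix constructions). Then $\tree_1\prefixTree\tree_2$ holds by construction: $\treeNodes_1\subseteq\treeNodes_2$, and the labelling and edge functions of $\tree_2$ restricted to $\treeNodes_1$ agree with those of $\tree_1$. Moreover $\tree_2$ is total, since every maximal path of $\tree_2$ eventually enters a copy of the total tree $\tree^*$. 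Finally, by the suffix relation (Definition~\ref{def:suffix}) applied along any node $\pi_1$ of $\tree_1$ reaching such a grafted copy, $\tree^*$ is a suffix of $\tree_2$; since $\tree^*\in P$ and $P$ is an absolute liveness property (closed under "un-suffixing", i.e., $\tree'\in P$ and $\tree'$ a suffix of $\tree$ imply $\tree\in P$), we conclude $\tree_2\in P$.

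\medskip

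\noindent\textbf{Main obstacle.} The delicate point is the bookkeeping in the grafting construction: one must check that attaching $\tree^*$ at a leaf of $\tree_1$ genuinely produces an element of $\alltrees^\omega$ that has $\tree_1$ as a prefix \emph{and} has $\tree^*$ as a suffix in the precise sense of Definition~\ref{def:suffix} (the node-renaming $\pi_2\mapsto\pi_1\concat\pi_2$, and the matching of $L$ and $\func$). A minor subtlety is the edge connecting the leaf $\pi$ of $\tree_1$ to the root of the copy of $\tree^*$: since PT roots have $|\pi|=1$ and are reached from $\epsilon$ with probability $1$, the grafting just reattaches that root as a child of $\pi$ with edge probability $1$, which does not disturb the prefix relation on $\tree_1$ (leaves of $\tree_1$ have no outgoing edges in $\tree_1$, so $\func_2\restrict(\treeNodes_1\times\treeNodes_1)=\func_1$ still holds). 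Once this is set up carefully, applying Definition~\ref{def:absolute liveness} finishes the argument, and by Definition~\ref{def:liveness} $P$ is live.
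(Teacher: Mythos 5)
Your proof is correct and uses essentially the same idea as the paper: construct a total PT extending the given finite-depth $\tree_1$ that has some member of $P$ as a suffix, and invoke closure under ``un-suffixing'' to conclude that this extension lies in $P$. The paper phrases the argument contrapositively, but the underlying grafting construction is identical, and your direct version is if anything cleaner.
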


For transition systems, there is a close relationship between stable and absolute liveness properties~\cite{Sistla1994SLF}. 
A similar result is obtained in the probabilistic setting:
\begin{lemma}\label{lem:stable and absolute}
  For any $P\neq\alltrees^\omega$, $P$ is a stable property iff $\comp{P}$ is
  an absolute liveness property.
\end{lemma}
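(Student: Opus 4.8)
The plan is to prove the two implications separately, and in each direction to unwind the definitions of stable property (Def.~\ref{def:stable}), absolute liveness (Def.~\ref{def:absolute liveness}) and the suffix relation (Def.~\ref{def:suffix}). The crucial observation tying the two notions together is that the suffix relation is, in a suitable sense, ``self-dual'' under complementation: $\tree'$ is a suffix of $\tree$ is the very same relation that appears (with $\tree$ and $\tree'$ swapped in their roles as member/non-member) in the two definitions. Concretely, $P$ is stable iff $\tree\in P$ and ``$\tree'$ a suffix of $\tree$'' imply $\tree'\in P$; dually, $\comp{P}$ is absolute liveness iff $\comp{P}\neq\emptyset$ and ``$\tree'\in\comp{P}$ and $\tree'$ a suffix of $\tree$'' imply $\tree\in\comp{P}$. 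Rewriting the latter contrapositively: $\tree\in P$ and $\tree'$ a suffix of $\tree$ imply $\tree'\in P$ — which is exactly stability of $P$. So the propositional core of the equivalence is a one-line contraposition.

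First I would handle the set-nonemptiness side conditions, which is where the hypothesis $P\neq\alltrees^\omega$ is needed. If $P$ is stable then $\comp{P}=\alltrees^\omega\setminus P\neq\emptyset$ precisely because $P\neq\alltrees^\omega$, so the clause $\comp{P}\neq\emptyset$ in Def.~\ref{def:absolute liveness} holds. Conversely, if $\comp{P}$ is an absolute liveness property then by definition $\comp{P}\neq\emptyset$, hence $P\neq\alltrees^\omega$ — so the side condition on the statement is automatically consistent in that direction, and no extra argument is needed. (One should note in passing that every total PT does have suffixes — at least itself, taking $\pi_1$ to be the root in Def.~\ref{def:suffix} — and that suffixes of total PTs are again total, so the implications quantify over a nonempty, well-defined set of trees in $\alltrees^\omega$; this is a routine check against Def.~\ref{def:suffix}.)

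Then, for the forward direction, assume $P$ is stable and let $\tree'\in\comp{P}$ be a suffix of some $\tree\in\alltrees^\omega$; I must show $\tree\in\comp{P}$. Suppose not, i.e.\ $\tree\in P$. Since $\tree'$ is a suffix of $\tree$ and $P$ is stable, $\tree'\in P$, contradicting $\tree'\in\comp{P}$. Hence $\tree\in\comp{P}$, and together with $\comp{P}\neq\emptyset$ this gives that $\comp{P}$ is absolute liveness. The backward direction is the mirror image: assuming $\comp{P}$ is absolute liveness, take $\tree\in P$ and a suffix $\tree'$ of $\tree$; if $\tree'\notin P$ then $\tree'\in\comp{P}$, so by the absolute liveness closure property $\tree\in\comp{P}$, contradicting $\tree\in P$; hence $\tree'\in P$, so $P$ is stable.

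I do not expect any real obstacle here; the lemma is essentially a contraposition once the definitions are lined up. The only point deserving care is the bookkeeping around the nonemptiness clauses and the ambient set $\alltrees^\omega$ (that $\comp{P}$ is taken relative to $\alltrees^\omega$, and that the suffix-closure quantifiers range over total PTs), together with confirming from Def.~\ref{def:suffix} that the suffix relation on $\alltrees^\omega$ is exactly the relation used in both Def.~\ref{def:stable} and Def.~\ref{def:absolute liveness} — this is what makes the dualisation exact rather than merely formal. If instead the two definitions used subtly different ``suffix'' notions, the proof would break, so that compatibility check is the one substantive step.
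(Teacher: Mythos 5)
Your proposal is correct and follows essentially the same route as the paper: both directions are the one-line contrapositive obtained by lining up Def.~\ref{def:stable} and Def.~\ref{def:absolute liveness} through the shared suffix relation, with the hypothesis $P\neq\alltrees^\omega$ used only to discharge the nonemptiness clause of absolute liveness. Your treatment of the side conditions is, if anything, slightly more explicit than the paper's.
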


\begin{definition}[Absolute liveness PCTL]\label{def:absolute liveness pctl}
Let $\mathcal{F}=\pctlal$ denote the \emph{absolute liveness fragment} of \emph{PCTL} such that:
\begin{enumerate}
\item $\top\in\mathcal{F}$ and $\bot\not\in\mathcal{F}$;
\item If $\Phi_1,\Phi_2\in\mathcal{F}$, then
$\Phi_1\land\Phi_2$, $\Phi_1\lor\Phi_2$, $\P{>0}{\Phi_1\W\Phi_2}\in\mathcal{F}$;
\item If $\Phi_2\in\mathcal{F}$, then $\P{> 0}{\X\Phi_2},\P{>0}{\Phi_1\U\Phi_2}\in\mathcal{F}$;
\item\label{item:al} If $\Phi_1\in\mathcal{F}$ with $\neg\Phi_1\land\Phi_2\equiv\bot$,
  then $\P{> 0}{\Phi_1\U\Phi_2},\P{>0}{\Phi_1\W\Phi_2}\in\mathcal{F}$.
\end{enumerate}
\end{definition}

According to the definition of $\pctlal$, $\pctlal$ only contains qualitative properties
with bound $>0$. 
By clause~\ref{item:al}), $\P{>0}{\Diamond\Phi}$ is
an absolute liveness formula for any $\Phi\not\equiv\bot$, while
$\P{>0}{\Box\Phi}$ is an absolute liveness formula provided $\Phi$ is
so too. Note that $\pctlal$ is a proper subset of $\pctll^{>}$ but not of $\pctll^{<}$, e.g., 
formulas like $\P{>0}{\Phi_1\U\Phi_2}$ with $\Phi_1=\P{>0}{\Diamond b}$ and $\Phi_2=\P{\ge 0.5}{a\U b}$ is in $\pctlal$
because $\Phi_1\in\pctlal$ and $\neg\Phi_1\land\Phi_2\equiv\bot$. 
However $\Phi\not\in\pctll^{<}$, since $\Phi_2\not\in\pctll^{<}$.
\begin{theorem}\label{thm:absolute sound and complete}
Every formula in $\pctlal$ is an absolute liveness property, and for
every absolute liveness property $\Phi$ expressible in \emph{PCTL}, there
exists $\Phi'\in\pctlal$ with $\Phi\equiv\Phi'$. 
\end{theorem}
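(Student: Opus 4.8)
The proof splits into the usual two directions: soundness (every $\pctlal$-formula is an absolute liveness property) and completeness (every absolute liveness property expressible in PCTL lies in $\pctlal$, up to $\equiv$).

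For soundness, the plan is to proceed by induction on the structure of $\Phi\in\pctlal$, showing that $P_\Phi$ is non-empty and closed under ``adding prefixes'', i.e.\ if $\tree'\in P_\Phi$ is a suffix of $\tree$ then $\tree\in P_\Phi$. Non-emptiness is immediate from clause~1 and is preserved by all the constructors (for the $\U$/$\W$/$\X$ cases one extends a model of the relevant subformula by a suitable prefix). For the suffix-closure part, the base case $\top$ is trivial. Conjunction and disjunction are routine from the induction hypothesis. The interesting cases are the probabilistic ones with bound $>0$: here I would use the observation that if some node $\pi_1$ in $\tree$ has as its induced subtree a suffix satisfying the relevant positive-probability path property (reachability via $\U$, or $\W$, or one $\X$-step), then the path leading from the root of $\tree$ to $\pi_1$ followed by that witnessing behaviour still has positive probability in $\tree$ — crucially because $>0$ is insensitive to the (possibly small but positive) probability mass of the finite prefix $\pi_1$. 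Clause~\ref{item:al} needs the extra side condition $\neg\Phi_1\land\Phi_2\equiv\bot$: this guarantees that a state satisfying $\Phi_2$ also satisfies $\Phi_1$, so that $\Phi_2$-states encountered along the prefix do not ``accidentally'' break the until/weak-until obligation — they can be absorbed into the $\Phi_1$-prefix. Then one invokes Lemma~\ref{lem:absolute is liveness} only implicitly, since absolute liveness is what we are proving directly.

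For completeness, the key idea is to combine Lemma~\ref{lem:stable and absolute} with the already-established characterisation of safety/strong-safety or with a direct normal-form argument, but the cleanest route is the one hinted at in the introduction: \emph{every absolute liveness property expressible in PCTL is equivalent to a positive reachability formula} $\P{>0}{\Diamond\Phi^a}$ — or, more generally, to a finite Boolean combination that stays inside $\pctlal$. So the plan is: (i) take an absolute liveness $\Phi$ expressible in PCTL; (ii) since $\Phi$ is live it is satisfiable, and since it is absolute-live, membership of $\tree(\dtmc)$ in $P_\Phi$ depends only on ``reachable behaviour'' and is monotone under prepending prefixes; (iii) exploit that $\Phi$ is insensitive to prefixes to argue that whether $\dtmc\models\Phi$ is determined by whether some reachable subtree of $\tree(\dtmc)$ lies in $P_\Phi$ — i.e.\ $\Phi\equiv\P{>0}{\Diamond\Phi}$ in the sense that $\dtmc\models\Phi$ iff $\dtmc\models\P{>0}{\Diamond\Psi}$ for an appropriate $\Psi$; (iv) by a fixpoint/closure argument reduce $\Psi$ to a \emph{state} condition, i.e.\ a literal formula $\Phi^a$, using that PCTL over the qualitative fragment with $>0$ has limited distinguishing power for this purpose, and that the side condition in clause~\ref{item:al} lets us fold everything back into the grammar. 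Finally one checks the resulting formula is syntactically in $\pctlal$.

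The main obstacle I expect is step~(iv) of completeness: reducing an arbitrary PCTL-expressible absolute liveness property to the syntactic shape allowed by $\pctlal$, in particular showing that no genuinely quantitative bound (some $\P{\ge q}{\cdot}$ with $0<q<1$) can occur at the top level of such a property. Intuitively this is because prefix-insensitivity forces the top-level probability to be either $0$ or achievable with arbitrarily high — hence, by taking limits of prefixes, essentially positive-but-unbounded — probability, collapsing any $\ge q$ threshold to $>0$; but making this precise requires a careful argument about how $\tree(\dtmc)$'s measure changes (or fails to change in the relevant direction) under the stuttering/prefixing operations, analogous to but more delicate than the corresponding LTL argument in~\cite{Sistla1994SLF}. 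I would handle it by first proving the ``positive reachability'' normal form as a separate lemma and then reading off membership in $\pctlal$ from that normal form.
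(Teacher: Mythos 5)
Your soundness direction is essentially the paper's proof: a structural induction on $\Phi\in\pctlal$ showing suffix-closure, with the $>0$ bound absorbing the (positive) probability mass of any prepended prefix, and the side condition $\neg\Phi_1\land\Phi_2\equiv\bot$ used in the $\U/\W$ case to conclude that a tree in $\P{>0}{\Phi_1\U\Phi_2}$ is already a $\Phi_1$-tree, so that (by absolute liveness of $\Phi_1$) every node on the path from the new root down to the old one satisfies $\Phi_1$ and the until-obligation survives the prepending. Your gloss that $\Phi_2$-states ``do not break'' the obligation is slightly off --- $\Phi_2$-states would only help an until; what the side condition really buys is that the root of the witnessing tree, and hence all its ancestors in the extended tree, are $\Phi_1$-nodes --- but the structure is the same as the paper's.

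For completeness you diverge from the paper, which proves the claim by a long structural induction on arbitrary PCTL formulas (showing case by case that a formula with a genuine quantitative top-level bound $\ge q$, $q>0$, can always be ``diluted'' by a prefix and hence is never absolutely live, or else collapses into the $\pctlal$ grammar). Your route via the positive-reachability characterisation $\Phi\equiv\P{>0}{\Diamond\Phi}$ (the paper's Theorem~\ref{thm:absolute characterisation}, whose proof is purely semantic and independent of this theorem, so there is no circularity) is viable and in fact shorter --- but your step~(iv) is both unnecessary and false. You cannot in general reduce the argument of $\Diamond$ to a literal formula $\Phi^a$: e.g.\ $\P{>0}{\Diamond\P{\ge 1}{\Box a}}$ is absolutely live but is not equivalent to positive reachability of any literal. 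The point you are missing is that no such reduction is needed: clause~\ref{item:al} of Def.~\ref{def:absolute liveness pctl} places \emph{no} restriction on $\Phi_2$, so $\P{>0}{\Diamond\Phi}=\P{>0}{\top\U\Phi}\in\pctlal$ for \emph{any} PCTL formula $\Phi$ with $\Phi\not\equiv\bot$, since $\top\in\pctlal$ and $\neg\top\land\Phi\equiv\bot$. Hence once you have $\Phi\not\equiv\bot$ (from Lemma~\ref{lem:absolute is liveness}) and $\Phi\equiv\P{>0}{\Diamond\Phi}$, you may take $\Phi'=\P{>0}{\Diamond\Phi}$ and you are done; your concluding worry about eliminating quantitative thresholds is then moot, because the normal form already carries only a $>0$ bound at the top level. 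Replace step~(iv) by this one-line observation and your completeness argument goes through.
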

Inspired by~\cite{Sistla1994SLF}, we provide an alternative characterisation of absolute liveness properties.
\begin{theorem}\label{thm:absolute characterisation}
\emph{PCTL}-formula $\Phi$ is an absolute liveness property iff
$\Phi\not\equiv\bot$ and $\Phi\equiv\P{>0}{\Diamond\Phi}$.
\end{theorem}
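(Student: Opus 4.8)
The plan is to prove the two implications separately, using the sound-and-complete characterisation from Theorem~\ref{thm:absolute sound and complete} together with the suffix-closure definition of absolute liveness (Def.~\ref{def:absolute liveness}). For the ``if'' direction, assume $\Phi\not\equiv\bot$ and $\Phi\equiv\P{>0}{\Diamond\Phi}$. I would first argue that $\P{>0}{\Diamond\Phi'}\in\pctlal$ whenever $\Phi'\not\equiv\bot$: this is an instance of clause~\ref{item:al}) of Def.~\ref{def:absolute liveness pctl}, taking $\Phi_1=\top\in\pctlal$ (so that $\neg\Phi_1\land\Phi'\equiv\bot$ trivially) and noting $\P{>0}{\Diamond\Phi'}=\P{>0}{\top\U\Phi'}$. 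Hence $\P{>0}{\Diamond\Phi}\in\pctlal$, and since $\Phi\equiv\P{>0}{\Diamond\Phi}$, the soundness part of Theorem~\ref{thm:absolute sound and complete} gives that $\Phi$ is an absolute liveness property. (One subtlety: membership in $\pctlal$ is only defined for syntactic formulas, but absolute liveness is a semantic property closed under $\equiv$, so the equivalence $\Phi\equiv\P{>0}{\Diamond\Phi}$ transports the conclusion.)

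For the ``only if'' direction, suppose $\Phi$ is an absolute liveness property expressible in PCTL. By Def.~\ref{def:absolute liveness}, $\Phi\neq\emptyset$, i.e.\ $\Phi\not\equiv\bot$, which is the first conjunct. It remains to show $\Phi\equiv\P{>0}{\Diamond\Phi}$. The inclusion $\Phi\subseteq\P{>0}{\Diamond\Phi}$ is immediate: any $\tree\in\Phi$ satisfies $\Diamond\Phi$ at its root with probability $1>0$ (the root itself, viewed as the trivial suffix, already lies in $\Phi$). For the converse inclusion $\P{>0}{\Diamond\Phi}\subseteq\Phi$, take any total PT $\tree$ with $\tree\models\P{>0}{\Diamond\Phi}$. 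Then there is some node $\pi_1\in\treeNodes$ reachable with positive probability such that the suffix PT $\tree'$ of $\tree$ rooted at $\pi_1$ satisfies $\tree'\in\Phi$. Since $\Phi$ is absolutely live, $\tree'\in\Phi$ and $\tree'$ a suffix of $\tree$ implies $\tree\in\Phi$ by Def.~\ref{def:absolute liveness}. Hence $\tree\in\Phi$, completing the equivalence.

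The main point requiring care is the correspondence between the ``$\Diamond$'' in the PCTL formula $\P{>0}{\Diamond\Phi}$ and the suffix relation on probabilistic trees: $\tree\models\P{>0}{\Diamond\Phi_0}$ should hold exactly when some PT reachable with positive probability from the root of $\tree$ — equivalently, by Def.~\ref{def:suffix}, some suffix of $\tree$ whose ``entry node'' $\pi_1$ has positive cylinder probability — lies in $P_{\Phi_0}$. I would make this explicit as a small observation (essentially unfolding the PT semantics of MCs from Def.~\ref{def:unfolding} and the semantics of $\P{>0}{\Diamond\cdot}$), since both implications hinge on it. The remaining steps — that $\top\in\pctlal$, that $\P{>0}{\Diamond\Phi}=\P{>0}{\top\U\Phi}$ fits clause~\ref{item:al}), and that a PT is always (trivially) a suffix of itself — are routine. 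I expect no genuine obstacle beyond phrasing this tree/logic dictionary cleanly; the heavy lifting has already been done in Theorem~\ref{thm:absolute sound and complete}.
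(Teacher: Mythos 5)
Your proof is correct, and the ``only if'' direction is essentially the paper's own argument: $\Phi\neq\emptyset$ gives $\Phi\not\equiv\bot$, the inclusion $\Phi\subseteq\P{>0}{\Diamond\Phi}$ is immediate, and the reverse inclusion follows because a positive probability of $\Diamond\Phi$ forces some suffix of $\tree$ to lie in $\Phi$, whence $\tree\in\Phi$ by Def.~\ref{def:absolute liveness}. Where you genuinely diverge is the ``if'' direction. The paper proves it directly by contraposition from Def.~\ref{def:absolute liveness}: if $\Phi$ were not absolutely live there would be $\tree\in\Phi$ that is a suffix of some $\tree'\not\in\Phi$; since every suffix is reached with positive probability, $\tree'\in\P{>0}{\Diamond\Phi}$, contradicting $\Phi\equiv\P{>0}{\Diamond\Phi}$. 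You instead observe that $\P{>0}{\Diamond\Phi}=\P{>0}{\top\U\Phi}$ lies in $\pctlal$ by clause~\ref{item:al}) of Def.~\ref{def:absolute liveness pctl} (with $\Phi_1=\top$) and then invoke the soundness half of Theorem~\ref{thm:absolute sound and complete}. This is legitimate and non-circular (that theorem is proved independently and earlier), and it matches the paper's own remark that $\P{>0}{\Diamond\Phi}\in\pctlal$ for $\Phi\not\equiv\bot$; but it leans on the full structural-induction soundness proof of $\pctlal$ where a two-line direct argument suffices, and it silently inherits the caveat that clause~\ref{item:al}) is only semantically sound when the until-target is satisfiable --- a condition you correctly guard with the hypothesis $\Phi\not\equiv\bot$, so no actual gap results. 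Your closing observation that the whole theorem hinges on the dictionary between $\P{>0}{\Diamond\cdot}$ and positive-probability suffixes is exactly the point the paper also relies on (implicitly) in both directions.
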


\section{Conclusions}
\label{sec:conclusion}

This paper presented a characterisation of safety and liveness properties for fully probabilistic systems.
It was shown that most facts from the traditional linear-time~\cite{alpern1987recognizing} and branching-time setting~\cite{Manolios2003LCS} are preserved.
In particular, every property is equivalent to the conjunction of a safety and liveness property.
Various sound PCTL-fragments have been identified for safety, absolute liveness, strong safety, and liveness properties.
Except for liveness properties, these logical characterisation are all complete.
Fig.~\ref{fig:summary} summarises the PCTL-fragments and their relation, where $L_1\rightarrow L_2$ denotes that $L_2$ is a sub-logic of $L_1$.\footnote{
Here, it is assumed that $\pctll^{<}$ and $\pctll^{>}$ also support strict bounds.
}
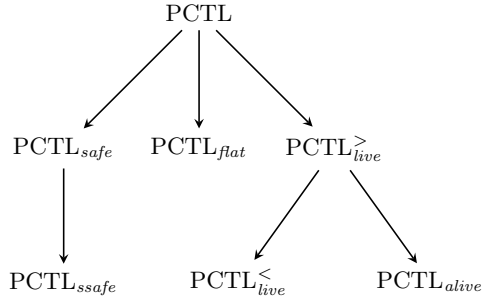
\begin{figure}[h]
  \centering
\scalebox{1}{ 
\begin{tikzpicture}[->,>=stealth,auto,node distance=1.8cm,semithick]
	\tikzstyle{state}=[]
	\tikzstyle{stateNframe}=[]
	every label/.style=draw
        \node[state](ss){$\pctlss$};
        \node[state](safe)[above of=ss]{$\pctls$};
        \node[state](flat)[right of=safe]{$\pctlf$};
        \node[state](live2)[right of=flat]{$\pctll^{>}$};
        \node[state](pctl)[above of=flat]{PCTL};
        \node[state](live1)[below left of=live2,yshift=-0.5cm]{$\pctll^{<}$};
        \node[state](alive)[below right of=live2,yshift=-0.5cm]{$\pctlal$};
	\path (pctl)  edge            node {}   (safe)
                      edge            node {}   (flat)
                      edge            node {}   (live2)
	      (safe)  edge            node {}   (ss)
	      (live2) edge            node {}   (live1)
                      edge            node {}   (alive);
 \end{tikzpicture}
}
\caption{Overview of relationships between PCTL fragments}\label{fig:summary}
\end{figure}

There are several directions for future work such as extending the
characterisation to Markov decision processes, considering
fairness~\cite{Volzer2005DF}, finite executions~\cite{Maier04}, and
more expressive logics such as the probabilistic
$\mu$-calculus~\cite{DBLP:journals/corr/abs-1211-1511}. 


\newpage
\appendix
\section{Proofs}

\begin{replemma}{lem:topological closure}
The function $\closure$ is a topological closure operator
  on $(\alltrees^\omega, \powerset(\alltrees^\omega))$.
\end{replemma}
\begin{proof}
  We show that $\closure$ satisfies the four properties in
  Def.~\ref{def:topological closure}.
  \begin{enumerate}
  \item $\closure(\emptyset)=\emptyset$.
   This case is straightforward from Def.~\ref{def:closure linear}.
  \item $P\subseteq\closure(P)$. 
   We show that for each $\tree\in P$, 
   $\tree\in\closure(P)$. According to Def.~\ref{def:closure
     linear}, $\tree\in\closure(P)$ iff for each
   $\tree_1\in\finitePrefix(\tree)$, there
   exists $\tree_2\in P$ such that
   $\tree_1\prefixTree\tree_2$. By choosing
   $\tree$ as $\tree_2$, we obtain $\tree\in\closure(P)$.
 \item $\closure(P)=\closure(\closure(P))$.
   The previous case indicates that
   $\closure(P)\subseteq\closure(\closure(P))$, so we only need to
   show $\closure(\closure(P))\subseteq\closure(P)$. Suppose that
   $\tree\in\closure(\closure(P))$. By Def.~\ref{def:closure linear}, 
   for each $\tree_1\in\finitePrefix(\tree)$, there exists
   $\tree_2\in\closure(P)$ such that
   $\tree_1\prefixTree\tree_2$, i.e., $\tree_1\in\finitePrefix(\tree_2)$. 
   As $\tree_2\in\closure(P)$, there exists $\tree'_2\in P$
   such that $\tree_1\prefixTree\tree'_2$. Thus $\tree\in\closure(P)$.
 \item$\closure(P\cup P')=\closure(P)\cup\closure(P')$. The proof of
   $\closure(P)\cup\closure(P')\subseteq\closure(P\cup P')$ is
   straightforward from Def.~\ref{def:closure linear}. For the other
   direction, let $\tree\in\closure(P\cup P')$. We proceed
   by contraposition and assume $\tree\not\in\closure(P)$ and
   $\tree\not\in\closure(P')$.  First, (i) $\tree\not\in\closure(P)$
   implies there exists $\tree_1\in\finitePrefix(\tree)$ such that
   there does not exist $\tree_1'\in P$ satisfying
   $\tree_1\prefixTree\tree_1'$. Similarly, (ii)
   $\tree\not\in\closure(P')$ implies there exists
   $\tree_2\in\finitePrefix(\tree)$ such that there does not exist
   $\tree_2'\in P'$ satisfying $\tree_2\prefixTree\tree_2'$. Let
   $\tree_3\in\finitePrefix(\tree)$ such that
   $\tree_1\prefixTree\tree_3$ and $\tree_2\prefixTree\tree_3$.
   Since $\tree_1$ and $\tree_2$ are finite-depth prefixes of $\tree$, such
   $\tree_3$ always exists. By construction, (i) implies that there
   does not exist $\tree_1'\in P$ such that
   $\tree_3\prefixTree\tree_1'$, and (ii) implies that there does not
   exist and $\tree_2'\in P'$ such that
   $\tree_3\prefixTree\tree_2'$. This implies that
   $\tree\not\in\closure(P\cup P')$, a contradiction.
 \end{enumerate}
  \vspace*{-0.5cm}
\end{proof}

\begin{reptheorem}{thm:safety and liveness characterisation}
\  
 \begin{enumerate}
  \item $P$ is a safety property iff $P=\closure(P)$.
  \item $P$ is a liveness property iff $\closure(P)=\alltrees^\omega$.
  \end{enumerate}
\end{reptheorem}
\begin{proof} \
\begin{enumerate}
\item
\begin{itemize}
\item[$\Rightarrow$] Let $P$ be a safety property.   Lemma~\ref{lem:topological closure}  implies $P\subseteq\closure(P)$. For the other direction let $\tree\in\closure(P)$. According to
  Def.~\ref{def:closure linear}, for all
  $\tree_1\in\finitePrefix(\tree)$, there
  exists $\tree_2\in P$ such that
  $\tree_1\prefixTree\tree_2$. By Def.~\ref{def:safety} it follows $\tree\in P$.
\item[$\Leftarrow$]Let $P=\closure(P)$ and
  $T\in\alltrees^\omega$. First assume $T\in P$ and let
  $\tree_1\in\finitePrefix(\tree)$. Then there exists $\tree_2:=T\in
  P$ with $\tree_1\prefixTree\tree_2$.  Moreover, assume that for all
  $\tree_1\in\finitePrefix(\tree)$, there exists $\tree_2\in P$ with
  $\tree_1\prefixTree\tree_2$.  Def.~\ref{def:closure linear} implies
  that $T\in\closure(P)=P$. Therefore, $P$ is a safety property.
\end{itemize}  
\item
\begin{itemize}
\item[$\Rightarrow$]
Assume $P$ is a liveness
  property.  Obviously it holds
  $\closure(P)\subseteq\alltrees^\omega$. For the other direction let
  $\tree\in\alltrees^\omega$. Fix arbitrary
  $\tree_1\in\finitePrefix(\tree)$. Since $P$ is a liveness property,  there exists $\tree_2\in P$ such that $\tree_1\prefixTree\tree_2$. Thus  $\tree\in\closure(P)$.
  \item[$\Leftarrow$] Assume $\closure(P)=\alltrees^\omega$. By contraposition.
  Suppose that $P$ is not a liveness property. By Def.~\ref{def:liveness},
  there exists $\tree_1\in\alltrees^*$ such that $\tree_1\not\prefixTree\tree_2$ for
  all $\tree_2\in P$. Let $\tree$ be a tree such that $\tree_1\in\finitePrefix(\tree)$. Then we have
  $\tree\not\in\closure(P)$ according to Def.~\ref{def:closure linear}. This
  contradicts the assumption that $\closure(P)=\alltrees^\omega$.
\end{itemize}
\end{enumerate} 
\vspace*{-0.5cm}
\end{proof}

\begin{replemma}{lem:disjunction and conjunction}
Given two properties $P_1$ and $P_2$:
  \begin{enumerate}
  \item Safety properties are closed under $\cap$ and $\cup$;
  \item If $P_1$ and $P_2$ are live with $P_1\cap P_2\neq\emptyset$, so is $P_1\cap P_2$;
  \item If at least one of $P_1$ and $P_2$ is a liveness property, so is $P_1\cup P_2$.
  \end{enumerate}
\end{replemma}
\begin{proof} \ 
  \begin{enumerate}
  \item
    Let $P_1$ and $P_2$ be safety properties.  According to
    Def.~\ref{def:topological closure} and Lemma~\ref{lem:topological
      closure}, $\closure(P_1\cup
    P_2)=\closure(P_1)\cup\closure(P_2)=P_1\cup P_2$. Therefore by
    Theorem~\ref{thm:safety and liveness characterisation}, $P_1\cup
    P_2$ is a safety property. We now prove that $P_1\cap P_2$ is
    also a safety property. Clearly $P_1\cap
    P_2\subseteq\closure(P_1\cap P_2)$. We prove that
    $\closure(P_1\cap P_2)\subseteq P_1\cap P_2$. Let
    $\tree\in\closure(P_1\cap P_2)$. Thus for arbitrary
    $\tree_1\in\finitePrefix(\tree)$, there exists $\tree_2\in P_1\cap
    P_2$ with $\tree_1\prefixTree\tree_2$. Obviously $\tree_2\in P_1$ and $\tree_2\in P_2$. 
    Since both $P_1$ and $P_2$ are safety properties, we have $\tree\in P_1$
    and $\tree\in P_2$, i.e., $\tree\in P_1\cap P_2$.
     \item 
    Let $P_1$ and $P_2$ be two liveness properties. We prove that $P_1\cap
    P_2$ is also a liveness property, provided that $P_1\cap
    P_2\neq\bot$. Let $\tree_1\in\alltrees^*$ be an arbitrary finite-depth PT,
    and let $\tree\in P_1\cap P_2$. We construct
    $\tree_2\in\alltrees^\omega$ by appending $\tree$ at all leaves of
    $\tree_1$. Since $P_1$ and $P_2$ are liveness properties, 
    $\tree_2\in P_1$ and $\tree_2\in P_2$.  Therefore
    $\tree_2\in P_1\cap P_2$ as desired.

  \item Suppose $P_1$ is a liveness property, i.e., $\closure(P_1)=\alltrees^\omega$. 
    According to Def.~\ref{def:topological closure} and Lemma~\ref{lem:topological closure}, $\closure(P_1\cup P_2)=\closure(P_1)\cup\closure(P_2)=\alltrees^\omega$, 
    therefore $P_1\cup P_2$ is a liveness property.
  \end{enumerate}
  \vspace*{-0.5cm}
\end{proof}


\begin{reptheorem}{thm:counterexample safety and liveness}\ 
\begin{enumerate}
\item  $P$ is safe iff it only
  has finite counterexamples.
\item  $P$ is live iff
  it has no finite counterexamples.
\end{enumerate}
\end{reptheorem}
\begin{proof}\ 
\begin{enumerate}
\item $P$ is safe iff it only has finite counterexamples.
  \begin{enumerate}
  \item[$\Rightarrow$]We first prove that if $P$ is a safety property, then it 
    has only finite counterexamples.
    By contraposition.
    Assume $P$ is a safety property and there exists an MC $\dtmc$ such that $\dtmc\not\models P$,
    but for all $\tree_1\in\finitePrefix(\tree(\dtmc))$, there exists
    $\tree_2\in P$ such that $\tree_1\prefixTree\tree_2$. 
    This indicates that $\tree(\dtmc)\in P$, since $P$ is a safety
    property, which contradicts the assumption that $\dtmc\not\models
    P$. 
  \item[$\Leftarrow$] Secondly, we prove that if for every 
    $\dtmc\not\models P$, there exists $\tree_1\in\finitePrefix(\tree(\dtmc))$ such that
    $\tree_1\not\prefixTree\tree_2$ for any
    $\tree_2\in P$, then $P$ is a safety property. 
    Again we proceed by contraposition. Assume $P$ is not a safety
    property. According to Def.~\ref{def:safety}, there exists
    $\tree\not\in P$ such that for all $\tree_1\in\finitePrefix(\tree)$, there exists
    $\tree_2\in P$ such that $\tree_1\prefixTree\tree_2$. 
    Let $\dtmc$ be an MC with $\tree(\dtmc)=\tree$, then
    $\dtmc\not\models P$, but there does not exist
    $\tree_1\in\finitePrefix(\tree(\dtmc))$ such that
    $\tree_1\not\prefixTree\tree_2$ for all
    $\tree_2\in P$. Contradiction.
  \end{enumerate}
\item $P$ is live iff it has no finite counterexamples.
  \begin{enumerate}
  \item[$\Rightarrow$] Given a liveness property $P$, we show that for any
    MC $\dtmc$ such that $\dtmc\not\models P$, it has no finite
    counterexamples. Suppose that there exists $\tree_1\in\finitePrefix(\tree(\dtmc))$ such that $\tree_1\not\prefixTree\tree_2$
    for any $\tree_2\in P$, this contradicts with the fact that $P$ is a
    liveness property.
  \item[$\Leftarrow$] Suppose that for any MC $\dtmc\not\models P$ and $\tree_1\in\finitePrefix(\tree(\dtmc))$, 
    there exists $\tree_2\in P$ such that
    $\tree_1\prefixTree\tree_2$. By contraposition, if $P$ is not a liveness property,
    then there exists a $\tree_1\in\alltrees^*$ such that
    $\tree_1\not\prefixTree\tree_2$ for all
    $\tree_2\in P$. Let $\dtmc$ be an MC such that
    $\tree_1\prefixTree\tree(\dtmc)$, then $\dtmc\not\models P$, but the
    finite-depth prefix $\tree_1$ of $\tree(\dtmc)$ cannot be extended to be
    a PT in $P$, contradiction. 
  \end{enumerate}
\end{enumerate}
\vspace*{-0.5cm}
\end{proof}

\begin{replemma}{lem:closure PCTL}
The closure formula of a $\pctlf$-formula equals:
$$
\begin{array}{rcl}
\closure(\Phi^a) & = & \Phi^a \\
\closure(\P{\bowtie q}{\X\Phi^a}) & = & {\P{\bowtie q}{\X\Phi^a}} \mbox{ for } \bowtie \, \in \{\leq, \geq\} \\
\closure(\P{\le q}{\Phi^a_1\U\Phi^a_2}) & = & {\P{\le q}{\Phi^a_1\U\Phi^a_2}} \\
\closure(\P{\ge q}{\Phi^a_1\U\Phi^a_2}) & = & {\P{\geq q}{\Phi^a_1\W\Phi^a_2}} \\
\closure(\P{\ge q}{\Phi^a_1\W\Phi^a_2}) & = & {\P{\ge q}{\Phi^a_1\W\Phi^a_2}} \\
\closure(\P{\le q}{\Phi^a_1\W\Phi^a_2}) & = & {\P{\le q}{\Phi^a_1\U\Phi^a_2}} \\
\closure({\Phi_1\lor\Phi_2}) & = & \closure({\Phi_1})\lor\closure({\Phi_2}).
\end{array}
$$
\end{replemma}
\begin{proof} \
  \begin{enumerate}
  \item $\closure(\Phi^a)=\Phi^a$.\\ 
    This case is trivial, since $\Phi^a$ only concerns atomic propositions.
  \item $\closure(\P{\bowtie q}{\X\Phi^a}) =  {\P{\bowtie q}{\X\Phi^a}} \mbox{ for } \bowtie \, \in \{\leq, \geq\}.$\\
As the proofs of these two cases is similar, we only consider $\bowtie \, = \, \geq$. 
Since $\Phi\subseteq\closure(\Phi)$ by Lemma~\ref{lem:topological closure}, it suffices to show $\closure(\Phi)\subseteq \Phi$.
Let $\Phi=\P{\ge q}{\X\Phi^a}$.
For any $\tree\in\closure(\Phi)$, the probability of reaching $\Phi^a$-states in one step is $\ge q$, therefore $\tree\in\Phi$.
\item $\closure(\P{\ge q}{\Phi^a_1\U\Phi^a_2}) = {\P{\ge q}{\Phi^a_1\W\Phi^a_2}}$.\\
Let $P_\U=\P{\ge q}{\Phi^a_1\U\Phi^a_2}$ and $P_\W=\P{\ge q}{\Phi^a_1\W\Phi^a_2}$.
 We first show $\closure(P_\U)\subseteq P_\W$. Let $\tree\in\closure(P_\U)$.
 Then for any $\tree_1\in\finitePrefix(\tree)$, there exists $\tree_2\in P_\U$
 such that $\tree_1\prefixTree\tree_2$. This means in each $\tree_1$, the probability of 
 reaching $(\Phi^a_1\lor\Phi^a_2)$-states via $\Phi^a_1$-states is $\ge q$.
 This indicates that $\tree\in P_\W$. Since otherwise there exists $\tree_1\in\finitePrefix(\tree)$
 such that in $\tree_1$ the probability of 
 reaching $\neg(\Phi^a_1\lor\Phi^a_2)$-states is $>1-q$, i.e.,
 $\tree_2\not\in P_\U$ for any $\tree_2\in\alltrees^\omega$ with $\tree_1\prefixTree\tree_2$.


Secondly we show that $P_\W\subseteq\closure(P_\U)$. Let $\tree\in P_\W$.
Then in any $\tree_1\in\finitePrefix(\tree)$, the probability of reaching $(\Phi^a_1\lor\Phi^a_2)$-states
via $\Phi^a_1$-states is $\ge q$. We can extend all nodes in $\tree_1$ to a node satisfying $\Phi^a_2$ with
probability 1. Thus the resulting PT is for sure in $P_\U$. According to Definition~\ref{def:topological closure},
$\tree\in\closure(P_\U)$.
\item $\closure(\P{\le q}{\Phi^a_1\U\Phi^a_2})  =  {\P{\le q}{\Phi^a_1\U\Phi^a_2}}$.\\
Case 3) indicates that properties like $\P{\ge q}{\Phi^a_1\W\Phi^a_2}$ are safety properties, 
hence properties of the form $\P{\le q}{\Phi^a_1\U\Phi^a_2}$ are also safety properties due to duality. 
Therefore $\closure(\P{\le q}{\Phi^a_1\U\Phi^a_2})=\P{\le q}{\Phi^a_1\U\Phi^a_2}$.
\item $\closure(\P{\ge q}{\Phi^a_1\W\Phi^a_2})  = {\P{\ge q}{\Phi^a_1\W\Phi^a_2}}$, and\\
$\closure(\P{\le q}{\Phi^a_1\W\Phi^a_2}) = {\P{\le q}{\Phi^a_1\U\Phi^a_2}}$. \\
The proofs of these cases are similar to case 3) and 4).
 \item $\closure({\Phi_1\lor\Phi_2}) = \closure({\Phi_1})\lor\closure({\Phi_2})$.\\
Straightforward from Def.~\ref{def:topological closure} and Lemma~\ref{lem:topological closure}.
  \end{enumerate}
  \vspace*{-0.5cm}
\end{proof}

\begin{reptheorem}{thm:correctness of algorithm}
  Algorithm~\ref{alg:safety and liveness} is correct.  
\end{reptheorem}
\begin{proof}
  Line~\ref{line:distributivity} is justified by the distribution rules of conjunction and disjunction. 
The correctness of Line~\ref{line:safety} and Line~\ref{line:liveness} is guaranteed by Lemma~\ref{lem:closure PCTL} 
and Proposition~\ref{prop:decompose} respectively, 
while the correctness of Line~\ref{line:conjunction} is ensured by Lemma~\ref{lem:disjunction and conjunction}.
\end{proof}

\begin{reptheorem}{thm:safety pctl}
Every $\pctls$-formula is a safety property.
\end{reptheorem}
\begin{proof}
Let $\Phi$ be a $\pctls$-formula.
It suffices to show that $\closure(\Phi)\subseteq\Phi$.
The proof is by structural induction on $\Phi$. 
  \begin{enumerate}
  \item $\Phi = \Phi^a$. This case is trivial.
  \item $\Phi = \Phi_1\land\Phi_2$ or $\Phi = \Phi_1\lor\Phi_2$. Since $\Phi_1$ and $\Phi_2$ are 
    safety properties by induction hypothesis, 
    $\Phi$ is a safety property, i.e., $\Phi = \closure(\Phi)$ by Lemma~\ref{lem:disjunction and conjunction}.
  \item $\Phi = \P{\ge q}{\X\Phi'}$, where $\Phi'$ is a safety property by induction hypothesis.
    If $\Phi$ is not a safety property, there exists 
    $\tree\not\in\Phi$, but for all $\tree_1\in\finitePrefix(\tree)$, there exists 
    $\tree_2\in\Phi$ such that $\tree_1\prefixTree\tree_2$.
    This indicates that there exists $\tree'\not\in\Phi'$ (by omitting the first node of $\tree$), but for any $\tree'_1\in\finitePrefix(\tree')$,
    there exists $\tree'_2\in\Phi'$ such that $\tree'_1\prefixTree\tree'_2$,
    which contradicts that $\Phi'$ is a safety property.
\item $\Phi = \P{\ge q}{\Phi_1\W\Phi_2}$, where $\Phi_1,\Phi_2\in\pctls$. 
    By contraposition. Assume there exists $\tree\in\closure(\Phi)$ such that
    $\tree\not\in\Phi$. Therefore $\tree\in\P{<q}{\Phi_1\W\Phi_2}$, i.e., $\tree\in\P{>1-q}{(\Phi_1\land\neg\Phi_2)\U(\neg\Phi_1\land\neg\Phi_2)}$
    due to duality. Since $\Phi_1$ and $\Phi_2$ are safety properties by induction hypothesis, 
    so is $\Phi_1\lor\Phi_2$ by Lemma~\ref{lem:disjunction and conjunction}.
    Therefore for PTs not in $\Phi_1\lor\Phi_2$, finite counterexamples for $\Phi_1\lor\Phi_2$ (or witness for $\neg\Phi_1\land\neg\Phi_2$) 
    always exist by Theorem~\ref{thm:counterexample safety and liveness}.
    Thus there exists $\tree_1\in\finitePrefix(\tree)$ of which the probability satisfying $(\Phi_1\land\neg\Phi_2)\U(\neg\Phi_1\land\neg\Phi_2)$
    exceeds $1{-}q$. As a result $\tree_2\not\in\Phi$ for any $\tree_2\in\alltrees^\omega$ such that $\tree_1\prefixTree\tree_2$, which implies
    $\tree\not\in\closure(\Phi)$. Contradiction.
\item $\Phi = \P{\leq q}{\Phi_1\U\Phi_2}$, where $\neg\Phi_1,\neg\Phi_2\in\pctls$.
    By induction hypothesis, $\neg\Phi_1$ and $\neg\Phi_2$ are safety properties.
    By contraposition. Assume $\Phi$ is not a safety property. Then there exists
    $\tree\not\in\Phi$, i.e., $\tree\in\P{>q}{\Phi_1\U\Phi_2}$ 
    such that for each $\tree_1\in\finitePrefix(\tree)$,
    there exists $\tree_2\in\Phi$ with $\tree_1\prefixTree\tree_2$.  
    Since $\neg\Phi_2$ is a safety property, for each $\tree'\not\in\neg\Phi_2$, i.e., $\tree'\in\Phi_2$,
    there exists $\tree'_1\in\finitePrefix(\tree')$ such that $\tree'_2\not\in\neg\Phi_2$, i.e., $\tree'_2\in\Phi_2$
    for all $\tree'_2\in\alltrees^\omega$ with $\tree'_1\prefixTree\tree'_2$. In other words,
    for each $\tree'\in\Phi_2$, there always exists a finite-depth witness (or finite counterexamples for $\neg\Phi_2$) such that 
    it is enough to check this witness in order to guarantee $\tree'\in\Phi_2$, similarly for $\Phi_1$.
    Therefore in case $\tree\in\P{>q}{\Phi_1\U\Phi_2}$, there exists $\tree_1\in\finitePrefix(\tree)$,
    where the probability satisfying $\Phi_1\U\Phi_2$ exceeds $q$. This means 
    $\tree_2\not\in\Phi$ for all $\tree_2\in\alltrees^\omega$ if $\tree_1\prefixTree\tree_2$. Contradiction.
  \end{enumerate}
  \vspace*{-0.5cm}
\end{proof}

\begin{reptheorem}{thm:safety pctl complete}
For every safety property $\Phi$ expressible in \emph{PCTL} (no strict bounds),
there exists $\Phi'\in\pctls$ with $\Phi\equiv\Phi'$.
\end{reptheorem}
\begin{proof}
The goal is to prove for any PCTL formula $\Phi$, either $\Phi$ is not a safety property,
or there exists a formula in $\pctls$ equivalent to $\Phi$. The proof is by structural
induction on $\Phi$. Cases for atomic propositions and Boolean connections are simple
and omitted. 
\begin{enumerate}
\item 
Let $\Phi=\P{\ge q}{\X\Phi'}$ with $q>0$ (in case $q=0$, $\Phi\equiv\top\in\pctls$). 
Suppose $\Phi$ is a safety property, but $\Phi'$ is not.
According to Def.~\ref{def:safety}, there exists $\tree'\not\in\Phi'$,
but for all $\tree'_1\in\finitePrefix(\tree')$, there exists $\tree'_2\in\Phi'$
such that $\tree'_1\prefixTree\tree'_2$.
As a result, there also exists $\tree\in\P{\le 0}{\X\Phi'}$,
but for all $\tree_1\in\finitePrefix(\tree)$, there exists $\tree_2\in\P{\ge 1}{\X\Phi'}$
such that $\tree_1\prefixTree\tree_2$.
Since $\tree\in\P{\le 0}{\X\Phi'}$ implies $\tree\not\in\Phi$, and $\tree\in\P{\ge 1}{\X\Phi'}$
implies $\tree\in\Phi$, we conclude that $\Phi$ is not a safety property. Contradiction.
\item Let $\Phi=\P{\le q}{\Phi_1\U\Phi_2}$. Suppose $\Phi$ is a safety property, we prove that both $\neg\Phi_1$ and $\neg\Phi_2$ must be 
safety properties. By contraposition. Suppose at least one of $\neg\Phi_1$ and $\neg\Phi_2$ is not a safety property.
Since $\Phi$ is a safety property. By Def.~\ref{def:safety}, 
for each $\tree\not\in\Phi$, i.e., $\tree\in\P{> q}{\Phi_1\U\Phi_2}$,
there exists $\tree_1\in\finitePrefix(\tree)$ such that $\tree_2\not\in\Phi$ for each $\tree_2\in\alltrees^\omega$ with
$\tree_1\prefixTree\tree_2$. In other words, in the finite-depth prefix $\tree_1$ of $\tree$, the probability
of paths satisfying $\Phi_1\U\Phi_2$ already exceeds $q$. As such finite-depth prefix always exists for each $\tree\not\in\Phi$, this indicates that
properties $\Phi_1$ and $\Phi_2$ always have finite-depth witnesses. Equivalently, properties $\neg\Phi_1$ and $\neg\Phi_2$ always
have finite counterexamples. By Theorem~\ref{thm:counterexample safety and liveness}, 
$\neg\Phi_1$ and $\neg\Phi_2$ are safety properties. Contradiction.
\item Other cases are similar. For instance let $\Phi=\P{\ge q}{\Phi_1\W\Phi_2}$, where either $\Phi_1$ or $\Phi_2$ 
is a not safety property. By duality, $\Phi\equiv\P{\ge 1-q}{(\Phi_1\land\neg\Phi_2)\U(\neg\Phi_1\land\neg\Phi_2)}$.
By induction, $\neg\Phi_1\land\neg\Phi_2$ is not a safety property. The remaining proof is the same as the case above.
\end{enumerate}
\vspace*{-0.5cm}
\end{proof}

\begin{reptheorem}{thm:liveness pctl sound}
  Every $\pctll^{<}$-formula is a liveness property.
\end{reptheorem}
\begin{proof}
  We prove a stronger result by considering also properties with
  strict probabilistic bounds.
  Let $\Phi \in \pctll^{<}$.
  It suffices to prove $\closure(\Phi)=\alltrees^\omega$.
  This is done by structural induction on $\Phi$. 
  \begin{enumerate}
  \item $\Phi = \top$. Trivial.
  \item $\Phi = \P{\unrhd q}{\Diamond\Phi^a}$. Let $\tree\in\alltrees^\omega$.
    For any $\tree_1\in\finitePrefix(\tree)$, we can extend it to a tree $\tree_2$
    by letting $\pi\concat(1,A)\in\tree_2$ for each $\pi\in\tree_1$, where $A\subseteq\AP$ and $A\models\Phi^a$.
    After doing so, the probability of $\tree_2$ satisfying $\Diamond\Phi^a$ is 1, so $\tree_2\in\Phi$.
    Therefore $\tree\in\closure(\Phi)$.
  \item  These cases for conjunction and disjunction can be proved by applying induction hypothesis and Lemma~\ref{lem:disjunction and conjunction}.
    Now let $\Phi = \P{\unrhd q}{\Phi_1\W\Phi_2}$, where either $\Phi_1\in\pctll^{<}$ or $\Phi_2\in\pctll^{<}$.
    In case $\Phi_2$ is a liveness property, for any $\tree_1\in\alltrees^*$, there exists $\tree_2\in\Phi_2$ such that $\tree_1\prefixTree\tree_2$.
    Since $\tree_2\in\Phi_2$ implies $\tree_2\in\Phi$, $\Phi$ is live.
    Assume $\Phi_1\in\pctll^{<}$ and $\Phi_2\not\in\pctll^{<}$. Let $\tree_1\in\alltrees^*$ be an arbitrary finite-depth tree.
    Since $\Phi_1$ is a liveness property by induction hypothesis, $\Phi_1\neq\bot$.
    Let $\tree_2\in\alltrees^\omega$ with $\tree_1\prefixTree\tree_2$ such that all leaves of $\tree_1$
    are appended with a PT in $\Phi_1$. Since $\Phi_1$ is a liveness property, 
    all nodes in $\tree_2$ satisfying $\Phi_1$, namely, $\tree_2\in\P{\ge 1}{\Phi_1\W\bot}$. This implies
    $\tree_2\in\Phi$ and $\Phi$ is a liveness property.
  \item $\Phi = \P{\unrhd q}{\X\Phi'}\in\pctll^{<}$, where $\Phi'\in\pctll^{<}$.
    By induction hypothesis, $\Phi'$ is a liveness property. Thus
    for any $\tree'_1\in\alltrees^*$, there exists $\tree'_2\in\Phi'$ such that
    $\tree'_1\prefixTree\tree'_2$. This implies for any $\tree_1\in\alltrees^*$,
    there exists $\tree_2\in\alltrees^\omega$ such that $\tree_1\prefixTree\tree_2$,
    and the probability of satisfying $\X\Phi'$ is equal to 1. Thus $\tree_2\in\Phi$,
    and $\Phi$ is a liveness property.
  \item $\Phi = \P{\unrhd q}{\Phi_1\U\Phi_2}$ where $\Phi_2\in\pctll^{<}$.
    By induction hypothesis, $\Phi_2$ is a liveness property.
    Thus for any $\tree_1\in\alltrees^*$,
    there exists $\tree_2\in\Phi_2$ and $\tree_1\prefixTree\tree_2$.
    Note $\tree_2\in\Phi_2$ implies $\tree_2\in\Phi$. Therefore $\Phi$ is a liveness property.
 \end{enumerate}
  \vspace*{-0.5cm}
\end{proof}

\begin{reptheorem}{thm:liveness pctl complete}
  For every liveness property $\Phi$ expressible in \emph{PCTL}, there exists $\Phi'\in\pctll^{>}$
  with $\Phi\equiv\Phi'$.
\end{reptheorem}
\begin{proof}
  As in the proof of Theorem~\ref{thm:liveness pctl sound}, we also
  consider properties with strict probabilistic bounds.
  Let $\Phi$ be an arbitrary PCTL property. We prove that either
  $\Phi$ is not a liveness property, or there is $\Phi'\in\pctll^{>}$ such that $\Phi\equiv\Phi'$.
  The proof is by structural induction
  on $\Phi$. Here we only show the proof of a few cases, while all other cases are similar.
  \begin{enumerate}
  \item $\Phi=\P{\unrhd q}{\Phi_1\U\Phi_2}$.
    In case $\Phi_1$ is a liveness property. By induction hypothesis, there exists $\Phi'_1\in\pctll^{>}$
    such that $\Phi_1\equiv\Phi'_1$. Thus $\Phi\equiv\P{\unrhd q}{\Phi'_1\U\Phi_2}\in\pctll^{>}$.
    The case when $\Phi_2$ is a liveness property is similar.
    Now we assume that neither $\Phi_1$ nor $\Phi_2$ is live. Therefore
    $\Phi_1\lor\Phi_2$ is not a liveness property. By Def.~\ref{def:liveness},
    there exists $\tree_1\in\alltrees^*$ such that $\tree_2\not\in\Phi_1\lor\Phi_2$,
    for any $\tree_2\in\alltrees^\omega$ with $\tree_1\prefixTree\tree_2$. Since
    $\tree_2\not\in\Phi_1\lor\Phi_2$ implies $\tree_2\not\in\Phi$, which contradicts 
    with the assumption that $\Phi$ is a liveness property. 
  \item $\Phi=\P{\unrhd q}{\Phi_1\W\Phi_2}$.
    This case can be proved in a similar way as the case above.
  \item $\Phi=\P{\unrhd q}{\X\Phi_1}$.
    In case $\Phi_1$ is a liveness property, there exists $\Phi'_1\in\pctll^{>}$ such that
    $\Phi_1\equiv\Phi'_1$ by induction hypothesis. Thus $\Phi\equiv\P{\unrhd q}{\X\Phi'_1}\in\pctll^{>}$.
    Assume $\Phi_1$ is not live.
    According to Def.~\ref{def:liveness}, there exists $\tree_1\in\alltrees^*$
    such that $\tree_2\not\in\Phi_1$ for any $\tree_2\in\alltrees^\omega$
    with $\tree_1\prefixTree\tree_2$. Let $\tree'_1$ be the PT such that after one step it will
    perform like $\tree_1$ with probability one. Then for any $\tree'_2\in\alltrees^\omega$, we have
    $\tree'_2\in\P{\le 0}{\X\Phi_1}$, provided $\tree'_1\prefixTree\tree'_2$. 
    This implies $\tree'_2\not\in\Phi$ and $\Phi$ is not a liveness property. 
  \end{enumerate}
  \vspace*{-0.5cm}
\end{proof}

\begin{reptheorem}{thm:simulation logical}
 $\precsim~=~\epctls~=~\epctlltwo~\subsetneq~\epctllone$.
\end{reptheorem}
\begin{proof}
  Since we consider MCs without absorbing states (i.e., states without any outgoing transitions), it follows
  from Prop.~18 and Theorem~48 in~\cite{Baier2005CBS} that $\precsim\, \subseteq \, \equiv_{\mathit{PCTL}}$,
  where $s_1 \, \equiv_{\mathit{PCTL}} \, s_2$ iff $s_1\models\Phi$ implies $s_2\models\Phi$ and vice versa,
  for any PCTL property $\Phi$. 
  Therefore $\precsim~\subseteq~\epctls$, $\precsim~\subseteq~\epctllone$, 
  and $\precsim~\subseteq~\epctlltwo$. It suffices to prove the following cases:
  \begin{enumerate}
  \item $\epctls \, \subseteq \, \precsim$. 
    Let $\pctls^{2005}$ denote the PCTL safety fragment of~\cite{Baier2005CBS}, for which
    it is known that $\precsim_{\pctls^{2005}}\, \subseteq \, \precsim$. Since $\pctls^{2005}$ is
    a subset of $\pctls$, $\epctls\, \subseteq \, \precsim$.
  \item $\precsim~\subsetneq~\epctllone$. It suffices to show that there exists $s_1$ and $s_2$ such
    that $s_1~\not\precsim~s_2$, but $s_1~\epctll~s_2$. Let $\AP=\{a\}$ and $s_1$ and $s_2$ be two states
    such that $L(s_1)=\{a\}$ and $L(s_2)=\emptyset$. Moreover $s_1\rightarrow\dirac{s_2}$ and
    $s_2\rightarrow\dirac{s_1}$, where $\dirac{s_i}$ denotes Dirac distributions, i.e., $\dirac{s_i}(s_i)=1$. 
    By Def.~\ref{def:simulation}, $s_1~\not\precsim~s_2$ since $L(s_1)\neq L(s_2)$.
    Now we show that for each $\Phi\in\pctll^{<}$: $s_1,s_2\models\Phi$. We prove by structural induction on $\Phi$.
    \begin{enumerate}
    \item $\Phi\equiv\top$. Trivial.
    \item $\Phi\equiv\P{\ge q}{\Diamond\Phi^a}$. Since $\AP=\{a\}$, either $\Phi^a=a$ or $\Phi^a=\neg a$. In both cases,
      we have $s_1,s_2\models\Phi$.
    \item $\Phi\equiv\Phi_1\land\Phi_2$. By the definition of
      $\pctll^{<}$, $\Phi_1,\Phi_2\in\pctll^{<}$, therefore
      $s_1,s_2\models\Phi_1$ and $s_1,s_2\models\Phi_2$ by induction
      hypothesis, which implies $s_1,s_2\models\Phi$.
    \item $\Phi\equiv\Phi_1\lor\Phi_2$. By the definition of
      $\pctll^{<}$, at least one of $\Phi_1$ and $\Phi_2$ is in
      $\pctll^{<}$. Suppose $\Phi_1\in\pctll^{<}$. Then
      $s_1,s_2\models\Phi_1$ by induction
      hypothesis, which implies $s_1,s_2\models\Phi$. The case for
      $\Phi_2\in\pctll^{<}$ is similar and omitted here.
    \item $\Phi\equiv\P{\ge q}{\X\Phi'}$. By the definition of
      $\pctll^{<}$, $\Phi'\in\pctll^{<}$. Thus $s_1,s_2\models\Phi'$ by
      induction hypothesis, which implies $s_1,s_2\models\P{\ge
        q}{\X\Phi'}$. Therefore $s_1,s_2\models\Phi$.
    \item $\Phi\equiv\P{\ge q}{\Phi_1\U\Phi_2}$. By the definition of
      $\pctll^{<}$, $\Phi_2\in\pctll^{<}$. Thus $s_1,s_2\models\Phi_2$ by
      induction hypothesis, which implies $s_1,s_2\models\Phi$.
    \item $\Phi\equiv\P{\ge q}{\Phi_1\W\Phi_2}$. By the definition of
      $\pctll^{<}$, at least one of $\Phi_1$ and $\Phi_2$ is in
      $\pctll^{<}$. Suppose $\Phi_1\in\pctll^{<}$. Then $s_1,s_2\models\Phi_1$
      by induction hypothesis.  Therefore from $s_1$ and $s_2$,
      property $\Phi_1$ will always be satisfied with probability 1.
      Thus $s_1,s_2\models\P{\ge 1}{\Phi_1\W\Phi_2}$, which implies
      $s_1,s_2\models\Phi$. This case for $\Phi_2\in\pctll^{<}$ can be
      proved similarly and is omitted.
    \end{enumerate}
  \item $\epctlltwo~\subseteq~\precsim$. Let $\dtmc$ be an MC and
    $s_1~\epctlltwo~s_2$.  We show that $s_1~\precsim~s_2$.  Let
    $\Phi_1\in\pctll^{>}$ such that $s_2\not\models\Phi_1$.  We argue that such
    $\Phi_1$ always exists. 
    \begin{itemize}
    \item Let $\Phi^a$ be a literal formula such that $s_2\not\models\P{\ge 1}{\Box\P{\ge 1}{\Diamond\Phi^a}}$.
      Since $\P{\ge 1}{\Box\P{\ge 1}{\Diamond\Phi^a}}\in\pctll^{>}$, we can simply let $\Phi_1=\P{\ge 1}{\Box\P{\ge 1}{\Diamond\Phi^a}}$.
    \item 
     If such $\Phi^a$ does not exist, i.e.,  $s_2\models\P{\ge 1}{\Box\P{\ge 1}{\Diamond\Phi^a}}$
    for any literal formula $\Phi^a$. Then it must be the case that $s_2$ belongs to a bottom
    strongly connected component $\mathit{BSCC}$ (the maximal set of
    states which are reachable from each other and have no
    transitions going to states not in the $\mathit{BSCC}$) such that for each
    $\Phi^a$ there exists a state $s$ in $\mathit{BSCC}$ with
    $s\models\Phi^a$. Therefore there must exist $\Phi^b$ such that $s\not\models\P{\ge
      1}{\Box\Phi^b}$ for all states $s$ in $\mathit{BSCC}$.
    Let $\Phi_1=\P{\ge 1}{\Phi'_1\U\P{\ge
        1}{\Box\Phi^b}}$ for any $\Phi'_1\in\pctll^{>}$ such that
    $\Phi_1\not\equiv\bot$. It follows that $\Phi_1\in\pctll^{>}$ and $s_2\models\Phi_1$.
    \end{itemize}
As a result, $\Phi_1\in\pctll^{>}$ and
    $s_2\not\models\Phi_1$.  Let $\Phi_2$ be an arbitrary PCTL
    property. We have $s_2\models\Phi_1\lor\Phi_2$ iff
   $s_2\models\Phi_2$. 
    Moreover $\Phi_1\in\pctll^{>}$, so $\Phi_1\lor\Phi_2\in\pctll^{>}$.
   If $s_1~\not\precsim~s_2$, there exists $\Phi_2$ such that
    $s_1\models\Phi_2$ but $s_2\not\models\Phi_2$
    by~\cite{Baier2005CBS}.  Therefore $s_1\models\Phi_1\lor\Phi_2$
    and $s_2\not\models\Phi_1\lor\Phi_2$, which contradicts the fact
    that $s_1~\epctlltwo~s_2$.
  \end{enumerate}
  \vspace*{-0.5cm}
\end{proof}

\begin{reptheorem}{thm:strong sound and complete}
Every $\pctlss$-formula is a strong safety property, and for any strong safety property $\Phi$ expressible in \emph{PCTL},
there exists $\Phi'\in\pctlss$ with $\Phi\equiv\Phi'$.
\end{reptheorem}
\begin{proof}
First, for any $\Phi\in\pctlss$, we prove that $\Phi$ satisfies the three conditions
 in Def.~\ref{def:strong safety}.
 \begin{enumerate}
 \item $\Phi$ is a safety property. Since $\pctlss~\subset\pctls$,
   $\Phi$ is a safety property by Theorem~\ref{thm:safety pctl}.
 \item $\Phi$ is closed under stuttering and shrinking. We prove by structural induction on $\Phi$.
   \begin{enumerate}
   \item $\Phi=\Phi^a$. Trivial.
   \item $\Phi=\Phi_1\land\Phi_2$ or $\Phi=\Phi_1\lor\Phi_2$ with $\Phi_1,\Phi_2\in\pctlss$. 
     By induction hypothesis,
     $\Phi_1$ and $\Phi_2$ are closed under stuttering and shrinking. Therefore $\Phi$ is also closed
     under stuttering and shrinking.
   \item $\Phi=\P{\ge q}{\Phi_1\W\Phi_2}$, where $\Phi_1\in\pctlss$ and $\Phi_2\in\mathcal{F}^\Box$ as defined in Def.~\ref{def:strong safety}.
     By induction hypothesis, both $\Phi_1$ and $\Phi_2$ are strong safety properties and closed under stuttering and shrinking.
     For any $\tree\in\Phi$, it is easy to see that all PTs obtained by stuttering or shrinking $\tree$ for finite steps are
     also in $\Phi$. The only non-trivial case is when we delete the first nodes in $\tree$ satisfying $\Phi_2$.
     Since $\Phi_2\in\mathcal{F}^\Box$, $\tree'\in\Phi_2$ implies $\tree'\in\P{\ge 1}{\X\Phi_2}$ for all $\tree'\in\alltrees^\omega$.
     Therefore all suffixes of $\tree'$ are also in $\Phi_2$, as long as $\tree'\in\Phi_2$.
     Even after deleting the first nodes satisfying $\Phi_2$ in $\tree$, we still have $\tree\in\Phi$.
     Thus $\Phi$ is a strong safety property.
   \end{enumerate}
 \end{enumerate}

Secondly, let $\Phi$ be a safety property in PCTL. We show that either $\Phi$ is not a strong safety property, or there exists $\Phi'\in\pctlss$ 
  such that $\Phi\equiv\Phi'$. 
  We proceed by structural induction on $\Phi$.
  \begin{enumerate}
  \item $\Phi=\Phi^a$. Trivial.
  \item $\Phi=\Phi_1\lor\Phi_2$. 
    Let $\stsh(\Phi)$ be the smallest set containing all PTs in $\Phi$
    and closed under stuttering and shrinking.  By
    Def.~\ref{def:strong safety}, $\Phi=\stsh(\Phi)$ iff $\Phi$ is a
    strong safety property.  Assume $\Phi$ is a strong safety
    property (otherwise trivial). Then $\Phi=\stsh(\Phi)$.  Since
    $\Phi_1\lor\Phi_2\subseteq\stsh(\Phi_1)\cup\stsh(\Phi_2)\subseteq\stsh(\Phi_1\lor\Phi_2)=\Phi_1\lor\Phi_2$,
    $\Phi_1\lor\Phi_2=\stsh(\Phi_1)\cup\stsh(\Phi_2)$. 
    Since $\stsh(\Phi_1)$ and $\stsh(\Phi_2)$ are strong safety properties,
    by induction hypothesis
    there exists $\Phi'_1,\Phi'_2\in\pctlss$ such that
    $\Phi'_1\equiv\stsh(\Phi_1)$ and $\Phi'_2\equiv\stsh(\Phi_2)$. In
    other words, $\Phi\equiv\Phi'=\Phi'_1\lor\Phi'_2\in\pctlss$, whenever
    $\Phi$ is a strong safety property. 
    The case $\Phi=\Phi_1\land\Phi_2$ can be proven in a similar way.
 \item $\Phi=\P{\ge q}{\X\Phi_1}$. 
   Let $q>0$ (otherwise $\Phi\equiv\top\in\pctlss$). 
According to Def.~\ref{def:strong safety}, $\Phi\not\in\pctlss$.
    Assume $\Phi$ is a strong safety property.
    Let $\tree\in\alltrees^\omega$ such that $\tree\in\Phi$ but $\tree\not\in\Phi_1$. 
    \begin{itemize}
   \item 
Firstly, assume such $\tree$ exists. Then by repeating the first node of $\tree$, 
    the probability of satisfying $\Phi_1$ in the next step is 0, which means that $\Phi$ is not closed
    under stuttering, thus is not a strong safety property.
\item    Secondly, suppose such $\tree$ does not exist, i.e., $\Phi$ implies $\Phi_1$.
    For any finite-depth PT $\tree_1\in\alltrees^*$,
    we append each leaf of $\tree_1$ with a PT in $\Phi_1$.
    After doing so, each node in $\tree_1$ will go to nodes satisfying $\Phi_1$ with probability one in one step,
    i.e., the resulting PT satisfies $\Phi$, which implies that $\Phi_1$ is satisfied.
    By Def.~\ref{def:liveness}, $\Phi_1$ is a liveness property. Since $\Phi_1$ is also a safety property,
    it is only possible when $\Phi_1\equiv\top$, which implies $\Phi\equiv\top\in\pctlss$.
   \end{itemize}
   \item $\Phi=\P{\ge q}{\Phi_1\W\Phi_2}$.\label{item:ss complete W}
    We distinguish two cases:
    \begin{itemize}
    \item Suppose either $\Phi_1$ or $\Phi_2$ is not a strong safety property.
     Hence $\Phi_1\lor\Phi_2$ is not a strong safety property either.
      Then there exists $\tree\in\Phi$ which implies $\tree\in\Phi_1\lor\Phi_2$, but 
      there is $\tree'\not\in(\Phi_1\lor\Phi_2)$ obtained by stuttering or shrinking $\tree$
      for finite steps. Since $\tree'\not\in(\Phi_1\lor\Phi_2)$ implies $\tree'\not\in\Phi$,
      $\Phi$ is not a strong safety property.
    \item Suppose $\Phi_1,\Phi_2$ are strongly safe. Suppose there exists
      no $\Phi'_2\in\mathcal{F}^\Box$ such that $\Phi\equiv\P{\ge
        q}{\Phi_1\W\Phi'_2}$ (otherwise trivial). Since $\Phi_2\not\in\mathcal{F}^\Box$,
      there exists $\tree_2\in\Phi_2$, but $\tree_2\not\in\P{\ge
        1}{\X\Phi_2}$.  Otherwise $\Phi_2\equiv\P{\ge
        1}{\Box\Phi_2}\in\mathcal{F}^\Box$.  Let
      $\tree\in\P{=q}{\Phi_1\W\Phi_2}$ such that the probability of
      $\tree$ reaching some suffixes $\tree'\in\Phi_2$ is exactly
      equal to $q$, clearly $\tree\in\Phi$.  Let $\tree_2\in\Phi_2$
      and $\tree_2\not\in\P{\ge 1}{\X\Phi_2}$, i.e.,
      $\tree_2\in\P{<1}{\X\Phi_2}$.  The maximal probability of
      $\tree_2$ satisfying $\Phi_2$ in the next step is equal to
      $q'<1$.  By removing the initial node of $\tree_2$ from $\tree$
      (this is allowed, since the initial node of $\tree_2$ is not the
      initial node of $\tree$), the probability of satisfying
      $\Phi_1\W\Phi_2$ in the resulting PT $\tree'$ is equal to
      $q\times q'< q$. Therefore $\tree'\not\in\Phi$, and $\Phi$ is
      not a strong safety property.
    \end{itemize}
  \item $\Phi=\P{\le q}{\Phi_1\U\Phi_2}$. Using duality laws,
    $\Phi\equiv\P{\ge
      1-q}{(\Phi_1\land\neg\Phi_2)\W(\neg\Phi_1\lor\neg\Phi_2)}$,
    which is a strong safety property iff there exists
    $\Phi'_1\in\pctlss$ and $\Phi'_2\in\mathcal{F}^\Box$ such that
    $\Phi'_1\equiv(\Phi_1\land\neg\Phi_2)$ and
    $\Phi'_2\equiv(\neg\Phi_1\land\neg\Phi_2)$ according to the proof
    of case~\ref{item:ss complete W}).
  \end{enumerate}
  \vspace*{-0.5cm}
\end{proof}

\begin{replemma}{lem:absolute is liveness}
Every absolute liveness property is live.
\end{replemma}
\begin{proof}
  By contraposition. Let $P$ be an absolute liveness property, but
  $P$ is not a liveness property. By Def.~\ref{def:liveness},
  there exists $\tree_1\in\alltrees^*$ such that $\tree_2\not\in P$ for all $\tree_2\in\alltrees^\omega$
  with $\tree_1\prefixTree\tree_2$. 
  Let $\tree'\in\alltrees^\omega$ such that $\tree_1\in\finitePrefix(\tree')$ and $\tree$ being a suffix of $\tree'$ for some $\tree\in P$.
  By construction, $\tree'\not\in P$, which contradicts that $P$
  is an absolute liveness property.
\end{proof}

\begin{replemma}{lem:stable and absolute}
  For any $P\neq\alltrees^\omega$, $P$ is a stable property iff $\comp{P}$ is
  an absolute liveness property.
\end{replemma}
\begin{proof}
  The proof is similar to the proof of~\cite{Sistla1994SLF}[Lemma 2.1], which is rephrased here for completeness.
  We prove directly by using Def.~\ref{def:stable} and \ref{def:absolute liveness}.
  First, let $P\neq\alltrees^\omega$ be a stable property. We prove that $\comp{P}$ is
  an absolute liveness property. By contraposition. Assume $\comp{P}\neq\emptyset$ is not an absolute
  liveness property, i.e., there is $\tree\in\comp{P}$ such that $\tree'\not\in\comp{P}$
  with $\tree$ is a suffix of $\tree'$. In other words, $\tree'\in P$ and 
  $\tree\not\in P$, where $\tree$ is a suffix of $\tree'$, which indicates
  that $P$ is not stable.

  Secondly, let $P$ be an absolute liveness property. We show that $\comp{P}$
  is a stable property. By contraposition. Assume $\comp{P}$ is not a
  stable property.
  Thus there is $\tree\in\comp{P}$ such that $\tree'\not\in\comp{P}$ with $\tree'$
  a suffix of $\tree$. In other words, there exists $\tree'\in P$
  such that $\tree\not\in P$. Since $\tree'$ is a suffix of $\tree$, this contradicts
  the assumption that $P$ is an absolute liveness property.
\end{proof}

\begin{reptheorem}{thm:absolute sound and complete}
Every $\pctlal$-formula is an absolute liveness property, and for any
absolute liveness property $\Phi$ expressible in \emph{PCTL}, there exists
$\Phi'\in\pctlal$ with $\Phi\equiv\Phi'$. 
\end{reptheorem}
\begin{proof}
  First, let $\Phi\in\pctlal$. We prove by structural induction on $\Phi$ that
  $\Phi$ is an absolute liveness property.
  \begin{enumerate}
  \item $\Phi=\top$. Trivial.
  \item $\Phi=\Phi_1\land\Phi_2$ where $\Phi_1,\Phi_2\in\pctlal$.  Let
    $\tree\in\Phi$, which indicates $\tree\in\Phi_1$ and
    $\tree\in\Phi_2$.  By induction hypothesis, $\Phi_1$ and $\Phi_2$
    are absolute liveness properties.  Thus, for each
    $\tree'\in\alltrees^\omega$, we have $\tree'\in\Phi_1$ and
    $\tree'\in\Phi_2$, provided $\tree$ is a suffix of $\tree'$. Thus
    $\tree'\in\Phi_1\land\Phi_2$ as desired. The case
    $\Phi=\Phi_1\lor\Phi_2$ is similar.
  \item $\Phi=\P{>0}{\X\Phi'}$, where $\Phi'\in\pctlal$. By induction
    hypothesis, $\Phi'$ is an absolute liveness property. Let
    $\tree\in\Phi$. The probability of reaching trees $\tree'_1$ in
    one step is positive, where $\tree'_1\in\Phi'$ is a suffix of
    $\tree$.  Let $\tree'\in\alltrees^\omega$ such that $\tree$ is a
    suffix of $\tree'$.  Then in one step $\tree'$ will reach $\tree'_2$
    such that $\tree'_1$ is a suffix of $\tree'_2$.  Since $\Phi'$ is
    an absolute liveness property, $\tree'_2\in\Phi'$.  Therefore
    $\tree'\in\Phi$.
  \item $\Phi=\P{>0}{\Phi_1\U\Phi_2}$, where $\Phi_2\in\pctlal$ or
    $\Phi_1\in\pctlal$ and $\Phi_2\land\neg\Phi_1\equiv\bot$.  First
    assume $\Phi_2\in\pctlal$.  By induction hypothesis, $\Phi_2$ is
    an absolute liveness property.  Let $\tree\in\Phi$, then the
    probability of reaching some $\tree'_2\in\Phi_2$ is positive,
    where $\tree'_2$ is a suffix of $\tree$.  For any $\tree'$ such
    that $\tree$ is a suffix of $\tree'$, $\tree'_2$ is also a suffix
    of $\tree'$.  Since $\Phi_2$ is an absolute liveness property,
    $\tree'\in\Phi_2$, which implies $\tree'\in\Phi$.  Secondly,
    assume $\Phi_2\not\in\pctlal$, $\Phi_1\in\pctlal$, and
    $\Phi_2\land\neg\Phi_1\equiv\bot$.  By induction hypothesis,
    $\Phi_1$ is an absolute liveness property.  Let $\tree\in\Phi$. We
    have either $\tree\in\Phi_2$ or $\tree\in\Phi_1$.  Since
    $\Phi_2\land\neg\Phi_1\equiv\bot$, $\tree\in\Phi_2$ implies
    $\tree\in\Phi_1$.  We only need to consider the case when
    $\tree\in\Phi_1$.  Since $\Phi_1$ is an absolute liveness
    property, in case $\tree\in\Phi_1$, we have $\tree'\in\Phi_1$ for
    any $\tree'$, provided $\tree$ is a suffix of $\tree'$. With the
    assumption that $\tree\in\Phi$, we have $\tree'\in\Phi$.
  \item $\Phi=\P{>0}{\Phi_1\W\Phi_2}$, where $\Phi_1,\Phi_2\in\pctlal$, or $\Phi_1\in\pctlal$ and $\Phi_2\land\neg\Phi_1\equiv\bot$. 
   The proof for the case is similar as the above case and omitted here.
  \end{enumerate}

  Secondly, we prove that for any PCTL formula $\Phi$, either
  $\Phi$ is not an absolute liveness property, or 
  there exists $\Phi'\in\pctlal$ such that $\Phi\equiv\Phi'$.
  We proceed by structural induction on $\Phi$. 
  \begin{enumerate}
  \item $\Phi\equiv\Phi^a$. Trivial.
  \item $\Phi\equiv\Phi_1\land\Phi_2$.
    Let $\mathit{Suf}(P)$ be the set such that $\tree\in\mathit{Suf}(P)$ iff $\tree\in P$ and
    $\tree'\in P$ for all $\tree'\in\alltrees^\omega$ with $\tree$ being a suffix of $\tree'$.
    Then by Def.~\ref{def:absolute liveness}, $P$ is an absolute liveness property iff $P\equiv\mathit{Suf}(P)$.
    In case $\Phi$ is an absolute liveness property, we show that 
    $\mathit{Suf}(\Phi)=\mathit{Suf}(\Phi_1\land\Phi_2)=\mathit{Suf}(\Phi_1)\land\mathit{Suf}(\Phi_2)$.
    By the definition of $\mathit{Suf}$, $\mathit{Suf}(\Phi_1\land\Phi_2)\subseteq\mathit{Suf}(\Phi_1)\cap\mathit{Suf}(\Phi_2)$.
    We show the other direction. Let $\tree\in\mathit{Suf}(\Phi_1)\cap\mathit{Suf}(\Phi_2)$.
    Then $\tree\in\Phi_1$ and $\tree\in\Phi_2$, and for any $\tree'$ such that $\tree$ is a suffix of $\tree'$,
    we have $\tree'\in\Phi_1$ and $\tree'\in\Phi_2$, i.e., $\tree'\in\Phi_1\land\Phi_2$.
    By the definition of $\mathit{Suf}$, $\tree\in\mathit{Suf}(\Phi_1\land\Phi_2)$.
    Hence in case $\Phi$ is an absolute liveness property, $\Phi_1\land\Phi_2\equiv\mathit{Suf}(\Phi_1)\cap\mathit{Suf}(\Phi_2)$.
    Since $\mathit{Suf}(\Phi_1)$ and $\mathit{Suf}(\Phi_2)$ are absolutely live,
    there exists $\Phi'_1,\Phi'_2\in\pctlal$ such that $\Phi'_1\equiv\mathit{Suf}(\Phi_1)$ and
    $\Phi'_2\equiv\mathit{Suf}(\Phi_2)$ by induction hypothesis. 
    Thus $\Phi$ can be represented as an equivalent formula $\Phi'=\Phi'_1\land\Phi'_2\in\pctlal$.
    The case for $\Phi\equiv\Phi_1\lor\Phi_2$ can be proved in a similar way and is omitted here.
  \item $\Phi\equiv\P{\unrhd q}{\X\Phi_1}$ with $\unrhd\in\{>,\ge\}$. We distinguish:
    \begin{enumerate}
    \item $q>0$. Suppose $\Phi_1\not\equiv\top$, otherwise $\Phi\equiv\top\in\pctlal$. 
      Let $\tree\in\Phi$. Construct a PT $\tree'$
      such that $\tree$ is a suffix of $\tree'$ and the probability of going to some $\tree''\in\neg\Phi_1$ in
      the next step is arbitrarily large such that $\tree'\not\in\Phi$. Thus $\Phi$ is not an absolute liveness property.
    \item $\unrhd=>$, $q=0$, and $\Phi_1$ is not an absolute liveness property. 
      Let $\tree\in\Phi$, i.e., the probability of $\tree$ satisfying $\Phi_1$ in the next step
      is positive. Since $\Phi_1$ is not an absolute liveness property, there exists $\tree'$
      such that the probability satisfying $\Phi_1$ in the next step is 0 with $\tree$ being 
      a suffix of $\tree'$. Thus $\tree'\not\in\Phi$ and $\Phi$ is not an absolute liveness property.
    \end{enumerate}
  \item $\Phi\equiv\P{\unrhd q}{\Phi_1\U\Phi_2}$.  We distinguish:
    \begin{enumerate}
    \item $q>0$ and $\Phi_2$ is an absolute liveness property.
      By induction hypothesis, there exists $\Phi'_2\in\pctlal$ such that $\Phi_2\equiv\Phi'_2$.
      In this case $\Phi\equiv\Phi'_2\in\pctlal$. $\Phi'_2\subseteq\Phi$ is straightforward. We
      show the other direction. For any $\tree\in\Phi$, the probability of reaching
      nodes satisfying $\Phi'_2$ is positive, i.e., there exists $\tree'\in\Phi'_2$
      such that $\tree'$ is a suffix of $\tree$. Since $\Phi'_2$ is an absolute liveness property,
      $\tree\in\Phi'_2$.
    \item $q>0$, $\Phi_1$ is absolutely live, $\Phi_2$ is not absolutely live, and $\neg\Phi_1\land\Phi_2\equiv\bot$.
      Assume $\P{=0}{\Phi_1\U\Phi_2}\not\equiv\bot$, otherwise $\Phi\equiv\top\in\pctlal$.
      Since $\Phi_2$ is not an absolute liveness property,
      for $\tree\in\Phi_2$, there exists $\tree'\in\alltrees^\omega$ such that $\tree$ is a suffix of $\tree'$
      and in $\tree'$ the probability of satisfying
      $\Phi_1\U\Phi_2$ is arbitrarily small (by making the probability of the transitions to some $\tree''\in\P{=0}{\Phi_1\U\Phi_2}$ great enough)
      such that $\tree'\not\in\Phi$. Since $\tree\in\Phi_2$ implies
      $\tree\in\Phi$, $\Phi$ is not an absolute liveness property.
    \item Assume neither $\Phi_1$ nor $\Phi_2$ is absolutely live.
      Let $\tree\in\Phi_2$ which implies $\tree\in\Phi$.
      Since $\Phi_2$ is not an absolute liveness property. There exists $\tree'\in\neg\Phi_2$
      such that $\tree$ is a suffix of $\tree'$.
      In case $\tree'\in\neg\Phi_1$, then $\tree'\not\in\Phi$, which indicates that $\Phi$
      is not an absolute liveness property.
      In case such $\tree'$ does not exist, which indicates that once $\tree\in\Phi_2$, then
      $\tree''\in\Phi_1\lor\Phi_2$ for all $\tree''\in\alltrees^\omega$ such that
      $\tree$ is a suffix of $\tree''$. 
      If $\unrhd=>$ and $q=0$, then $\Phi\equiv\Phi'=\P{>0}{\Diamond\Phi_2}\in\pctlal$
      by Def.~\ref{def:absolute liveness}.
      The proof when $q>0$ is similar as the above case.
    \item $\Phi_1$ is absolutely live, $\Phi_2$ is not absolutely live, and $\neg\Phi_1\land\Phi_2\not\equiv\bot$.
      Let $\tree\in\neg\Phi_1\land\Phi_2$, which implies $\tree\in\Phi$.
      By induction hypothesis, $\Phi_1$ is an absolute liveness, while $\Phi_2$ is not.
      There exists $\tree'\in\neg\Phi_2$ such that $\tree$ is a suffix of $\tree'$.
      By Lemma~\ref{lem:stable and absolute}, $\neg\Phi_1$ is a stable property.
      Thus $\tree'\in\neg\Phi_1$ by Def.~\ref{def:stable}. Since $\tree'\in\neg\Phi_1\land\neg\Phi_2$
      implies $\tree'\not\in\Phi$. We conclude that $\Phi$ is not an absolute liveness property.
    \end{enumerate}
  \item $\Phi\equiv\P{\unrhd q}{\Phi_1\W\Phi_2}$. We distinguish:
    \begin{enumerate}
    \item $q>0$ and $\Phi_1,\Phi_2$ are absolutely live. Let $\tree\in\Phi$ such that
      $\tree\in\P{\unrhd q}{\Box(\Phi_1\land\neg\Phi_2)}$. If such $\tree$
      does not exist, i.e., for any $\tree\in\Phi$, $\tree\in\P{>0}{\Diamond\Phi_2}$.
      Since $\Phi_2$ is absolutely live, $\Phi\equiv\Phi_2$ (any PT in $\P{>0}{\Diamond\Phi_2}$ must have a
      suffix in $\Phi_2$).
      Moreover, there exists $\Phi'_2\in\pctlal$ such that $\Phi_2\equiv\Phi'_2$ by induction hypothesis.
      Hence $\Phi\equiv\Phi'_2\in\pctlal$.
      Note $\neg\Phi_1\land\neg\Phi_2\not\equiv\bot$, otherwise $\Phi\equiv\top\in\pctlal$.
      For a PT $\tree\in\P{\unrhd q}{\Box(\Phi_1\land\neg\Phi_2)}$, there always exists $\tree'$ such
      that $\tree$ is a suffix of $\tree'$ and the probability of $\tree'$ satisfying $\Box(\Phi_1\land\neg\Phi_2)$
      is arbitrarily small (by making the probability from $\tree'$ to $\tree$ arbitrarily small, while all other transitions lead to a
      PT in $\neg\Phi_1\land\neg\Phi_2$) such that $\tree'\not\in\Phi$. Thus $\Phi$ is not an absolute liveness property.
    \item $q>0$, $\Phi_1$ is absolutely live, $\Phi_2$ is not absolutely live, and $\neg\Phi_1\land\Phi_2\equiv\bot$.
      It can be proved in a similar way as for $\U$. Let $\tree\in\Phi$, we
      can construct $\tree'$ with $\tree$ being a suffix of $\tree'$ and the probability of $\tree'$ satisfying
      $\Phi_1\W\Phi_2$ is arbitrarily small.
    \item $\Phi_1$ is not absolutely live.
      We note that $\Phi\equiv\P{\unrhd q}{\Phi_1\U(\P{\ge 1}{\Box\Phi_1}\lor\Phi_2)}$.
      Since $\Phi_1$ is not absolutely live, $\Phi_1\not\equiv\top$.
      Directly from Def.~\ref{def:absolute liveness}, $\P{\ge 1}{\Box\Phi_1}$ is not absolutely live either. 
      According to the above proof for $\U$ modality,
      for $\Phi$ being absolutely live, it must be the case
      that $\Phi_2$ is an absolute liveness property and $\P{\ge 1}{\Box\Phi_1}\lor\Phi_2\equiv\Phi_2$,
      which implies $\Phi\equiv\P{\unrhd q}{\Phi_1\U\Phi_2}\equiv\P{>0}{\Phi_1\U\Phi_2}\in\pctlal$.
    \item $\Phi_1$ is absolutely live, $\Phi_2$ is not absolutely live, and $\neg\Phi_1\land\Phi_2\not\equiv\bot$.
      By Def.~\ref{def:absolute liveness}, we can show
      that $\P{\ge 1}{\Box\Phi_1}$ is not an absolute liveness property, hence neither is
      $\P{\ge 1}{\Box\Phi_1}\lor\Phi_2$.
      Again by making use of the fact that $\Phi\equiv\P{\unrhd q}{\Phi_1\U(\P{\ge 1}{\Box\Phi_1}\lor\Phi_2)}$,
      for $\Phi$ to be an absolute liveness property, it must be the case that $\neg\Phi_1\land(\P{\ge 1}{\Box\Phi_1}\lor\Phi_2)\equiv\bot$.
      However, $\neg\Phi_1\land(\P{\ge 1}{\Box\Phi_1}\lor\Phi_2)\equiv(\neg\Phi_1\land\P{\ge 1}{\Box\Phi_1})\lor(\neg\Phi_1\land\Phi_2)\not\equiv\bot$.
      Thus $\Phi$ is not an absolute liveness property.
    \end{enumerate}
  \item All other cases are either simple or can be proved using duality laws.
  \end{enumerate}
\vspace*{-0.5cm}
\end{proof}

\begin{reptheorem}{thm:absolute characterisation}
\emph{PCTL}-formula $\Phi$ is an absolute liveness property iff
$\Phi\not\equiv\bot$ and $\Phi\equiv\P{>0}{\Diamond\Phi}.$
\end{reptheorem}
\begin{proof} 
  Let $\Phi\not\equiv\bot$ be a PCTL formula.
  \begin{enumerate}
  \item $\Phi\equiv\P{>0}{\Diamond\Phi}$. We prove that $\Phi$ is an absolute liveness property.
    By contraposition. Suppose $\Phi$ is not an absolute liveness property. Then
    there exists $\tree\in\Phi$ and $\tree'\not\in\Phi$ such that 
    $\tree$ is a suffix of $\tree'$. Due to that $\tree$ is a suffix of $\tree'$,
    the probability of reaching $\tree$ is positive, i.e., $\tree\in\P{>0}{\Diamond\Phi}$.
    Contradiction.
  \item $\Phi$ is an absolute liveness property. We prove that $\Phi\equiv\P{>0}{\Diamond\Phi}$.
    Obviously, $\Phi\subseteq\P{>0}{\Diamond\Phi}$, thus we only show that $\P{>0}{\Diamond\Phi}\subseteq\Phi$.
    By contraposition. Suppose there is $\tree\in\alltrees^\omega$ such that 
    $\tree\in\P{>0}{\Diamond\Phi}$ and $\tree\not\in\Phi$.
    Since $\tree\in\P{>0}{\Diamond\Phi}$, the probability of $\tree$ reaching its suffixes in $\Phi$
    is positive. In other words, there exists $\tree'\in\Phi$, where $\tree'$ is a suffix
    of $\tree$. This contradicts that $\Phi$ is an absolute liveness property. 
  \end{enumerate}
\vspace*{-0.5cm}
\end{proof}


\end{document}